\newcommand{\tuple}[1]{\left<{#1}\right>}
\newcommand{\Dom}{\mbox{\sl dom\/}}
\newcommand{\Codom}{\mbox{\sl codom\/}}
\DeclareMathOperator{\maxpref}{\sqcap}
\DeclareMathOperator{\bigmaxpref}{\bigsqcap}
\newtheorem{theorem}{Theorem}
\newtheorem{definition}{Definition}
\newtheorem{lemma}{Lemma}
\newtheorem{remark}{Remark}
\newtheorem{corollary}{Corollary}
\begin{document}
\title{Characterisation of (Sub)sequential Rational Functions over a General Class Monoids}
\author{Stefan Gerdjikov$^{1,2}$ \\~\\
\small
$^1$Faculty of Mathematics and Informatics, \\
\small
Sofia University, Sofia, Bulgaria\\
\small
{\tt stefangerdzhikov@fmi.uni-sofia.bg}\\
\small
$^2$
Institute of Information and Communication Technologies, \\
\small
Bulgarian Academy of Sciences, Sofia, Bulgaria\\
}
\date{}

\maketitle

\begin{abstract}
In this technical report we describe a general class of monoids
for which (sub)sequential rational can be characterised in terms
of a congruence relation in the flavour of Myhill-Nerode relation.
The class of monoids that we consider can be described in terms
of natural algebraic axioms, contains the free monoids, groups,
the tropical monoid, and is closed under Cartesian.  
\end{abstract}

\section{Inroduction}\label{sec:intro}
The problem to efficiently represent functions $f:\Sigma^*\rightarrow M$ that map
words to some monoid arises in different areas of Natural Language Processing: Speech Recognition, Machine Translation, 
Parsing, Similarity Search. \emph{Finite state transducers} are a natural extension of (classical) finite state automata that provide an efficient 
representation a special class of such functions called \emph{rational functions},~\cite{Eil74,Berstel79,Kempe01,Mo95,Mo97,Mo00,Sakarovitch09}. 

As it is common for most kinds of computational devices, the notion of \emph{determinism} plays an important role since it usually implies more efficient computation. In terms of automata and transducers, the determinism means strongly linear on-line algorithm for parsing the input.
This motivates the interest in deterministic finite state transducers that are called \emph{(sub)sequential transducers}~\cite{Eil74,Sakarovitch09}. 

For (classical) finite state automata it is well known that deterministic automata are equivalent to non-deterministic automata. However, this is not the case for transducers and (sub)sequential transducers~\cite{Ch77,BCPS03,Mo00}. Actually, the latter are capable to represent only a proper class of rational functions called \emph{(sub)sequential rational functions}.

In this paper we consider the characterisation problem of (sub)sequential rational functions. There are two main streams of characterisations known in the literature. The first one characterises the class of (sub)sequential rational functions as rational functions with some additional property, \emph{bounded variation}. This is the kind of characterisation of (sub)sequential rational functions in~\cite{Ch77,Mo00,DRT16,LATA2017}. The second type of characterisation is in terms of congruence relations. This approach bears the flavour of the Myhill-Nerode Theorem,~\cite{HMU01}, for classical finite state automata. Specifically, it departs from an arbitrary function $f:\Sigma^*\rightarrow M$ and defines a congruence relation $\equiv_f$ in terms of the function, but with no regard to its representation. Then the characterisation states more or less: \emph{$\equiv_f$ is of finite index if and only if $f$ is (sub)sequential rational function.} 

Essentially, the first kind of characterisation relates one kind of syntactic representation with another whereas the second kind of characterisation
relates the semantics of the function with its syntactic representation. As such, the first kind of characterisation is useful for practical purposes, whereas the second provides a better theoretical understanding of this class of functions.

In this paper we are considering the second kind of characterisation. Characterisations of the (sub)sequential rational functions $f:\Sigma^*\rightarrow {\cal M}$ in terms of a congruence relation $\equiv_f$ have been studied for different special cases of the monoid ${\cal M}$. The classical result,~\cite{Sakarovitch09}, captures the case where ${\cal M}$ is a free monoid. The characterisation in~\cite{Mo00} deals with the case where ${\cal M}=\tuple{\mathbb{R}^+_0,+,0}$. In~\cite{SouzaMasterThesis} is considered the case of \emph{gcd monoids}. This class captures a wide class of monoids, e.g. groups, free monoids but misses some simple cases like $\tuple{\mathbb{Q}^+_0,+,0}$. In~\cite{Arxive2017} we have shown similar characterisation for yet another class of monoids, \emph{sequentiable structures}. 

In this paper we show a characterisation of (sub)sequential rational functions in terms of congruence relation for functions $f:\Sigma^*\rightarrow {\cal 
M}$ for the class of monoids ${\cal M}$ introduced in~\cite{LATA2018}. This class of monoids is described by five simple algebraic axioms. The only 
additional notion that we need is the relation $a\le_M b$ which is an abbreviation of $b=ac$ for some monoid element $c$. Thus, $\le_M$ is a pre-order on ${\cal M}$. In this framework for each set of monoidal elements, we can consider the set of \emph{lower bounds}, the set of \emph{infimums}, the set of \emph{upper bounds}, and the set of \emph{supremums}, respectively. 

The class of monoids introduced in~\cite{LATA2018} are those that satisfy the following five properties: (i) left cancellation;(ii) right 
cancellation; (iii) any two elements $a,b\in M$ admit an infimum in terms of $\le_M$; (iv) any two elements $a,b\in M$ that have an upper bound in $
{\cal M}$ admit a supremum w.r.t. $\le_M$; (v) if $b\le_M c$ and $b\le_M ac$, then $b\le_M ab$. Groups, free monoids, sequentiable structures, 
tropical monoids ($\tuple{\mathbb{Q}^+_0,+,0}$, $\tuple{\mathbb{R}^+_0,+,0}$, etc.) all satisfy these axioms. 

The gcd monoids can be viewed as monoids satisfying properties (i) and (ii) and additionally every subset of $M$ has a non-empty set of infimums.
In section~\ref{sec:gcd_mge} we shall prove that the gcd monoids also satisfy (iii) and (iv). However, in general they should not respect (v). 

In~\ref{LATA2018} we showed that properties (i)--(v) provide constructive way to minimise any (sub)sequential transducer. We also 
proved that property (v) is essential in order that every regular language over ${\cal M}$ has an infimum. The characterisation that we provide in the current paper is for monoids with properties (i),(ii),(iv), and (v) with additional axiom that we call WLP-axiom. This axiom is satisfied in all the above named monoids, including the gcd monoids. 

The rest of the paper is organised as follows. In Section~\ref{sec:monoids} we recall the basic notions on monoids and formally introduce the relation $\le_M$ along with the terms \emph{infimum, supremum, etc.} that are used throughout the paper. We also recall the definitions of the monoids mentioned above except the gcd monoids that are defined in Setcion~\ref{sec:gcd_mge}. In Section~\ref{sec:automata} we provide the preliminaries on automata and transducers. In Section~\ref{sec:axioms} we formally introduce the properties (i)--(v) and the WLP-axiom. We define the mge monoids and prove some interesting and useful properties about them. We further prove that the classes of monoids considered above are all mge monoids. In Section~\ref{sec:MNR} we define the congruence $\equiv_f$, state and prove our characterisation result. In Section~\ref{Example} we discuss the necessity of the WLP-axiom. We prove that a non-uniform version of this axiom is
necessary for the characterisation we strive at under natural assumptions for the monoid. In Section~\ref{sec:gcd_mge} we recall the definition of the gcd monoids and compare them against the mge monoids. We conclude in Section~\ref{sec:conclusion}.

\section{Monoids}\label{sec:monoids}
We open this section with the definition of a monoid,~\cite{Eil74,Sakarovitch09}. 
In Subsection~\ref{subsec:relations} we consider some useful relations on monoids that play an important role throughout the paper.
In Subsection~\ref{subsec:classes} we provide some examples for monoids. A reader familiar with the basic notions may prefer to look
only at Subsection~\ref{subsec:relations}.
\begin{definition}
A monoid is a structure ${\cal M}=\tuple{M,\circ,e}$ where:
\begin{enumerate}
\item $M$ is a set, the support of ${\cal M}$,
\item $\circ:M\times M\rightarrow M$ is an associative operation, i.e.:
\begin{equation*}
\forall a,b,c\in M( a\circ (b\circ c) = (a\circ b)\circ c)
\end{equation*}
\item $e\in M$ is an unit element w.r.t. $\circ$, i.e.:
\begin{equation*}
\forall a\in M(a=e\circ a=a\circ e).
\end{equation*} 
\end{enumerate}
\end{definition}
Given a monoid ${\cal M}$, we can canonically lift the product in ${\cal M}$ to products of subsets of $M$.
\begin{definition}\label{regular_operations}
Let ${\cal M}$ be a monoid, for subsets $A,B\subseteq M$ we define:
\begin{equation*}
AB = A\circ B =\{a\circ b\,|\, a\in A,b\in B\}.
\end{equation*}
For a natural number $n\in \mathbb{N}$ we define:
\begin{equation*}
A^n =\begin{cases} \{e\} \text{ if } n=0\\
A\circ A^{n-1} \text{ if } n>0.
\end{cases}
\end{equation*}
Finally, an iteration of a subset $A\subset M$ is:
\begin{equation*}
A^* = \bigcup_{n=0}^\infty A^n.
\end{equation*}
\end{definition}
\begin{remark}
For better readability, for an element $m\in M$ and a set $A\subseteq M$ we shall write:
\begin{equation*}
mA=m\circ A \text{ and } Am=A\circ m
\end{equation*}
as abbreviation for:
\begin{equation*}
\{m\}\circ A\text{ and } A\circ \{m\},
\end{equation*} 
respectively.
\end{remark}

\subsection{Relations on Monoids}\label{subsec:relations}
\begin{definition}\label{def:preorder}
For a monoid ${\cal M}$ and elements $a,b\in {\cal M}$, we say that $a$ is less than or equal to $b$ and write $a\le_M b$ if and only if there is an element $c$ with $ac=b$.
\end{definition}
The relation $\le_M$ is clearly transitive and reflexive. Thus, it defines a pre-order on $M$. Therefore we can decompose $\le_M$ into an equivalence relation and a partial order in a canonical way: 
\begin{definition}
Let ${\cal M}$ be a monoid. The relation $\sim_M$ is defined as:
\begin{equation*}
a\sim_M b \iff a\le_M b \text{ and } b\le_M a.
\end{equation*}

\end{definition}
\begin{lemma}\label{lemma:factor}
Let $\cal M$ be a monoid. Then $\sim_M$ is an equivalence relation. 
\end{lemma}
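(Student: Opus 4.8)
The plan is to show the three defining properties of an equivalence relation for $\sim_M$, deriving each from the fact that $\le_M$ is a pre-order (which the paper has just observed). First I would establish \emph{reflexivity}: for any $a\in M$ we have $a\le_M a$ because $ae=a$ witnesses it (equivalently, reflexivity of $\le_M$), hence $a\le_M a$ and $a\le_M a$, which by Definition of $\sim_M$ gives $a\sim_M b$ with $b=a$, i.e. $a\sim_M a$. Next, \emph{symmetry} is essentially free from the shape of the definition: if $a\sim_M b$ then by definition $a\le_M b$ and $b\le_M a$; reading the conjunction in the other order gives $b\le_M a$ and $a\le_M b$, which is exactly $b\sim_M a$.

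The only step with any content is \emph{transitivity}. Suppose $a\sim_M b$ and $b\sim_M c$. Unfolding, we have $a\le_M b$, $b\le_M a$, $b\le_M c$, and $c\le_M b$. From $a\le_M b$ and $b\le_M c$, transitivity of the pre-order $\le_M$ gives $a\le_M c$; from $c\le_M b$ and $b\le_M a$, transitivity of $\le_M$ gives $c\le_M a$. Having both $a\le_M c$ and $c\le_M a$, the definition of $\sim_M$ yields $a\sim_M c$, as required. If one wants to be fully explicit rather than invoking transitivity of $\le_M$ as a black box, one spells out the witnesses: $a\le_M b$ means $b=ax$ for some $x$, $b\le_M c$ means $c=by$ for some $y$, so $c=axy$ and $xy$ witnesses $a\le_M c$ by associativity; symmetrically for the other direction.

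I do not anticipate any real obstacle here — the lemma is a routine ``decompose a pre-order into an equivalence plus a partial order'' fact, and the proof is three short paragraphs, one per property, with transitivity of $\le_M$ doing all the work. The only thing to be slightly careful about is to invoke exactly what has been justified in the excerpt (namely that $\le_M$ is reflexive and transitive) and not to accidentally appeal to cancellation or any of the axioms (i)--(v), which are not needed and have not yet been introduced at this point in the paper.
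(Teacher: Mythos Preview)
Your proposal is correct and matches the paper's own argument, which simply reads ``Immediate.'' You have spelled out in full the routine verification that the symmetric core of a pre-order is an equivalence relation; nothing more is needed or used.
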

\begin{proof}
Immediate. 
\end{proof}

\begin{lemma}\label{lemma:order}
Given a monoid ${\cal M}$, the relation on its factor ${\cal M}/_{\sim_M}$:
\begin{equation*}
[ a ]_{\sim M} \le [b]_{\sim M} \iff a\le_M b
\end{equation*}
is well-defined and is a partial order on ${\cal M}/_{\sim_M}$.
\end{lemma}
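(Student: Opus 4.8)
The plan is to verify the two assertions of Lemma~\ref{lemma:order} in turn: first that the relation $\le$ on ${\cal M}/_{\sim_M}$ is well-defined, and then that it is a partial order. For well-definedness, suppose $a\sim_M a'$ and $b\sim_M b'$; I must show $a\le_M b \iff a'\le_M b'$. Using that $a'\le_M a$ (from $a\sim_M a'$), $a\le_M b$ (hypothesis), and $b\le_M b'$ (from $b\sim_M b'$), transitivity of $\le_M$ yields $a'\le_M b'$. The converse direction is symmetric, exchanging the roles of the primed and unprimed elements. This establishes that the relation on representatives descends to a well-defined relation on equivalence classes.

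Next I would check the three partial-order axioms. Reflexivity is immediate: $a\le_M a$ since $ae=a$, so $[a]_{\sim_M}\le[a]_{\sim_M}$. Transitivity is inherited directly from the transitivity of $\le_M$ already noted after Definition~\ref{def:preorder}: if $[a]_{\sim_M}\le[b]_{\sim_M}$ and $[b]_{\sim_M}\le[c]_{\sim_M}$, then $a\le_M b$ and $b\le_M c$, hence $a\le_M c$, i.e.\ $[a]_{\sim_M}\le[c]_{\sim_M}$. Antisymmetry is the point where the construction of $\sim_M$ is used essentially: if $[a]_{\sim_M}\le[b]_{\sim_M}$ and $[b]_{\sim_M}\le[a]_{\sim_M}$, then $a\le_M b$ and $b\le_M a$, which is exactly the definition of $a\sim_M b$, so $[a]_{\sim_M}=[b]_{\sim_M}$.

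None of these steps presents a genuine obstacle; the lemma is essentially the standard fact that any preorder factors through its induced equivalence relation to yield a partial order, and the monoid structure plays no role beyond supplying the preorder $\le_M$. If anything, the only thing worth stating carefully is the well-definedness argument, since that is the one place where one must actually use that $\sim_M$ is finer than (indeed, the symmetric part of) $\le_M$ rather than an arbitrary equivalence. I would therefore write out the well-definedness chain of inequalities explicitly and then dispatch reflexivity, transitivity, and antisymmetry in one or two lines each, citing Lemma~\ref{lemma:factor} for the fact that $\sim_M$ is an equivalence relation so that the quotient ${\cal M}/_{\sim_M}$ makes sense in the first place.
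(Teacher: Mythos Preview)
Your proposal is correct and follows essentially the same route as the paper: well-definedness via transitivity of $\le_M$ combined with $a'\le_M a$ and $b\le_M b'$, then reflexivity and transitivity inherited from the preorder, and antisymmetry directly from the definition of $\sim_M$. The paper's proof is slightly terser (it shows only one direction of the well-definedness implication and notes the other is symmetric implicitly), but there is no substantive difference.
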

\begin{proof}
If $a'\sim_M a$, $a\le_M b$, and $b\sim_M b'$, then $a'\le_M a$ and $b\le_M b'$. By the transitivity of $\le_M$, we get that
$a'\le_M b'$. Therefore the relation $\le$ is well-defined on ${\cal M}/_{\sim_M}$. The same reasoning shows that $\le$ is transitive,
and it is obvious that it is reflexive. To prove that $\le$ is antisymmetric consider $[a]\le [b]$ and $[b]\le [a]$. Hence, $a\le_M b$ and
$b\le_M a$. Therefore, $a\sim_M b$ and consequently $[a]=[b]$. 
\end{proof}

\begin{definition}\label{infimum}
Let $S\subseteq {\cal M}$. A lower bound for $S$ is any element $a\in {\cal M}$ such that $a\le_M s$ for all $s\in S$.
Similarly, an upper bound for $S$ is any element $b\in {\cal M}$ such that $s\le_M b$ for all $s\in S$. We denote with
$low(S)$ and $up(S)$ the set of lower and upper bounds for $S$, respectively, i.e.:
\begin{eqnarray*}
low(S) &=& \{a\in M \,|\, \forall s\in S(a\le_M s)\}\\
up(S) & = & \{b\in M \,|\, \forall s\in S(s\le_M b)\}. 
\end{eqnarray*}
We define the sets of infimums and supremums for $S$ as:
\begin{eqnarray*}
\inf S &=&\{ l \in low(S) \,|\, \forall m\in low(S) (m\le_M l)\} \\
\sup S &=& \{u\in up(S) \,|\, \forall m\in up(S) (u\le_M m)\}.
\end{eqnarray*}

An infimum for $S$ is any element $i\in \inf S$. Similarly, a supremum for $S$ is any element $m\in \sup S$.
\end{definition}
\begin{remark}\label{rem:supremum}
Since $s\le_M b$ is equivalent to $b\in sM$ we can express $up(S)$ as:
\begin{equation*}
up(S) = \bigcap_{s\in S} sM.
\end{equation*}
In particular, if $u\in up(S)$, then $uM \subseteq \bigcap_{s\in S} sM$. Furthermore, in the special case where $u\in \sup(S)$,
we have that for any $m\in \bigcap_{s\in S} sM$ it is the case that $u\le_M m$, i.e. $m\in uM$. With this remark it is easy to see that $u\in \sup(S)$ is equivalent to:
\begin{equation*}
\bigcap_{s\in S} sM =uM.
\end{equation*}
\end{remark}
\begin{remark}\label{rem:infimum}
In general a set $S$, may have or may have no infimums. Even, if $S$ has an infimum $i\in {\cal M}$, it should not be unique. Actually, in this case the set of infimums of $S$ is $\inf S =[i]_{\sim_M}$.
\end{remark}
\begin{definition}
Given a monoid ${\cal M}=\tuple{M,\circ,e}$ we say that an element $c\in M$ is invertible iff there exists $c'\in M$ with:
\begin{equation*}
cc'=e=c'c.
\end{equation*}
\end{definition}

So far we have been concerned with the multiplication on the right hand side. In certain situations we will need to consider also multiplications on the left. However they will concern only invertible elements. For these purposes we give the following definition:
\begin{definition}\label{left_congruence}
Let ${\cal M}$ be a monoid. For an integer number $n\ge 1$ we define the relation $\equiv^{(n)}_{M}\subseteq M^n\times M^n$ as:
\begin{equation*}
{\bf a}\equiv^{(n)}_M {\bf b} \iff \exists u\in M( \forall i \le n(u{\bf a}_i={\bf b}_i) \text{ and } u \text{ is invertible}).
\end{equation*}
\end{definition}
\begin{lemma}\label{lemma:left_congruence}
The relation $\equiv^{(n)}_M$ is an equivalence relation.
\end{lemma}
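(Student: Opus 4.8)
The plan is to verify the three defining properties of an equivalence relation for $\equiv^{(n)}_M$ directly from Definition~\ref{left_congruence}, exploiting the fact that the set of invertible elements of a monoid forms a group under $\circ$. For \textbf{reflexivity}, given ${\bf a}\in M^n$ we take $u=e$; since $e$ is the unit it is invertible (with $e'=e$) and $e{\bf a}_i={\bf a}_i$ for all $i\le n$, so ${\bf a}\equiv^{(n)}_M{\bf a}$.

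For \textbf{symmetry}, suppose ${\bf a}\equiv^{(n)}_M{\bf b}$, witnessed by an invertible $u$ with $u{\bf a}_i={\bf b}_i$ for all $i\le n$. Let $u'$ be an inverse of $u$, so $u'u=e=uu'$. Then $u'$ is itself invertible (its inverse being $u$), and multiplying $u{\bf a}_i={\bf b}_i$ on the left by $u'$ gives $u'{\bf b}_i=u'u{\bf a}_i=e{\bf a}_i={\bf a}_i$ for every $i\le n$. Hence $u'$ witnesses ${\bf b}\equiv^{(n)}_M{\bf a}$.

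For \textbf{transitivity}, assume ${\bf a}\equiv^{(n)}_M{\bf b}$ via an invertible $u$ and ${\bf b}\equiv^{(n)}_M{\bf c}$ via an invertible $v$, so $u{\bf a}_i={\bf b}_i$ and $v{\bf b}_i={\bf c}_i$ for all $i\le n$. Consider $w=vu$. Using associativity, $w{\bf a}_i=(vu){\bf a}_i=v(u{\bf a}_i)=v{\bf b}_i={\bf c}_i$ for each $i\le n$. It remains to check that $w$ is invertible: if $u'$ and $v'$ are inverses of $u$ and $v$ respectively, then $u'v'$ is an inverse of $w$, since $(vu)(u'v')=v(uu')v'=vev'=vv'=e$ and symmetrically $(u'v')(vu)=e$. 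Thus $w$ witnesses ${\bf a}\equiv^{(n)}_M{\bf c}$, completing the proof.

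None of these steps presents a genuine obstacle; the only point requiring a little care is keeping track of the inverses and using associativity correctly when forming the composite witness $vu$ in the transitivity argument — in particular noting that the inverse of a product reverses the order of factors. The argument is essentially the observation that the invertible elements act on $M^n$ coordinatewise and that being in the same orbit of a group action is an equivalence relation.
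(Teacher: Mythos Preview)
Your proof is correct and follows essentially the same route as the paper's: verifying reflexivity via $u=e$, symmetry via the inverse $u'$, and transitivity by composing the two invertible witnesses and checking that the product is again invertible. The added remark that this is just the orbit relation of a group action is a nice conceptual gloss, but the underlying argument is identical.
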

\begin{proof}
Since $e$ is invertible, $\equiv^{(n)}_M$ is reflexive. If $a=ub$ for some invertible element $u\in  M$,
then its inverse, $u'\in M$, is also invertible and further $u' a = b$. Hence $\equiv_M$ is symmetric.
Finally, if $a=ub$ and $b=vc$ for some invertible elements $u,v\in M$, then clearly $a=uvc$.
Finally, if $u'$ and $v'$ are the inverse of $u$ and $v$, then $uvv'u'=e$ and $v'u'uv=e$, showing that $uv$ is invertible.
Hence $\equiv^{(n)}_M$ is also transitive.
\end{proof}

\subsection{Classes of Monoids}\label{subsec:classes}
\begin{definition}\label{def:free}
Given a set $\Sigma$, the free monoid generated by $\Sigma$ is defined as $\tuple{\Sigma^*,\circ,\varepsilon}$ where:
\begin{enumerate}
\item $\Sigma^*=\{\tuple{a_1,a_2,\dots,a_n}\,|\, n\in \mathbb{N}, a_i\in \Sigma\}$ is the set of all finite sequences of elements in $\Sigma$.
\item $\tuple{a_1,a_2,\dots,a_n} \circ\tuple{b_1,b_2,\dots,b_m}=\tuple{a_1,a_2,\dots,a_n,b_1,b_2,\dots,b_m}$, i.e. $\circ$ is the concatenation of sequences,
\item $\varepsilon=\tuple{}$ is the empty sequence.
\end{enumerate}
\end{definition}

\begin{definition}\label{def:tropical}
We refer to the structure $\tuple{\mathbb{R}_0^+,+,0}$ as the tropical monoid.
\end{definition}
\begin{remark}
It is apparent that the tropical monoid is a monoid.
\end{remark}

\begin{definition}\label{def:group}
A group is a monoid, ${\cal{M}}$, all whose elements are invertible.
\end{definition}
Sequentiable structures were defined in~\cite{LATA2017}.
\begin{definition}\label{def:seq_str}
A sequentiable structure is a monoid, ${\cal{M}}=\tuple{M,\circ,e}$, s.t.:
\begin{enumerate}
\item for every $a,b\in M$, $|\inf(\{a,b\})|=1$,
\item there is homomorphism $\|.\|:{\cal {M}}\rightarrow\mathbb{R}^+_0$ with trivial kernel, i.e.:
\begin{eqnarray*}
\forall a,b\in M(\|a\circ b\| = \|a\| + \|b\|) \text{ and } \|a\|=0 \iff a=e.
\end{eqnarray*}
\item for all $a,b,c\in {\cal M}$ if $a\le_M bc$ and $\|a\|\le \|b\|$, then $a\le_M b$.
\end{enumerate} 
\end{definition}
\begin{remark}
It is easy to see that the free monoids represent a subclass of sequentiable structures. The tropical monoid, $\tuple{\mathbb{R}_0^+,+,0}$ is also an instance of a sequentiable structure.
\end{remark}

\begin{definition}\label{def:cartesian_product}
Let ${\cal M}_i=\tuple{M_i,\circ_i,e_i}$ for $i=1,2$ be monoids. The Cartesian Product of ${\cal{M}}_1$ and ${\cal{M}}_2$ is ${\cal{M}}={\cal{M}}_1\times{\cal{M}}_2$ where ${\cal{M}}=\tuple{M_1\times M_2,\circ,\tuple{e_1,e_2}}$ s.t.:
\begin{equation*}
\tuple{a_1,a_2}\circ\tuple{b_1,b_2}=\tuple{a_1\circ_1b_1,a_2\circ_2b_2}.
\end{equation*}
\end{definition}
\begin{remark}
It is easy to see that Cartesian Product of monoids is a monoid.
\end{remark}

\section{Finite State Transducers and Automata}\label{sec:automata}
This section is preliminary on automata and transducers,~\cite{Eil74,Sakarovitch09}. A reader familiar with the basic notions on automata
can skip this section.
\subsection{Automata}
\begin{definition}\label{def:fsa}
Given a monoid ${\cal M}$, a finite state monoidal automaton over ${\cal M}$
is ${\cal{A}}=\tuple{{\cal{M}},Q,I,F,\Delta,\iota,\Psi}$ where:
\begin{enumerate}
\item $Q$ is a finite set (of states),
\item $I\subseteq Q$ is a set (of initial states),
\item $F\subseteq Q$ is a set (of final states),
\item $\Delta\subseteq Q\times{\cal{M}}\times{Q}$ is a finite set (of transitions),
\item $\iota:I \rightarrow {\cal M}$ is an initial output function,
\item $\Psi: F\rightarrow {\cal M}$ is a final output function.
\end{enumerate}
\end{definition}

\begin{definition}\label{def:path}
Given a monoidal automaton, ${\cal{A}}=\tuple{{\cal{M}},Q,I,F,\Delta,\iota,\Psi}$, a non-trivial path is a 
non-empty sequence of transitions, $\pi=\tuple{p_0,m_1,p_1}\dots\tuple{p_{n-1},m_n,p_n}$ with $\tuple{p_i,m_{i+1},p_{i+1}}\in\Delta$. The source of $\pi$ is $\sigma(\pi)=p_0$, the target of $\pi$ is $\tau(\pi)=p_n$, the length of $\pi$ is $|\pi|=n$, and the label of $\pi$ is:
\begin{equation*}
\ell(\pi) =m_1\circ m_2\circ\dots\circ m_n.
\end{equation*}

For each state $p\in Q$ we have a void path $\pi_p=(p)$ with no transitions and $\sigma(\pi_p)=\tau(\pi_p)=p$, $|\pi_p|=0$, and $\ell(\pi_p)=e$.

A path in ${\cal A}$ is either a non-trivial or a void path. A path $\pi$ is called successful if $\sigma(\pi)\in I$ and $\tau(\pi)\in F$.
\end{definition}

\begin{definition}\label{def:gen_transitions}
For a monoidal automaton, ${\cal{A}}=\tuple{{\cal{M}},Q,I,F,\Delta,\iota,\Psi}$, we define the generalised transitions of length $n$ as:
\begin{equation*}
\Delta^n = \{\tuple{\sigma(\pi),\ell(\pi),\tau(\pi)}\,|\, \pi\text{ is a path in }{\cal{A}} \text{ with }|\pi|=n\}
\end{equation*}
A generalised transition in ${\cal{A}}$ is any element of $\Delta^*=\bigcup_{n=0}^{\infty}\Delta^n$.
We also define $\Delta^{<n}=\bigcup_{k=0}^{n-1} \Delta^k$ and $\Delta^{\le n}=\bigcup_{k=0}^{n} \Delta^k$.
\end{definition}

\begin{definition}
A language recognised by a finite state monoidal automaton, ${\cal{A}}=\tuple{{\cal{M}},Q,I,F,\Delta,\iota,\Psi}$, is:
\begin{equation*}
{\cal{L}}({\cal{A}}) =\{ \iota(s)\circ m \Psi(f) \in {\cal{M}}\,|\,\exists s\in I\exists f\in F(\tuple{s,m,f}\in\Delta^*)\}
\end{equation*}
Two monoidal automata (over the same monoid) are said to be equivalent, if they recognise the same language.
\end{definition}

\begin{definition}
Given a monoidal finite state automaton, ${\cal{A}}=\tuple{{\cal{M}},Q,I,F,\Delta,\iota,\Psi}$, and a state $q\in Q$ the language of $q$ w.r.t. ${\cal{A}}$ is:
\begin{eqnarray*}
{\cal{L}}_{\cal{A}}(q) &=&{\cal{L}}({\cal{A}}_q), \text{ where }
{\cal{A}}_q=\tuple{{\cal{M}},Q,\{q\},F,\Delta,e_q,\Psi}.
\end{eqnarray*}
Here $e_q:\{q\}\rightarrow {\cal M}$ is the function defined by $e_q(q)=e$.
\end{definition}

\begin{definition}
Given a monoidal finite state automaton, ${\cal{A}}=\tuple{{\cal{M}},Q,I,F,\Delta,\iota,\Psi}$, we say that a state $q\in Q$ is accessible (co-accessible) iff there is a path $\pi$ in ${\cal{A}}$ with $\sigma(\pi)\in I$ and $\tau(\pi)=q$ ($\sigma(\pi)=q$ and $\tau(\pi)\in F$, respectively). The automaton ${\cal{A}}$ is called trimmed if all states $q\in Q$ are both accessible and co-accessible.
\end{definition}
\begin{remark}\label{remark:trimmed}
For every monoidal finite state automaton there is an equivalent trimmed monoidal finite state automaton.
\end{remark}

\begin{definition}\label{def:reg_languages}
Given a monoid ${\cal M}$ the set of regular languages over ${\cal M}$ is the inclusion-wise least set $Reg({\cal{M}})$ such that:
\begin{enumerate}
\item $\emptyset\in Reg({\cal{M}})$,
\item if $m\in M$, then $\{m\}\in Reg({\cal{M}})$,
\item if $L_1,L_2\in Reg({\cal{M}})$, then:
$$L_1\cup L_2\in Reg({\cal{M}}),\quad L_1\circ L_2\in Reg({\cal{M}}), \text{ and } L_1^*\in Reg({\cal{M}}).$$
\end{enumerate}
\end{definition}

The Kleene's Theorem states that:
\begin{theorem}
For any monoid ${\cal{M}}$ a set $S\subseteq {\cal{M}}$ is regular language over ${\cal{M}}$ if and only if there is a finite state monoidal automaton ${\cal{A}}$ over ${\cal{M}}$ with ${\cal{L}}({\cal{A}})=S$.
\end{theorem}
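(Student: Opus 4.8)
The plan is to run the classical two-direction proof of Kleene's theorem, the point being that over an arbitrary monoid it needs essentially no modification: the regular operations of Definition~\ref{regular_operations} make sense for any ${\cal M}$, a path label (Definition~\ref{def:path}) is just a finite $\circ$-product in ${\cal M}$, and any automaton or regular expression names only finitely many elements of $M$, so associativity of $\circ$ is all that is actually used. I would organise the argument into the two inclusions.

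For ``$S\in Reg(\mathcal{M})\Rightarrow S=\mathcal{L}(\mathcal{A})$ for some $\mathcal{A}$'' I would induct on the derivation of $S$ according to Definition~\ref{def:reg_languages}. The base cases are immediate: $\emptyset=\mathcal{L}(\mathcal{A})$ for $\mathcal{A}$ with $I=\emptyset$, and $\{m\}=\mathcal{L}(\mathcal{A})$ for the one-state automaton $Q=I=F=\{q\}$, $\Delta=\emptyset$, $\iota(q)=m$, $\Psi(q)=e$. For the inductive step I would take trimmed (Remark~\ref{remark:trimmed}) automata $\mathcal{A}_1,\mathcal{A}_2$ for $L_1,L_2$ on disjoint state sets and apply Thompson-style constructions: disjoint union for $L_1\cup L_2$; for $L_1\circ L_2$ keep $I_1$ initial (outputs $\iota_1$), keep $F_2$ final (outputs $\Psi_2$), remove $F_1$ from the final set, and for each $f\in F_1$, $s\in I_2$ add a bridging transition $\tuple{f,\Psi_1(f)\circ\iota_2(s),s}$; for $L_1^*$ add a fresh state $q_0$ that is the sole initial state and is also final with $\iota(q_0)=\Psi(q_0)=e$ (this realises the factor $e\in L_1^*$), with transitions $\tuple{q_0,\iota_1(s),s}$ and $\tuple{f,\Psi_1(f)\circ\iota_1(s),s}$ for $f\in F_1$, $s\in I_1$, keeping $F_1$ final with outputs $\Psi_1$. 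In each case I would verify that successful paths in the construction are in bijective correspondence with the ways of realising the target language.

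For the converse I would use the McNaughton--Yamada recursion. Fixing $\mathcal{A}=\tuple{\mathcal{M},Q,I,F,\Delta,\iota,\Psi}$ and an enumeration $Q=\{q_1,\dots,q_n\}$, I would let $R^{(k)}_{ij}$ be the set of labels of paths from $q_i$ to $q_j$ all of whose interior states lie in $\{q_1,\dots,q_k\}$, and prove by induction on $k$ that each $R^{(k)}_{ij}\in Reg(\mathcal{M})$. For $k=0$ it is the finite set $\{m\mid\tuple{q_i,m,q_j}\in\Delta\}$, with $e$ adjoined when $i=j$, hence a finite union of singletons. For the step I would split a path counted in $R^{(k)}_{ij}$ at its first and last interior visit to $q_k$ (if any) to obtain the identity $R^{(k)}_{ij}=R^{(k-1)}_{ij}\cup R^{(k-1)}_{ik}\circ(R^{(k-1)}_{kk})^*\circ R^{(k-1)}_{kj}$, whence regularity follows by the induction hypothesis and closure of $Reg(\mathcal{M})$ under $\cup,\circ,{}^*$. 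Finally $\mathcal{L}(\mathcal{A})=\bigcup_{q_i\in I,\,q_j\in F}\iota(q_i)\,R^{(n)}_{ij}\,\Psi(q_j)$ is a finite union of products of regular languages, hence in $Reg(\mathcal{M})$.

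I do not expect a substantial obstacle, since the theorem is essentially structural. The two spots calling for genuine (if routine) care are: threading the initial and final output functions through the bridging transitions in the first direction so that $\mathcal{L}$ of the construction comes out exactly right; and, in the second direction, accounting for the void paths (the adjoined $e$ when $i=j$) and for the empty middle factor (the $n=0$ term of the star) so that the recursion for $R^{(k)}_{ij}$ is an equality rather than merely ``$\supseteq$''. Neither uses anything about ${\cal M}$, which is precisely what makes Kleene's theorem hold over every monoid.
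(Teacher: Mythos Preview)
Your proposal is correct. The paper does not actually prove this theorem: it is stated as ``The Kleene's Theorem states that'' and then quoted without proof, treated as a classical result from the cited references. Your two-direction argument (Thompson-style closure constructions for one inclusion, McNaughton--Yamada dynamic programming for the other) is the standard proof and goes through verbatim over an arbitrary monoid, exactly as you observe. One minor wording point: in the converse, the clean way to justify the recursion $R^{(k)}_{ij}=R^{(k-1)}_{ij}\cup R^{(k-1)}_{ik}(R^{(k-1)}_{kk})^{*}R^{(k-1)}_{kj}$ is to split at \emph{every} interior visit to $q_k$, not just the first and last; your phrasing would leave the middle segment in $R^{(k)}_{kk}$ rather than $(R^{(k-1)}_{kk})^{*}$, though of course these coincide after one more splitting step.
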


\subsection{Transducers}
\begin{definition}
An alphabet is any finite set $\Sigma$. A word is any element of the free monoid $\Sigma^*$. For a set $S\subseteq \Sigma^*$ and a word $\alpha\in \Sigma^*$ we define:
\begin{equation*}
\alpha^{-1} S = \{\beta\,|\, \alpha\beta\in S\}.
\end{equation*}
\end{definition}
\begin{definition}
Given an alphabet $\Sigma$ and a monoid ${\cal {M}}$, a $\Sigma-{\cal{M}}$-transducer is any finite state automaton
${\cal{T}}=\tuple{\Sigma^*\times{\cal{M}},Q,I,F,\Delta,\iota,\Psi}$.
\end{definition}
\begin{remark}
It is easy to see that for every $\Sigma-{\cal{M}}$-transducer ${\cal{T}}=\tuple{\Sigma^*\times{\cal{M}},Q,I,F,\Delta,\iota,\Psi}$ there is an equivalent $\Sigma-{\cal{M}}$-transducer ${\cal{T}}'=\tuple{\Sigma^*\times{\cal{M}},Q',I',F',\Delta',\iota',\Psi'}$ with the following three additional properties:
\begin{enumerate}
\item $\Codom(\iota')\subseteq \{\varepsilon\}\times {\cal M}$,
\item $\Codom(\Psi')\subseteq \{\varepsilon\}\times {\cal M}$,
\item $\Delta'\subseteq Q\times ((\Sigma\cup\{\varepsilon\})\times{\cal{M}})\times{Q}$.
\end{enumerate}
This kind of transducers are known as one-letter transducers.
In the sequel we will be considering only one-letter transducers. To stress their main properties we shall write:
$${\cal{T}}'=\tuple{\Sigma\times{\cal{M}},Q',I',F',\Delta',\iota',\Psi'}$$
omitting the star in $\Sigma^*$.
\end{remark}

\begin{definition}
Let ${\cal{T}}=\tuple{\Sigma\times{\cal{M}},Q,I,F,\Delta,\iota,\Psi}$ be a $\Sigma-{\cal{M}}$-transducer. For a path $\pi$ in ${\cal{T}}$ we define $input(\pi)$ and $out(\pi)$ such that $\ell(\pi)=\tuple{input(\pi),out(\pi)}\in\Sigma^*\times{\cal{M}}$.
\end{definition}

\begin{definition}
A $\Sigma-{\cal{M}}$-transducer, ${\cal{T}}$, is called functional if the language ${\cal{L}}({\cal{T}})$ is a graph of a (partial) function
${\cal{O}}_{\cal{T}}:\Sigma^*\rightarrow{\cal{M}}$. 

Similarly, if ${\cal L}_{\cal{T}}(q)$ is a graph of a partial function from $\Sigma^*\rightarrow {\cal M}$ we shall denote it with ${\cal{O}}^{(q)}_{\cal{T}}:\Sigma^*\rightarrow {\cal M}$.
\end{definition}

\begin{definition}
A $\Sigma-{\cal{M}}$-transducer, ${\cal{T}}$, is called onward if for every state $q$ of ${\cal{T}}$ it holds that $e\in \inf {\cal{R}}_{\cal{T}}(q)$.
\end{definition}

\subsection{(Sub)sequential Transducers}
\begin{definition}\label{def:subseq_trans}
Given an alphabet $\Sigma$ and a monoid ${\cal{M}}$ a subsequential transducer is
a one-letter transducer ${\cal{T}}=\tuple{\Sigma,{\cal{M}},Q,I,F,\Delta,\iota,\Psi}$ where:
\begin{enumerate}
\item $|I|=1$,
\item $\Delta\subseteq Q\times (\Sigma\times {\cal M})\times Q$ is a graph of a function $Q\times \Sigma\rightarrow M\times Q$.
\end{enumerate}
To emphasise the components of a (sub)sequential transducer we shall use the notation:
\begin{equation*}
{\cal{T}}=\tuple{\Sigma,{\cal{M}},Q,s,F,\delta,\lambda,\iota,\Psi}
\end{equation*}
where $I=\{s\}$ and $\delta:Q\times \Sigma\rightarrow Q$ and $\lambda:Q\times \Sigma\rightarrow M$ are partial functions with:
\begin{equation*}
\Delta = \{\tuple{p,\tuple{a,\lambda(p,a)},\delta(p,a)} \,|\, p\in Q, a\in \Sigma\}
\end{equation*}
We call $\delta$ transition function and $\lambda$ -- output function. We all tacitly identify $\iota:\{s\}\rightarrow M$ with $\iota(s)$.
\end{definition}
\begin{definition}
A subsequential transducer is called complete if its transition function is total.
\end{definition}
\begin{remark}
For a (sub)sequential transducer the transition relation $\Delta$ is a graph of a function mapping $Q\times\Sigma\rightarrow M\times Q$. This means that given a state $q$ and a word $\alpha\in \Sigma^*$ there is at most one path $\pi$ with origin $p$ and $input(\pi)=\alpha$. This implies that $\Delta^*$ is also a graph of a function mapping $Q\times\Sigma^*\rightarrow M\times Q$.
The functions $\delta^*:Q\times\Sigma^*\rightarrow Q$ and $\lambda^*:Q\times \Sigma^*\rightarrow M$ that we formally define below represent its projections, respectively.
\end{remark}
\begin{definition}\label{def:subseq_trans}
Given a subsequential transducer
${\cal{T}}=\tuple{\Sigma,{\cal{M}},Q,I,F,\Delta,\iota,\Psi}$ we define $\delta^*:Q\times \Sigma^*\rightarrow Q$ and $\lambda^*:Q\times \Sigma^*\rightarrow{\cal{M}}$ as follows:
\begin{eqnarray*}
\delta^*(p,\alpha) = \begin{cases} q \text{ if } \exists m\in M(\tuple{p,\tuple{\alpha,m},q}\in\Delta^*) \\
\text{not defined, else.}
\end{cases}\\
\lambda^*(q,\alpha) = \begin{cases}m \text{ if } \exists q\in Q(\tuple{p,\tuple{\alpha,m},q}\in\Delta^*) \\
\text{not defined, else.}
\end{cases}
\end{eqnarray*}
\end{definition}
\begin{remark}
Note that every (sub)sequential transducer is functional. However, the converse should not be true.
\end{remark}

\begin{definition}
For a subsequential transducer ${\cal{T}}=\tuple{\Sigma,{\cal{M}},Q,s,F,\delta,\lambda,\iota,\Psi}$, we define a (partial) function ${\cal{O}}_{\cal{T}}:\Sigma^*\rightarrow {\cal{M}}$ is the function represented by the subsequential transducer ${\cal T}$.
\end{definition}

\section{Axioms}\label{sec:axioms}
In this section is fundamental for the understanding of the results in subsequent sections. First we revise the
definitions of mge monoids, the GCLF- and LSL-axioms that were introduced in~\cite{LATA2018}. We complete
these definitions by shedding additional light on the properties such monoids possess. At the end of Subsection~\ref{subsec:gclf},
we list the main results from~\cite{LATA2018}. This can be considered as a motivation to consider mge monoids with GCLF- and LSL-axioms
and strive at characterisation of the (sub)sequential rational functions with range in such monoids.
In Subsection~\ref{subsec:lpa} we consider one more axiom. It is a not natural, second order formula, that, at first glance seems artificial.
Yet, as we shall see in Section~\ref{Example} an axiom with such flavour is necessary condition for the characterisation we are looking for. 

Throughout this section we also show that all the axioms introduced here are valid for free monoids, tropical monoids, sequentiable structures, and groups. We also prove that they are closed under Cartesian Product of monoids.
\subsection{MGE Axioms}\label{subsec:mge}
\begin{definition}
We say that a monoid ${\cal M}$ satisfies the Left Cancellation Axiom (LC-axiom), if:
\begin{equation*}
\forall a,b,c\in {\cal M}(cb=ca\Rightarrow a=b).
\end{equation*}
In this case, if $a\le_M b$ we shall denote with $\frac{b}{a}$ the unique element such that:
\begin{equation*}
a\circ \frac{b}{a}=b.
\end{equation*}
\end{definition}
\begin{definition}\label{RC}
We say that a monoid ${\cal M}$ satisfies the Right Cancellation Axiom (RC-axiom) if:
\begin{equation*}
\forall a,b,c\in {\cal M} (ac= bc \Rightarrow a=b).
\end{equation*}
\end{definition}
\begin{definition}\label{RMGE}
We say that a monoid ${\cal M}$ satisfies the Right Most General Equaliser Axiom (RMGE-axiom) if:
\begin{equation*}
\forall a,b\in {\cal M} (up(\{a,b\})\neq\emptyset\Rightarrow \sup(\{a,b\})\neq\emptyset).
\end{equation*}
If this is the case and $up(\{a,b\})\neq \emptyset$ we shall write $a\vee b$ to denote some arbitrary but fixed
witness $a\vee b\in \sup(\{a,b\})$. We shall assume that $a\vee b$ is undefined if $up(\{a,b\})=\emptyset$.
Finally, given a finite sequence of elements, $\{a_i\}_{i=1}^n$ we shall write $\bigvee_{i=1}^na_i$ as an abbreviation for:
\begin{equation*}
\bigvee_{i=1}^na_i = (\dots((a_1\vee a_2)\vee a_3)\dots \vee a_{n-1})\vee a_n.
\end{equation*}
\end{definition}

In the sequel we describe some simple consequences of the above axioms and revisit the notion
of an mge monoid that was introduced in a previous work,~\cite{LATA2018}.
\begin{definition}\label{mge:monoid}
A monoid ${\cal M}$ is called an mge-monoid if it satisfies RMGE-, RC-, and LC-axioms.
\end{definition}
\begin{lemma}\label{lemma:mge_equivalent}
Let ${\cal M}$ be an mge monoid and $a,b\in {\cal M}$ be such that $up(\{a,b\})\neq \emptyset$.
Then there are elements $c,d \in {\cal M}$ such that:
\begin{enumerate}
\item $ac=bd$,
\item if $ax=by$ for some $x,y\in {\cal M}$, then $c\le_M x$ and $d\le_M y$ and $\frac{x}{c}=\frac{y}{d}$.
\end{enumerate} 
\end{lemma}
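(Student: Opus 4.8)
The plan is to obtain $c$ and $d$ directly from a supremum of $\{a,b\}$, which exists by the RMGE-axiom since $up(\{a,b\})\neq\emptyset$. Fix $s\in\sup(\{a,b\})$. By definition of upper bound, $a\le_M s$ and $b\le_M s$, so by the LC-axiom the quotients $c:=\frac{s}{a}$ and $d:=\frac{s}{b}$ are well-defined and satisfy $ac=s=bd$; this establishes item~(1).

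For item~(2), suppose $ax=by$ for some $x,y\in{\cal M}$. Then $ax$ is an upper bound of $\{a,b\}$ (indeed $a\le_M ax$ trivially, and $b\le_M by=ax$), so $ax\in up(\{a,b\})$. Since $s\in\sup(\{a,b\})$, we get $s\le_M ax$, i.e.\ $ac=s\le_M ax$. The key observation is that from $ac\le_M ax$ we can cancel $a$ on the left: writing $ax=(ac)z=a(cz)$ for the appropriate $z$ and applying the LC-axiom gives $x=cz$, hence $c\le_M x$ and $\frac{x}{c}=z$. Symmetrically, from $bd=s\le_M by$ we get $d\le_M y$ and $\frac{y}{d}=z'$ where $by=(bd)z'=b(dz')$, so $y=dz'$. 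It remains to see $z=z'$, i.e.\ $\frac{x}{c}=\frac{y}{d}$. But $ax=by$ rewrites as $a(cz)=b(dz')$, i.e.\ $(ac)z=(bd)z'$, and since $ac=s=bd$ this reads $sz=sz'$; by the LC-axiom (cancelling $s$ on the left) we conclude $z=z'$.

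The step I expect to be the only mildly delicate point is the left-cancellation argument ``$ac\le_M ax \Rightarrow c\le_M x$ with matching quotient'': one must be careful that $\frac{x}{c}$ is defined (which follows once $c\le_M x$ is known) and that the same factor $z$ simultaneously witnesses $ac\le_M ax$ and $c\le_M x$ — this is exactly where the LC-axiom is used, and it is what forces $\frac{x}{c}=\frac{y}{d}$ once combined with the hypothesis $ax=by$. Note that the RC-axiom is not actually needed for this lemma; only LC and RMGE are used. I would present the argument in the order above: first extract $c,d$ from $s\in\sup(\{a,b\})$, then show any common multiple $ax=by$ lies above $s$, then peel off the left factors via LC.
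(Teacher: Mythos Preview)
Your proof is correct and follows essentially the same approach as the paper's: take $s\in\sup(\{a,b\})$, set $c=\frac{s}{a}$, $d=\frac{s}{b}$, then for any $ax=by$ use $s\le_M ax$ together with the LC-axiom to obtain $c\le_M x$, $d\le_M y$, and finally cancel $s$ on the left in $sz=sz'$ to get $\frac{x}{c}=\frac{y}{d}$. Your remark that only the LC- and RMGE-axioms are used (not RC) is also accurate.
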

\begin{proof}
Since ${\cal M}$ is an mge monoid, we have that $a\vee b\in \sup(\{a,b\})$. Hence
$a\le_M a\vee b$ and $b\le_M a\vee b$. Therefore $c=\frac{a\vee b}{a}$ and $d=\frac{a\vee b }{b}$ are well-defined and
$ac=bd=a\vee b$. We prove that $c$ and $d$, as defined, satisfy also the second property. To this end, let $ax=by$. 
Then $a\le_M ax$ and $b\le_M by$.
Hence, $a\vee b \le_M ax=by$. Therefore, $a\vee b=ac\le_M ax$ and by LC-axiom, we get that $c\le_M x$.
Hence we can write $ax=ac\frac{x}{c}=(a\vee b)\frac{x}{c}$. Similar argument shows that $d\le_M y$ and $by=(a\vee b)\frac{y}{d}$. Therefore by the LC-axiom, since $(a\vee b)\frac{y}{d}=by=ax=(a\vee b)\frac{x}{c}$, we conclude that $\frac{x}{c}=\frac{y}{d}$.
\end{proof}
\begin{definition}
An equaliser for a tuple $\tuple{a_1,a_2,\dots,a_n}\in {\cal M}^n$ is a tuple \\$\tuple{u_1,u_2,\dots,u_n}\in {\cal M}^n$ such that:
\begin{equation*}
\forall i,j (a_iu_i=a_ju_j).
\end{equation*}
A most general equaliser (mge) for $\tuple{a_1,a_2,\dots,a_n}$ is a tuple \\$\tuple{m_1,m_2,\dots,m_n}\in {\cal M}^n$ such that:
\begin{enumerate}
\item $\tuple{m_1,\dots,m_n}$ is an equaliser for $\tuple{a_1,\dots,a_n}$,
\item for any equaliser $\tuple{u_1,\dots,u_n}$ for $\tuple{a_1,\dots,a_n}$ there is an element $d\in {\cal M}$ such that:
\begin{equation*}
\forall i (u_i = m_i d).
\end{equation*}
\end{enumerate}
\end{definition}
We restate two further results about mge monoids that we shall use:
\begin{lemma}\label{lemma:mge}
Let ${\cal M}$ be an mge monoid.
Then for any $a_1,a_2,\dots,a_n\in  M$ the tuple $\tuple{a_1,\dots,a_n}$ is equalisable iff:
\begin{equation*}
\bigvee_{i=1}^n a_i =(\dots ((a_1\vee a_2)\vee a_3) \dots )\vee a_n \text{ is defined}
\end{equation*}
and in this case $\tuple{\frac{\bigvee a_i}{a_1},\frac{\bigvee a_i}{a_2},\dots,\frac{\bigvee a_i}{a_n}}$ is an mge for $\tuple{a_1,\dots,a_n}$.
\end{lemma}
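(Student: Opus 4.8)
The plan is to induct on $n$, peeling off the last element so that the $n$-ary statement reduces to repeated application of the RMGE-axiom, with Remark~\ref{rem:supremum} serving as the bridge. As a preliminary observation I would record that $\tuple{a_1,\dots,a_n}$ is equalisable if and only if $up(\{a_1,\dots,a_n\})\neq\emptyset$: an equaliser $\tuple{u_1,\dots,u_n}$ has a common value $w=a_iu_i$ which lies in $up(\{a_1,\dots,a_n\})$ since each $u_i$ witnesses $a_i\le_M w$; conversely, any $w\in up(\{a_1,\dots,a_n\})$ gives $u_i=\frac{w}{a_i}$ (well defined by the LC-axiom), and $\tuple{u_1,\dots,u_n}$ is then an equaliser. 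So it remains to prove two things: (a) $up(\{a_1,\dots,a_n\})\neq\emptyset$ iff $\bigvee_{i=1}^n a_i$ is defined; and (b) when it is defined, $\bigvee_{i=1}^n a_i\in\sup(\{a_1,\dots,a_n\})$, from which the mge claim will follow quickly.

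For the induction I would prove the conjunction of (a) and (b) simultaneously, since the supremum statement for $n-1$ elements is exactly what makes the reduction work. The cases $n=1$ (where $\bigvee_{i=1}^1 a_i=a_1$ is always defined and lies in $up(\{a_1\})=a_1M$) and $n=2$ (which is literally Definition~\ref{RMGE} together with the RMGE-axiom) are immediate. For the step, set $b=\bigvee_{i=1}^{n-1}a_i$. The crucial structural point is that, \emph{whenever $b$ is defined}, $up(\{a_1,\dots,a_{n-1}\})=bM$: this is Remark~\ref{rem:supremum} applied to the inductive fact $b\in\sup(\{a_1,\dots,a_{n-1}\})$. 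Granting this: if $up(\{a_1,\dots,a_n\})\neq\emptyset$, pick $w$ in it; then $w\in up(\{a_1,\dots,a_{n-1}\})$, so by the induction hypothesis $b$ is defined and $w\in bM$, whence $b\le_M w$ and $a_n\le_M w$, so $up(\{b,a_n\})\neq\emptyset$ and the RMGE-axiom makes $b\vee a_n=\bigvee_{i=1}^n a_i$ defined. Conversely, if $\bigvee_{i=1}^n a_i=b\vee a_n$ is defined, then so is $b$, and from $a_i\le_M b$ for $i<n$ together with $b,a_n\le_M b\vee a_n$ we get $b\vee a_n\in up(\{a_1,\dots,a_n\})$; moreover any $m\in up(\{a_1,\dots,a_n\})$ lies in $bM\cap a_nM=up(\{b,a_n\})$, hence $b\vee a_n\le_M m$, so $b\vee a_n\in\sup(\{a_1,\dots,a_n\})$. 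This closes the induction, and by Remark~\ref{rem:supremum} again it gives $up(\{a_1,\dots,a_n\})=(\bigvee_{i=1}^n a_i)M$.

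Finally, for the mge claim, write $v=\bigvee_{i=1}^n a_i$ and $m_i=\frac{v}{a_i}$, well defined by the LC-axiom because $a_i\le_M v$. Then $a_im_i=v$ for all $i$, so $\tuple{m_1,\dots,m_n}$ is an equaliser; and for any equaliser $\tuple{u_1,\dots,u_n}$ with common value $w=a_iu_i$ we have $w\in up(\{a_1,\dots,a_n\})=vM$, so $d:=\frac{w}{v}$ is defined and $a_iu_i=w=vd=a_im_id$, whence the LC-axiom yields $u_i=m_id$ for every $i$ — precisely the generality condition.

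I do not expect a genuine obstacle here; the one point requiring care is the convention on ``undefinedness'', namely that $\bigvee_{i=1}^n a_i$ should count as defined only when every nested term $\bigvee_{i=1}^k a_i$ is, so that ``$\bigvee_{i=1}^n a_i$ defined'' really does license the induction hypothesis about $\bigvee_{i=1}^{n-1}a_i$. Note that the RC-axiom plays no role; the argument uses only the LC-axiom and iterated use of the RMGE-axiom, glued together by Remark~\ref{rem:supremum}.
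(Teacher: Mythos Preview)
Your proof is correct and follows essentially the same approach as the paper's: both reduce the $n$-ary statement to iterated use of the binary RMGE-axiom via Remark~\ref{rem:supremum}, and both derive the mge property from the LC-axiom in the same way. The only cosmetic difference is that the paper compresses the induction into the single line $(\bigvee_{i=1}^n a_i)M=\bigcap_{i=1}^n a_iM$ (which is just the binary identity $(a\vee b)M=aM\cap bM$ applied repeatedly), whereas you spell the induction out explicitly; your observation that the RC-axiom is unused is accurate and matches the paper's argument.
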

\begin{proof}
By definition, $a\vee b$ is defined iff $aM\cap bM\neq \emptyset$ and in this case, by Remark~\ref{rem:supremum}:
\begin{equation*}
aM\cap bM=(a\vee b)M.
\end{equation*}
First assume that $a=\bigvee_{i=1}^n a_i$ is defined. Then:
\begin{equation*}
a M = \bigcap_{i=1}^n a_i M 
\end{equation*}
and therefore $a\in \bigcap_{i=1}^n a_i M$. Hence, there are elements $u_i\in M$ with $a_iu_i=a$ showing that
 $\tuple{a_1,\dots,a_n}$ is equalisable.

Conversely, assume that $\tuple{a_1,\dots,a_n}$ is equalisable. Let $\tuple{u_1,\dots,u_n}$ be an equaliser for $\tuple{a_1,\dots,a_n}$. Thus, there is an $m$ with $m=a_iu_i \in a_iM$ for each $i$, witnessing that $\bigcap_{i=1}^n a_iM\neq\emptyset$. By above, we conclude that:
\begin{equation*}
\bigvee_{i=1}^n a_i \text{ is defined and } \left(\bigvee_{i=1}^n a_i\right) M =\bigcap_{i=1}^n a_i M.
\end{equation*}
Let $a=\bigvee_{i=1}^n a_i$. Since $a\in aM\subseteq a_jM$ we get that $a_j\le_M a$ for $j\le n$. Therefore $\frac{a}{a_j}$ is defined. Since
$m\in aM$ we have $a\le_M m$ and hence $\frac{m}{a}$ is defined. Therefore for each $j\le n$ we have:
\begin{equation*}
a_ju_j=m=a \frac{m}{a}= a_j \frac{a}{a_j} \frac{m}{a}
\end{equation*}
and by the LC-axiom we deduce $u_j=\frac{a}{a_j} \frac{m}{a}$. Since $a=\bigvee_{i=1}^n a_i$, the result follows.
\end{proof}

\begin{lemma}\label{lemma:mge_partition}
Let ${\cal{M}}$ be an mge monoid. For any $a,b,c,d\in M$ it holds that the pairs $\tuple{a,b}$ and $\tuple{c,d}$:
\begin{enumerate}
\item  have no common equalisers,
\item or have the same set of equalisers.
\end{enumerate}
\end{lemma}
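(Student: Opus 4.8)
The statement is really a rigidity phenomenon: once the two pairs agree on one equaliser, they must agree on all of them. So the plan is to assume that $\tuple{a,b}$ and $\tuple{c,d}$ admit \emph{some} common equaliser $\tuple{x_0,y_0}$, i.e. $ax_0=by_0$ and $cx_0=dy_0$, and to deduce that they have exactly the same set of equalisers; if no common equaliser exists, we are already in the first alternative and there is nothing to prove.

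First I would note that $ax_0=by_0$ is a common upper bound of $a$ and $b$, so $up(\{a,b\})\neq\emptyset$, and likewise $up(\{c,d\})\neq\emptyset$. This lets me apply Lemma~\ref{lemma:mge_equivalent} to $\tuple{a,b}$: it produces $p,q\in M$ with $ap=bq$ such that \emph{every} solution of $ax=by$ has the form $x=pt$, $y=qt$ for the \emph{single} element $t=\frac{x}{p}=\frac{y}{q}$. In particular $x_0=pt_0$ and $y_0=qt_0$. The crux is then a short computation: from $cx_0=dy_0$ we obtain $(cp)t_0=c(pt_0)=d(qt_0)=(dq)t_0$, and the RC-axiom cancels $t_0$ on the right, giving the identity $cp=dq$.

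With $cp=dq$ in hand the rest is immediate: for an arbitrary equaliser $\tuple{x,y}$ of $\tuple{a,b}$, writing $x=pt$, $y=qt$, we get $cx=c(pt)=(cp)t=(dq)t=d(qt)=dy$, so $\tuple{x,y}$ equalises $\tuple{c,d}$ as well. Running the symmetric argument — apply Lemma~\ref{lemma:mge_equivalent} to $\tuple{c,d}$ to get $r,s$ with $cr=ds$, factor $x_0=ru_0$, $y_0=su_0$, and derive $ar=bs$ from $ax_0=by_0$ by the same RC-cancellation — shows every equaliser of $\tuple{c,d}$ equalises $\tuple{a,b}$. Hence the two equaliser sets coincide, which is the second alternative.

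The only delicate point I anticipate is insisting that the factorisation supplied by Lemma~\ref{lemma:mge_equivalent} uses a \emph{common} parameter $t$ in both coordinates (the content of $\frac{x}{p}=\frac{y}{q}$); it is exactly this that lets $cx$ and $dy$ collapse to $(cp)t$ and $(dq)t$ and makes the RC-cancellation applicable. Beyond this, the argument is routine cancellation bookkeeping, and it uses no axioms other than LC, RC and RMGE, all of which are already packaged into Lemma~\ref{lemma:mge_equivalent}.
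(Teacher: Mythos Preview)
Your proposal is correct and follows essentially the same route as the paper: both arguments apply Lemma~\ref{lemma:mge_equivalent} to parametrise all equalisers of one pair as $(pt,qt)$, then use the common equaliser together with the RC-axiom to deduce that $(p,q)$ equalises the other pair, after which the conclusion is immediate. The paper phrases the endgame slightly differently (it shows the two pairs have the same mge up to an invertible factor and then invokes Lemma~\ref{lemma:mge_equivalent} once more), whereas you go directly from $cp=dq$ to the inclusion of equaliser sets; this is a minor streamlining, not a different idea.
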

\begin{proof}
Let $\tuple{u,v}$ be a common equaliser for $\tuple{a,b}$ and $\tuple{c,d}$. In view of Lemma~\ref{lemma:mge_equivalent}
there are elements $s,t\in M$ such that:
\begin{eqnarray*}
u=\frac{a\vee b}{a}s=\frac{c\vee d}{c}t\\
v=\frac{a\vee b}{b}s=\frac{c\vee d}{d}t
\end{eqnarray*} 
Now, it is readily seen that $c \frac{a\vee b}{a}s=d \frac{a\vee b}{b}s$. By the RC-axiom we conclude that
$c \frac{a\vee b}{a} =d \frac{a\vee b}{b}$. Invoking Lemma~\ref{lemma:mge_equivalent} we deduce that there is some $x\in M$ with
$\frac{a\vee b}{a} =\frac{c\vee d}{c}x$ and $\frac{a\vee b}{b} =\frac{c\vee d}{d}x$. Dual argument reveals that there is some $y\in M$ with $\frac{a\vee b}{a}y =\frac{c\vee d}{c}$ and $\frac{a\vee b}{b}y =\frac{c\vee d}{d}$. Comparing both pairs of equalities we see that $xy=yx=e$ and consequently $\tuple{a,b}$ and $\tuple{c,d}$ have the same mge's. Now the result follows by Lemma~\ref{lemma:mge_equivalent}. 
\end{proof}

We conclude this section with an useful observation concerning the mge monoids and infimums of sets.
\begin{lemma}\label{lemma:associativity}
Let ${\cal M}$ be an mge monoid, $\emptyset\subsetneq S\subseteq M$ and $v\in M$ be arbitrary. Then:
\begin{equation*}
\inf (vS) = v\inf S.
\end{equation*}
\end{lemma}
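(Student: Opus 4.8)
The plan is to prove the set equality by the two inclusions, putting almost all of the work into one structural observation about $low(vS)$.

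First I would record the routine facts. Fix some $s_0\in S$ (possible since $S\neq\emptyset$). For every $s\in S$ we have $v\circ s=vs$, so $v\le_M vs$ and hence $v\in low(vS)$. Next I would establish the identity
\[
low(vS)\cap vM = v\cdot low(S).
\]
The inclusion $\supseteq$ is immediate: if $\ell\le_M s$ for all $s$, left-multiplying each witnessing equation by $v$ gives $v\ell\le_M vs$, and $v\ell\in vM$. For $\subseteq$, an $m\in low(vS)\cap vM$ can be written $m=v\ell$; from $m\le_M vs$, i.e. $v\ell\,d_s=vs$, the LC-axiom cancels $v$ on the left to yield $\ell\le_M s$ for every $s$, so $\ell\in low(S)$.

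The main step — and the one place where the RMGE-axiom is used — is: \emph{every} $m\in low(vS)$ satisfies $m\le_M vi$ for every $i\in\inf S$. Given such an $m$, both $m$ and $v$ lie in $low(vS)$, and $vs_0$ is a common upper bound of $\{m,v\}$, so by the RMGE-axiom $m\vee v$ exists. Since $vs\in up(\{m,v\})$ for every $s\in S$, the supremum satisfies $m\vee v\le_M vs$ for all $s$, i.e. $m\vee v\in low(vS)$; and $v\le_M m\vee v$ puts it in $vM$. By the identity above, $m\vee v=v\ell'$ with $\ell'\in low(S)$, hence $\ell'\le_M i$ and therefore $m\le_M m\vee v=v\ell'\le_M vi$.

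From here both inclusions are mechanical. For $v\inf S\subseteq\inf(vS)$: an $i\in\inf S\subseteq low(S)$ gives $vi\in low(vS)$ by the identity, and the main step shows $vi$ dominates every element of $low(vS)$, so $vi\in\inf(vS)$. For $\inf(vS)\subseteq v\inf S$: given $j\in\inf(vS)$, domination of the lower bound $v$ forces $v\le_M j$, so $j\in vM$ and thus $j=v\ell$ with $\ell\in low(S)$; for any $\ell'\in low(S)$ we have $v\ell'\in low(vS)$, whence $v\ell'\le_M v\ell$ and, by LC, $\ell'\le_M\ell$, so $\ell\in\inf S$ and $j\in v\inf S$. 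This argument is uniform and needs no case split: if $\inf S=\emptyset$ then $v\inf S=\emptyset$ and the second inclusion shows $\inf(vS)=\emptyset$ as well. The only real obstacle is spotting the replacement of an arbitrary lower bound $m$ of $vS$ by $m\vee v$ so as to land inside $vM$; everything else is cancellation.
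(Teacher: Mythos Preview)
Your proof is correct and rests on the same key idea as the paper's: for an arbitrary lower bound $m$ of $vS$, form $m\vee v$ via the RMGE-axiom to land inside $vM\cap low(vS)$, then use left cancellation to descend to $low(S)$. The paper runs this trick separately inside each of the two inclusions, whereas you factor it out once (your ``main step'') together with the identity $low(vS)\cap vM=v\,low(S)$, and then read off both inclusions; this is a mild organisational difference, not a different argument.
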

\begin{proof}
First note that if $l\in low(S)$, then $l\le_M s$ for each $s\in S$ and consequently $vl\le_M vs$. This implies that $vl\in low(vS)$.
Therefore $v low(S) \subseteq low(v S)$.

Let $m\in \inf (vS)$. Since $S\neq \emptyset$, there is an $s\in S$ with $m\le_M v s$. This shows that $u=m\vee v$ is defined
and furthermore for every $s\in S$ it holds that $u=m\vee v\le_M v s$. This shows that $u\in low(v S)$ and since $m\in \inf(vS)$ we conclude
that $u\le_M m$. On the other hand, by the definition of $u$, we have that $m\le_M u$ and thus $u\sim_M m$. In particular, $v\le_M m$.
Now since $m\le_M vs $ and $v\le_M m$, we conclude that $\frac{m}{v}\le_M s$ for all $s\in S$. Consequently $\frac{m}{v}\in low(S)$.
Therefore $m=v\frac{m}{v}\in v low(S)$. Since $v low(S)\subseteq low(vS)$ and $m\in \inf(vS)\cap v low(S)$, we deduce that $m\in v \inf S$. 

Conversely, let $m\in v \inf S$. In particular, $m\in v low(S)$ and by above $m\in low(vS)$. Let $u\in low(vS)$ be arbitrary and let $m_0=\frac{m}{v}\in \inf S$. We prove that
$u\le_M m$, which would imply that $m\in \inf(v S)$. Let $u\in low(vS)$. Thus, for any $s\in S$ we have that $u\le_M v s$. Since $S\neq \emptyset$,
such an element $s$ exists and it witnesses that $u\vee v\le_M v s$. Consequently, for each $s\in S$ we have $\frac{u\vee v}{v} \le_M s$ and therefore $\frac{u\vee v}{v} \in low(S)$. This implies that $\frac{u\vee v}{v}\le_M m_0$ and therefore:
\begin{equation*}
u\le_M (u\vee v) = v\frac{u\vee v}{v}\le_M v m_0=v\frac{m}{v}=m.
\end{equation*} 
Therefore $m$ is an upper bound for $low(v S)$ and hence $m\in \inf vS$ as required.
\end{proof}

\subsection{Greatest Common Left Factor Axioms}\label{subsec:gclf}
\begin{definition}\label{LSL-axiom}
Let ${\cal M}$ be a monoid. We say that ${\cal M}$ satisfies the Lower Semi-Lattice axiom (LSL-axiom),
if:
\begin{equation*}
\forall a,b \in {\cal M} \inf (\{a,b\})\neq \emptyset.
\end{equation*}
In this case, we shall denote with $a\maxpref b$ some arbitrary but fixed element of $\inf(\{a,b\})$. For a sequence of elements,
$\{a_i\}_{i=1}^n$ we use $\bigmaxpref_{i=1}^n a_i$ to mean:
\begin{equation*}
\bigmaxpref_{i=1}^n a_i =(\dots ( (a_1\maxpref a_2)\maxpref a_3)\dots ) \maxpref a_n.
\end{equation*}
\end{definition}
\begin{definition}\label{GCLF-axiom}
Let ${\cal M}$ be a monoid. We say that ${\cal M}$ satisfies the Greatest Common Left Factor axiom (GCLF-axiom) if:
\begin{equation*}
\forall m,v,x\in {\cal M} ( m\le_M v \text{ and } m\le_M xv \Rightarrow m\le_M xm).
\end{equation*}
\end{definition}
\begin{remark}
Note that the extreme cases, i.e. $m=v$ and $x=e$, are always satisfied.
\end{remark}
\begin{lemma}\label{GCLF,LSL-Sequential}
If ${\cal M}$ is a sequentiable structure, then ${\cal M}$ satisfies LSL-axiom and GCLF-axiom.
\end{lemma}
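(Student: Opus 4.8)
The plan is to verify the two axioms separately, using the three defining properties of a sequentiable structure from Definition~\ref{def:seq_str}: the singleton-infimum property, the norm homomorphism $\|\cdot\|$ with trivial kernel, and the cancellation-flavoured property that $a\le_M bc$ together with $\|a\|\le\|b\|$ forces $a\le_M b$. The LSL-axiom is almost immediate: property~(1) of Definition~\ref{def:seq_str} says exactly that $|\inf(\{a,b\})|=1$ for all $a,b$, so in particular $\inf(\{a,b\})\neq\emptyset$, and we may take $a\maxpref b$ to be the unique infimum. So the real content is the GCLF-axiom.

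For the GCLF-axiom, assume $m\le_M v$ and $m\le_M xv$; I want $m\le_M xm$. First I would pick the infimum $w = m\maxpref xm \in\inf(\{m, xm\})$ (which exists by property~(1)); it suffices to show $w\sim_M m$, equivalently $m\le_M w$, since $w\le_M xm$ would then give $m\le_M xm$. Because $w\le_M m$ we have $\|w\|\le\|m\|$. The idea is to show $m\le_M xv$ can be ``pushed down'': since $m\le_M xv$ and $m\le_M v$, and $\|x m\| = \|x\|+\|m\| \ge \|m\|$, I want to argue that $w$, the common left factor of $m$ and $xm$, must actually already be all of $m$. Concretely: from $m\le_M v$ write $v = m v'$; then $xv = xm v'$, and $m\le_M xv = (xm)v'$. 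Now $xm = w\cdot\frac{xm}{w}$, so $m \le_M w\cdot\frac{xm}{w}\cdot v'$. If I can show $\|m\|\le\|w\|$, then property~(3) of Definition~\ref{def:seq_str} applied to $m\le_M w\cdot(\text{something})$ gives $m\le_M w$, and we are done. Since $w\le_M m$ gives $\|w\|\le\|m\|$ automatically, the whole proof reduces to the norm inequality $\|m\|\le\|w\|$, i.e. $\|w\|=\|m\|$.

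To get $\|w\|=\|m\|$, I would compare norms around the two factorisations $w \le_M m$ and $w \le_M xm$. Write $m = w p$ and $xm = w q$ with $p = \frac{m}{w}$, $q = \frac{xm}{w}$; then $\|m\| = \|w\|+\|p\|$ and $\|x\|+\|m\| = \|w\|+\|q\|$, so $\|q\| = \|x\|+\|p\|$. The key leverage is that $m\le_M xv$; substituting $v = mv'$ and $m = wp$ gives $wp \le_M xwpv'$... which is where I expect the argument to need care, because $x$ does not commute past $w$ in general. The cleaner route is probably: since $m\le_M xv$ and $m\le_M v$, one shows directly that $m\le_M xm$ by induction-free use of property~(3) after establishing $\|m\| \le \|$ the relevant left factor. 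Indeed $x m$ has norm $\|x\|+\|m\|\geq\|m\|$, and $m \le_M xv = (xm)v'$, so property~(3) with $a=m$, $b=xm$, $c=v'$ and the inequality $\|m\|\le\|xm\|$ gives $m\le_M xm$ at once.

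So in fact the GCLF proof collapses to: from $m\le_M v$ obtain $v=mv'$, hence $xv=(xm)v'$ and $m\le_M xv=(xm)v'$; since $\|m\|\le\|m\|+\|x\|=\|xm\|$, property~(3) of Definition~\ref{def:seq_str} yields $m\le_M xm$, which is the GCLF-axiom. The main obstacle I anticipate is simply making sure the hypotheses of property~(3) are invoked with the correct association of the product $xv$ as $(xm)\cdot v'$ — everything else is routine.
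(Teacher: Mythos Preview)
Your final ``collapsed'' argument is correct and is exactly the paper's proof: write $v=mv'$ (the paper calls it $t$), deduce $m\le_M xv=(xm)v'$, and apply property~(3) of Definition~\ref{def:seq_str} with $\|m\|\le\|xm\|$ to conclude $m\le_M xm$. The earlier detour through $w=m\maxpref xm$ is unnecessary and you rightly abandon it; the clean version you end with is all that is needed.
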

\begin{proof}
If ${\cal M}$ is sequentiable, then ${\cal M}$ satisfies LSL-axiom by definition. As for the GCLF-axiom, 
let $m,v,x$ be such that $m\le_M v$ and $m\le_M xv$. From the first inequality we deduce
that $v = mt$ and therefore from the second we have $m\le_M xm t$. Since $\|m\|\le \|xm\|$ by
the definition of a sequentiable structure we get that $m\le_M xm$. 
\end{proof}
\begin{lemma}\label{GCLF,LSL-group}
If ${\cal M}$ is a group, then ${\cal M}$ satisfies LSL-axiom and GCLF-axiom.
\end{lemma}
\begin{proof}
Trivial, since any two elements in a group are in $\sim_{M}$.
\end{proof}
\begin{lemma}\label{LSL-product}
If ${\cal M}_1$ and ${\cal M}_2$ are monoids satisfying the LSL-axiom, so does their Cartesian Product, ${\cal M}_1\times {\cal M}_2$.
\end{lemma}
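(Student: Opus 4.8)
The plan is to show that the preorder on the Cartesian product is exactly the componentwise preorder, and then read off the infimum coordinatewise. First I would prove the auxiliary fact that for $\tuple{x_1,x_2},\tuple{y_1,y_2}\in M_1\times M_2$ we have
\begin{equation*}
\tuple{x_1,x_2}\le_M\tuple{y_1,y_2} \iff x_1\le_{M_1}y_1 \text{ and } x_2\le_{M_2}y_2.
\end{equation*}
This is immediate from the definition of $\circ$ on ${\cal M}_1\times{\cal M}_2$: an element $\tuple{c_1,c_2}$ with $\tuple{x_1,x_2}\circ\tuple{c_1,c_2}=\tuple{y_1,y_2}$ exists precisely when there are $c_1$ with $x_1\circ_1 c_1=y_1$ and $c_2$ with $x_2\circ_2 c_2=y_2$.

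Next I would use this to compute the lower bounds of a two-element set in the product. Writing $low_j$, $\inf_j$, $\maxpref_j$ for the notions in ${\cal M}_j$, the auxiliary fact gives
\begin{equation*}
low(\{\tuple{a_1,a_2},\tuple{b_1,b_2}\}) = low_1(\{a_1,b_1\})\times low_2(\{a_2,b_2\}).
\end{equation*}
By the LSL-axiom in each factor, $i_j := a_j\maxpref_j b_j$ is a well-defined element of $\inf_j(\{a_j,b_j\})$, so in particular $i_j\in low_j(\{a_j,b_j\})$; hence $\tuple{i_1,i_2}$ is a lower bound of $\{\tuple{a_1,a_2},\tuple{b_1,b_2}\}$ by the displayed equality. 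It remains to check that every lower bound $\tuple{l_1,l_2}$, i.e. every pair with $l_1\in low_1(\{a_1,b_1\})$ and $l_2\in low_2(\{a_2,b_2\})$, satisfies $\tuple{l_1,l_2}\le_M\tuple{i_1,i_2}$, which by the auxiliary fact means $l_1\le_{M_1}i_1$ and $l_2\le_{M_2}i_2$. Here I would invoke that $low_1(\{a_1,b_1\})$ and $low_2(\{a_2,b_2\})$ are both nonempty (each contains the unit $e_j$, since $e_j\circ x=x$ for all $x$), which lets the combined requirement split into the two independent statements ``$l_1\le_{M_1}i_1$ for all $l_1\in low_1(\{a_1,b_1\})$'' and ``$l_2\le_{M_2}i_2$ for all $l_2\in low_2(\{a_2,b_2\})$''; both hold because $i_j\in\inf_j(\{a_j,b_j\})$. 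Thus $\tuple{a_1\maxpref_1 b_1,\, a_2\maxpref_2 b_2\}$ is an infimum of the pair, so $\inf(\{\tuple{a_1,a_2},\tuple{b_1,b_2}\})\neq\emptyset$ and ${\cal M}_1\times{\cal M}_2$ satisfies the LSL-axiom.

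There is no serious obstacle here; the argument is essentially bookkeeping. The one point that deserves care is the step where the universal quantifier over lower bounds of the pair is decoupled into two coordinatewise quantifiers: this is valid only because each $low_j$ is nonempty, which is why I would isolate that observation explicitly. I would also keep in mind that $\inf$ denotes a set rather than a single element (cf. Remark~\ref{rem:infimum}), so the statement to prove is nonemptiness of $\inf$, and exhibiting the single witness $\tuple{a_1\maxpref_1 b_1, a_2\maxpref_2 b_2\}$ suffices. The same proof, applied inductively, shows closure under finite Cartesian products.
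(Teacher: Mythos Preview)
Your proof is correct and follows the same approach as the paper, which simply asserts in one line that $\inf(\{a,b\})=\inf(\{a_1,b_1\})\times\inf(\{a_2,b_2\})$ and concludes; you have supplied the details behind that equality. One minor remark: the nonemptiness of each $low_j$ is not actually needed for the direction you use---given a lower bound $\tuple{l_1,l_2}$ you immediately have $l_j\in low_j(\{a_j,b_j\})$ and hence $l_j\le_{M_j} i_j$ without any decoupling argument---so that paragraph can be dropped.
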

\begin{proof}
For any two elements $a=\tuple{a_1,a_2}$ and $b=\tuple{b_1,b_2}$ in ${\cal M}_1\times {\cal M}_2$ it holds that:
\begin{equation*}
\inf(\{a,b\}) =\inf(\{a_1,b_1\}) \times \inf(\{a_2,b_2\}).
\end{equation*}
Hence the result. 
\end{proof}
\begin{lemma}\label{GCLF-product}
If ${\cal M}_1$ and ${\cal M}_2$ are monoids satisfying the GLCF-axiom, so does their Cartesian Product, ${\cal M}_1\times {\cal M}_2$.
\end{lemma}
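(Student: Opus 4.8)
The plan is to mirror the structure of Lemma~\ref{LSL-product}: reduce the claim for the Cartesian product to the GCLF-axiom holding coordinatewise. First I would fix elements $m=\tuple{m_1,m_2}$, $v=\tuple{v_1,v_2}$, $x=\tuple{x_1,x_2}$ in ${\cal M}_1\times{\cal M}_2$ and assume the hypotheses $m\le_M v$ and $m\le_M xv$. The key observation is that the preorder $\le_M$ on a Cartesian product decomposes coordinatewise: $\tuple{p_1,p_2}\le_M\tuple{q_1,q_2}$ if and only if $p_1\le_{M_1}q_1$ and $p_2\le_{M_2}q_2$. This is immediate from Definition~\ref{def:preorder} and the coordinatewise definition of $\circ$ on ${\cal M}_1\times{\cal M}_2$: a witnessing factor $c=\tuple{c_1,c_2}$ for the product exists exactly when witnessing factors $c_1,c_2$ exist in each coordinate. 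I would state this as the first step (perhaps with a one-line justification rather than a separate lemma).

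Having this, the hypotheses $m\le_M v$ and $m\le_M xv$ give, in coordinate $i\in\{1,2\}$, that $m_i\le_{M_i} v_i$ and $m_i\le_{M_i} x_iv_i$ (using that the $i$-th coordinate of $xv$ is $x_iv_i$). Since ${\cal M}_i$ satisfies the GCLF-axiom, we conclude $m_i\le_{M_i} x_im_i$ for each $i$. Applying the coordinatewise characterisation of $\le_M$ in the other direction, and noting that the $i$-th coordinate of $xm$ is $x_im_i$, we obtain $m\le_M xm$, which is exactly the conclusion of the GCLF-axiom for ${\cal M}_1\times{\cal M}_2$.

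There is no real obstacle here; the only thing to be slightly careful about is making the coordinatewise decomposition of $\le_M$ explicit and correct, since it is used in both directions (to pass hypotheses down to the coordinates and to lift the conclusion back up). Everything else is a routine unwinding of definitions, exactly in the spirit of the analogous Lemma~\ref{LSL-product} for the LSL-axiom. I would keep the proof to three or four sentences.

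\begin{proof}
For elements $p=\tuple{p_1,p_2}$ and $q=\tuple{q_1,q_2}$ of ${\cal M}_1\times{\cal M}_2$, we have $p\le_M q$ if and only if $p_1\le_{M_1}q_1$ and $p_2\le_{M_2}q_2$: indeed, $pc=q$ for some $c=\tuple{c_1,c_2}$ iff $p_1c_1=q_1$ and $p_2c_2=q_2$, by the coordinatewise definition of the product. Now let $m=\tuple{m_1,m_2}$, $v=\tuple{v_1,v_2}$, $x=\tuple{x_1,x_2}$ be such that $m\le_M v$ and $m\le_M xv$. Since $xv=\tuple{x_1v_1,x_2v_2}$, the coordinatewise characterisation yields $m_i\le_{M_i}v_i$ and $m_i\le_{M_i}x_iv_i$ for $i=1,2$. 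As ${\cal M}_i$ satisfies the GCLF-axiom, it follows that $m_i\le_{M_i}x_im_i$ for $i=1,2$. Finally, since $xm=\tuple{x_1m_1,x_2m_2}$, applying the coordinatewise characterisation again gives $m\le_M xm$, as required.
\end{proof}
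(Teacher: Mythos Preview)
Your proof is correct and follows essentially the same approach as the paper's own proof: decompose the hypotheses coordinatewise, apply the GCLF-axiom in each factor, and reassemble. The only difference is that you make the coordinatewise characterisation of $\le_M$ explicit, whereas the paper uses it without comment.
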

\begin{proof}
Let $m=\tuple{m_1,m_2}$, $v=\tuple{v_1,v_2}$ and $x=\tuple{x_1,x_2}$ be elements in ${\cal M}={\cal M}_1\times {\cal M}_2$.
If $m\le_M v$ and $m\le_M xv$, we get that $m_i\le_{M_i} v_i$ and $m_i\le_{M_i} x_iv_i$ for $i=1,2$. Since ${\cal M}_i$ satisfies GCLF-axiom, it follows that $m_i\le_{M_i} x_im_i$ for $i=1,2$. Therefore $m\le_M xm$. 
\end{proof}
\begin{remark}\label{prevRes}
In a previous work,~\cite{LATA2018}, we showed that for any mge monoid with GCLF- and LSL-axioms the following two results hold true:
\begin{enumerate}
\item for any one-letter transducer ${\cal T}$, there is an equivalent onward transducer with the same states and input-transitions.
\item any subsequential transducer ${\cal T}$ can be minimised.
\end{enumerate}
Furthermore, in~\cite{LATA2018} we provided constructive proofs for these two results. Finally, we showed that the last axiom, GCLF-axiom, is in a way necessary. That is, there is an mge monoid with LSL-axiom that violates the GCLF-axiom and for which a very simple regular language, $a^*b$ does not possess an infimum.
\end{remark}

\subsection{Limit Prefix Axiom}\label{subsec:lpa}
In view of Remark~\ref{prevRes} it is interesting to characterise the (sub)sequential rational functions in terms of congruence relations.
In particular we are interested in a result of the form:

\emph{Given a monoid ${\cal M}$ with certain axioms a function $f:\Sigma^*\rightarrow {\cal M}$ induces a Myhill-Nerode relation, $\equiv_f$, of finite index, $ind(\equiv_f)$, iff there is a complete subsequential transducer with $ind(\equiv_f)$ states that represents $f$.}

The axioms considered in the previous paragraph seem to be not powerful enough to this end. We are not able to prove this formally. Yet, in Section~\ref{Example} we shall give an formal evidence that such a result requires in great extent the properties of the axioms that we consider in this section. 

Particularly, in this section we consider some additional axioms that we refer to as limit prefix axioms. In the next section we shall prove that any of them, actually the weakest of them, suffices to prove the result we stated informally above. Finally, in Section~\ref{Example} we shall prove that a non-uniform version of this axiom must always hold, should the characterisation we are looking at is possible.   

\begin{definition}
For elements $u,v\in M$ we say that $u,v$ have the LP-property if:
\begin{equation*}
\exists \{a_n\}_{n=0}^{\infty}\forall n\in\mathbb{N} (u a_{n+1}=v a_n))\Rightarrow u\le_M v. 
\end{equation*}
In this case write $LP(u,v)$.
\end{definition}
\begin{definition}
For a monoid ${\cal M}$ and elements $u,v\in M$, we define the property $LP_k$ inductively on $k\in \mathbb{N}$:
\begin{enumerate}
\item $LP_0(u,v)=LP(u,v)$.
\item $LP_{k+1}(u,v) \iff u\vee v=\bot \text{ or } LP_{k}\left(\frac{u\vee v}{v},\frac{u\vee v}{u}\right)$.
\end{enumerate}
\end{definition}
\begin{remark}\label{remark:inherit}
Note that $LP_{k}(u,v)\Rightarrow LP_{k+1}(u,v)$ for all $k$. Let us consider the special case where $LP(u,v)=LP_0(u,v)$. We prove that $LP_1(u,v)$ holds. Assume that $u\vee v\neq \bot$, otherwise the statement is obvious. Let $\{a_n\}_{n=0}^{\infty}$ be such that that $\frac{u\vee v}{v}a_{n+1}= \frac{u\vee v}{u}a_{n}$. Let $b_n=\frac{u\vee v}{u}a_{n}$. Thus, we have that:
\begin{equation*}
u b_{n+1} = (u\vee v) a_{n+1}= v\frac{u\vee v}{v}a_{n+1} =v \frac{u\vee v}{u}a_{n}=v b_n.
\end{equation*}
Since $LP(u,v)$ we conclude that $u\le_M v$ and therefore $u\vee v\sim_M v$ which implies that $\frac{u\vee v}{v}\sim_M e\le_M \frac{u\vee v}{u}$.

Now the general case, $LP_k(u,v)\Rightarrow LP_{k+1}(u,v)$ follows by induction on $k$.
\end{remark}
\begin{definition}\label{LP-axiom}
We say that a monoid ${\cal M}$ satisfies the Limit Prefix Axiom (LP-axiom) if:
\begin{equation*}
\forall u,v\in { M} (LP_0(u,v)). 
\end{equation*}
\end{definition}
\begin{definition}\label{ILP-axiom}
We say that a monoid ${\cal M}$ satisfies the Inheritent Limit Prefix Axiom (ILP-axiom) if:
\begin{equation*}
\forall u,v\in { M} \exists k\in \mathbb{N} (LP_k(u,v)). 
\end{equation*}
\end{definition}
\begin{remark}
By Remark~\ref{remark:inherit}, every monoid that satisfies the Prefix Limit Axiom also satisfies the Inherent Limit Prefix Axiom.
\end{remark}
\begin{lemma}\label{LP-Sequential}
If ${\cal M}$ is a sequentiable structure, then ${\cal M}$ satisfies the LP-axiom.
\end{lemma}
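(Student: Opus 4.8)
The plan is to unwind the definition of $LP_0(u,v)$ directly, using the norm homomorphism $\|.\|:{\cal M}\rightarrow \mathbb{R}^+_0$ that comes with a sequentiable structure together with axiom (3) of Definition~\ref{def:seq_str}. So fix $u,v\in M$ and suppose there is a sequence $\{a_n\}_{n=0}^{\infty}$ with $ua_{n+1}=va_n$ for all $n\in\mathbb{N}$; the goal is to conclude $u\le_M v$. The key quantity to track is $\|a_n\|$. Applying $\|.\|$ to the identity $ua_{n+1}=va_n$ gives $\|u\|+\|a_{n+1}\|=\|v\|+\|a_n\|$, hence $\|a_{n+1}\|-\|a_n\| = \|v\|-\|u\|$ is a constant independent of $n$. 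If $\|v\|-\|u\|<0$, then $\|a_n\| = \|a_0\| + n(\|v\|-\|u\|)$ would become negative for large $n$, contradicting that $\|.\|$ takes values in $\mathbb{R}^+_0$. Therefore $\|u\|\le\|v\|$.

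Now I would feed this inequality into axiom (3). From $va_0 = ua_1$ we get $v\le_M ua_1$, i.e. $v\le_M u a_1$ with $\|v\|\le\|u\|+\|a_1\|$; but this is not yet of the exact shape needed. The cleaner move is: from $ua_1 = va_0$ we have $v \le_M ua_1$, and since $\|v\|\le \|u\| \le \|u\| + \|a_1\|$ — wait, axiom (3) requires $\|v\|\le\|u\|$ to conclude $v\le_M u$ from $v\le_M ua_1$. We have exactly $\|v\|\le\|u\|$ from the previous paragraph only in the case $\|v\|-\|u\|\ge 0$; combining with the forced inequality $\|v\|-\|u\|\ge 0$ above, if $\|v\|-\|u\|>0$ we still have $\|v\|>\|u\|$, which is the wrong direction. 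So I must be more careful: the correct reading is that $\|v\|-\|u\|\ge 0$, and I actually want to show $u\le_M v$, for which axiom (3) applied to $u\le_M va_{?}$ with $\|u\|\le\|v\|$ is the right instance. Indeed $ua_{n+1}=va_n$ gives $u\le_M va_n$ for every $n$; since $\|u\|\le\|v\|\le\|v\|$ trivially but we need $\|u\|\le\|v\|$ which we have, axiom (3) with $a:=u$, $b:=v$, $c:=a_0$ yields $u\le_M v$ directly from $u\le_M va_0$ and $\|u\|\le\|v\|$.

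So the skeleton is: (1) apply the norm to the recurrence to get $\|a_{n+1}\|-\|a_n\|=\|v\|-\|u\|$; (2) use nonnegativity of $\|.\|$ over all $n$ to force $\|v\|-\|u\|\ge 0$, i.e. $\|u\|\le\|v\|$; (3) observe $u\le_M va_0$ from the $n=0$ instance of the recurrence; (4) invoke Definition~\ref{def:seq_str}(3) with the inequality $\|u\|\le\|v\|$ to conclude $u\le_M v$, which is exactly $LP_0(u,v)$. The only mild subtlety — and the one place where one has to resist a sign error — is step~(2): one must note that $\|a_n\|$ is an \emph{arithmetic progression} with common difference $\|v\|-\|u\|$, so a strictly negative difference would eventually push $\|a_n\|$ below $0$; this is why the hypothesis that the sequence is infinite (indexed by all of $\mathbb{N}$) is essential. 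There is no real obstacle here; it is a short argument, and the main thing to get right is bookkeeping the direction of the inequality before applying axiom (3).
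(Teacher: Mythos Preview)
Your proposal is correct and follows essentially the same approach as the paper: use the norm homomorphism to see that $\|a_n\|$ is an arithmetic progression with common difference $\|v\|-\|u\|$, conclude from nonnegativity that $\|u\|\le\|v\|$, and then apply axiom~(3) of Definition~\ref{def:seq_str} to $u\le_M va_0$ to obtain $u\le_M v$. The detour in your second paragraph is unnecessary, but the final summary matches the paper's argument exactly.
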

\begin{proof}
Let ${\cal M}$ be a sequentiable structure and $ua_{n+1}=va_n$ for some infinite sequence $\{a_n\}_{n=0}^{\infty}$.
Since $u\le_M va_0$ it is enough to prove that $\|u\|\le \|v\|$. However, since $\|ua_{n+1}\|=\|va_n\|$ for
all $n$, we have that $\|a_{n+1}\|-\|a_n\|=\|v\|-\|u\|$. Therefore we have that:
\begin{equation*}
\|a_{n+1}\| - \|a_0\| =(n+1)(\|v\| - \|u\|).
\end{equation*}
Consequently, if $\|v\|<\|u\|$ we get that $\lim_{n\rightarrow\infty} \|a_n\| =-\infty$ whereas $\|a_n\|\ge 0$ by definition.
This proves that $\|u\| \le \|v\|$ and hence $u\le_M v$. 
\end{proof}
\begin{lemma}\label{LP-Groups}
If ${\cal M}$ is a group, then ${\cal M}$ satisfies the LP-axiom.
\end{lemma}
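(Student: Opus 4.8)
The plan is to observe that in a group the relation $\le_M$ is total and in fact degenerate: any two elements are $\sim_M$-equivalent. Concretely, for arbitrary $u,v\in M$, writing $u^{-1}$ for the inverse of $u$ (which exists since ${\cal M}$ is a group), we have $v=u\,(u^{-1}v)$, so by Definition~\ref{def:preorder} we get $u\le_M v$. Thus $u\le_M v$ holds unconditionally for every pair $u,v\in M$, exactly as already noted in the proof of Lemma~\ref{GCLF,LSL-group}.

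Now recall that $LP_0(u,v)=LP(u,v)$ is the implication
\begin{equation*}
\bigl(\exists \{a_n\}_{n=0}^{\infty}\ \forall n\in\mathbb{N}\ (u a_{n+1}=v a_n)\bigr)\Rightarrow u\le_M v.
\end{equation*}
Since the consequent $u\le_M v$ is always true in a group, this implication holds for every $u,v\in M$, regardless of whether any such sequence $\{a_n\}$ exists. Hence $\forall u,v\in M\ (LP_0(u,v))$, which is precisely the LP-axiom by Definition~\ref{LP-axiom}.

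There is essentially no obstacle here: the lemma is an immediate consequence of invertibility, and the argument is a one-line specialisation of the reasoning used for the LSL- and GCLF-axioms in the group case. The only thing to be slightly careful about is to phrase it as the consequent being unconditionally satisfied (so the hypothesis on the existence of $\{a_n\}$ is irrelevant), rather than trying to analyse the sequence itself.
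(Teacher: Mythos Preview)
Your proof is correct and matches the paper's approach: the paper simply writes ``Immediate'', and you have spelled out exactly why it is immediate, namely that in a group $u\le_M v$ holds unconditionally, making the consequent of $LP(u,v)$ always true.
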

\begin{proof}
Immediate. 
\end{proof}
\begin{definition}\label{WLP-axiom}
For a monoid ${\cal M}$ and elements $u,v,x\in M$ we define the predicate $WLP(u,v,x)$ as:
\begin{equation*}
WLP(u,v,x) \iff \forall  \{a_n\}_{n=0}^{\infty}[\forall n (ua_{n+1}=v{a_n})] \Rightarrow(ux \le_M vx \& \forall n[x\le_M a_n])
\end{equation*}

We say that a monoid ${\cal M}$ satisfies the Weak Limit Prefix Axiom (WLP-axiom) if:
\begin{equation*}
\forall u,v \in M \exists x \in M(WLP(u,v,x)).
\end{equation*}
\end{definition}
\begin{lemma}\label{lemma:mge_inherit_weak}
If an mge monoid ${\cal M}$ satisfies the Inherent Limit Prefix Axiom, then ${\cal M}$ satisfies the Weak Limit Prefix Axiom.
\end{lemma}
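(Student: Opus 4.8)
The plan is to derive the WLP-axiom from the ILP-axiom by an induction on the parameter $k$. Fix an arbitrary pair $u,v\in M$; the ILP-axiom supplies some $k\in\mathbb{N}$ with $LP_k(u,v)$, so it suffices to prove, by induction on $k$, the statement $P(k)$: \emph{for every pair $u,v\in M$ with $LP_k(u,v)$ there exists $x\in M$ with $WLP(u,v,x)$}. Indeed $P(k)$ for all $k$, combined with the ILP-axiom, is exactly the WLP-axiom. Throughout, recall that $WLP(u,v,x)$ is a conditional, so it holds vacuously whenever no sequence $\{a_n\}_{n=0}^\infty$ with $ua_{n+1}=va_n$ for all $n$ exists; only the case where such a sequence exists carries content.

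For the base case $P(0)$ I would take $x=e$. If a sequence $\{a_n\}$ with $ua_{n+1}=va_n$ exists, then $LP_0(u,v)=LP(u,v)$ gives $u\le_M v$, hence $ue=u\le_M v=ve$, and $e\le_M a_n$ holds trivially; so $WLP(u,v,e)$ holds. The same degenerate reasoning will also dispatch one sub-case of the induction step: if $u\vee v=\bot$, i.e. $up(\{u,v\})=\emptyset$, then no sequence with $ua_{n+1}=va_n$ can exist (such a sequence would make $ua_1$ an upper bound of $\{u,v\}$), so $WLP(u,v,e)$ holds vacuously.

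For the main induction step assume $P(k)$ and $LP_{k+1}(u,v)$, and suppose $u\vee v$ is defined (the case $u\vee v=\bot$ being handled above). Put $c=\frac{u\vee v}{u}$ and $d=\frac{u\vee v}{v}$, so $uc=u\vee v=vd$, and by the definition of $LP_{k+1}$ we have $LP_k(d,c)$. Apply $P(k)$ to the pair $(d,c)$ to get $x'$ with $WLP(d,c,x')$, and set $x:=dx'$. To verify $WLP(u,v,x)$, take any sequence $\{a_n\}$ with $ua_{n+1}=va_n$. Each $ua_{n+1}$ is an upper bound of $\{u,v\}$, so $u\vee v\le_M ua_{n+1}$; cancelling on the left (LC-axiom) in $uc=u\vee v\le_M ua_{n+1}$ and in $vd=u\vee v\le_M va_n=ua_{n+1}$ yields $c\le_M a_{n+1}$, $d\le_M a_n$, and $\frac{a_{n+1}}{c}=\frac{a_n}{d}$ — this is precisely Lemma~\ref{lemma:mge_equivalent}. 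Define $b_n:=\frac{a_n}{d}$, so $a_n=db_n$ and $a_{n+1}=cb_n=db_{n+1}$, whence $db_{n+1}=cb_n$ for all $n$. Feeding $\{b_n\}$ into $WLP(d,c,x')$ gives $dx'\le_M cx'$ and $x'\le_M b_n$ for every $n$. Then $x=dx'\le_M db_n=a_n$, and from $dx'\le_M cx'$ together with $uc=vd$ we get $u(dx')\le_M u(cx')=(uc)x'=(vd)x'=v(dx')$, i.e. $ux\le_M vx$. Since $x=dx'$ was constructed from $u,v$ alone and does not depend on the sequence, this establishes $WLP(u,v,x)$, hence $P(k+1)$.

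The main obstacle is the induction step, and within it the passage between the "level $k{+}1$" pair $(u,v)$ and the "level $k$" pair $\bigl(\frac{u\vee v}{v},\frac{u\vee v}{u}\bigr)$: one must guess the witness $x=\frac{u\vee v}{v}\,x'$ and, crucially, transform a sequence witnessing the hypothesis of $WLP(u,v,\cdot)$ into one witnessing the hypothesis of $WLP\bigl(\frac{u\vee v}{v},\frac{u\vee v}{u},\cdot\bigr)$. This is exactly the inverse of the construction in Remark~\ref{remark:inherit}, and it is where Lemma~\ref{lemma:mge_equivalent} and the LC-axiom do the real work. The degenerate case $u\vee v=\bot$ must be singled out, since there the reduced pair is not even defined; fortunately it is precisely the case in which the hypothesis of $WLP$ is unsatisfiable.
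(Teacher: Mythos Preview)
Your proof is correct and follows essentially the same route as the paper: induction on $k$, handling the degenerate case where no sequence exists (equivalently $u\vee v=\bot$) vacuously, and in the inductive step passing to the pair $\bigl(\frac{u\vee v}{v},\frac{u\vee v}{u}\bigr)$ via Lemma~\ref{lemma:mge_equivalent}, applying the induction hypothesis to obtain a witness, and then lifting it back. The only cosmetic difference is that you take $x=\frac{u\vee v}{v}\,x'$ whereas the paper takes $x=\frac{u\vee v}{u}\,x'$; your choice makes the verification of $x\le_M a_n$ for \emph{all} $n\ge 0$ (including $n=0$) immediate, which the paper leaves implicit.
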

\begin{proof}
Assume that ${\cal M}$ satisfies the ILP-Axiom. We prove that for each $k$, $LP_k(u,v)$ implies $\exists x(WLP(u,v,x))$.
This statement is obvious for elements $u,v\in M$ such that there is no sequence $\{a_n\}$ with the property:
\begin{equation*}
u a_{n+1} =v a_n \text{ for all } n\in \mathbb{N}.
\end{equation*}
Indeed, if this is the case $WLP(u,v,x)$ is true for any $x\in M$.
With this remark, we prove that:
\begin{equation*}
LP_k(u,v) \Rightarrow \exists x (WLP(u,v,x)).
\end{equation*}
by induction on $k$. The statement is obvious for $k=0$. Indeed in this case, by above, we may assume that an appropriate sequence
$\{a_n\}$ with $ua_{n+1}=va_n$ exists. Then, by $LP_0(u,v)$ it follows that $u\le_M v$ which means that $WLP(u,v,e)$ is true.
Assume that the above statement holds true for some $k$ and all $u,v\in M$. Let $LP_{k+1}(u,v)$ and $\{a_n\}_{n=0}^{\infty}$ be a sequence with:
\begin{equation*}
ua_{n+1}=va_n.
\end{equation*}
Since ${\cal M}$ is an mge monoid, this shows that $u\vee v$ is well defined. Thus $LP_k(\frac{u\vee v}{v},\frac{u\vee v}{u})$ and by the induction hypothesis we may assume that $WLP(\frac{u\vee v}{v},\frac{u\vee v}{u},y)$ for some $y\in M$. Furthermore since $\tuple{a_{n+1},a_{n}}$ is an equaliser for $\tuple{u,v}$, Lemma~\ref{lemma:mge_equivalent} implies that for each $n$ there is a $b_n$ with:
\begin{equation*}
\frac{u\vee v}{u} b_{n} = a_{n+1} \text{ and } \frac{u\vee v}{v}b_n = a_n.
\end{equation*}
In particular, $\frac{u\vee v}{u} b_{n} = \frac{u\vee v}{v}b_{n+1}$. Now, the existence of the sequence $\{b_n\}_{n=0}^{\infty}$ and $WLP(\frac{u\vee v}{v},\frac{u\vee v}{u},y)$ imply that:
\begin{equation*}
\frac{u\vee v}{v} y\le_M \frac{u\vee v}{u} y.
\end{equation*}
Finally, we multiply the last inequality by $v$ on the left hand side and obtain:
\begin{equation*}
u\frac{u\vee v}{u}y=(u\vee v) y\le v\frac{u\vee v}{u}y.
\end{equation*}
Setting $x=\frac{u\vee v}{u}y$ we get $WLP(u,v,x)$ and the induction step is complete.
\end{proof}

\begin{lemma}\label{WLP-product}
If ${\cal M}_1$ and ${\cal M}_2$ are monoids satisfying the WLP-axiom (ILP-axiom, LP-axiom), so does their Cartesian Product ${\cal M}={\cal M}_1\times {\cal M}_2$.
\end{lemma}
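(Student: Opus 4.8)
The plan is to reduce each of the three assertions to the corresponding axiom in the two factors, using two elementary componentwise facts about a Cartesian product ${\cal M}={\cal M}_1\times{\cal M}_2$. First, $\tuple{a_1,a_2}\le_M\tuple{b_1,b_2}$ holds iff $a_i\le_{M_i}b_i$ for both $i$ (immediate from the definitions of the product operation and of $\le_M$). Second, for $u=\tuple{u_1,u_2}$ and $v=\tuple{v_1,v_2}$, a sequence $\{\tuple{a_n^{(1)},a_n^{(2)}}\}_{n=0}^\infty$ satisfies $ua_{n+1}=va_n$ for all $n$ iff each coordinate sequence $\{a_n^{(i)}\}_{n=0}^\infty$ satisfies $u_ia_{n+1}^{(i)}=v_ia_n^{(i)}$ in ${\cal M}_i$ for all $n$; in particular every such sequence in ${\cal M}$ projects to such sequences in both factors, and conversely a pair of such sequences in the factors pastes to one in ${\cal M}$.

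For the LP-axiom, given $u,v\in M$ and a sequence witnessing the hypothesis of $LP_0(u,v)$, I would project it to the two factors, apply $LP_0(u_i,v_i)$ (which holds by assumption) to obtain $u_i\le_{M_i}v_i$, and conclude $u\le_Mv$ by the first fact. For the WLP-axiom, given $u=\tuple{u_1,u_2}$ and $v=\tuple{v_1,v_2}$, I would pick $x_i\in M_i$ with $WLP(u_i,v_i,x_i)$ and set $x=\tuple{x_1,x_2}$; then any sequence $\{a_n\}$ with $ua_{n+1}=va_n$ projects to sequences realizing $WLP(u_i,v_i,x_i)$, yielding $u_ix_i\le_{M_i}v_ix_i$ and $x_i\le_{M_i}a_n^{(i)}$ for all $n$, and reassembling these by the first fact gives $ux\le_Mvx$ and $x\le_Ma_n$ for all $n$, i.e. $WLP(u,v,x)$; hence ${\cal M}$ satisfies the WLP-axiom.

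The ILP-axiom needs a little more care, since the witnessing index $k$ may differ between the factors and $LP_k$ for $k\ge1$ refers to $\vee$ and to $\frac{\cdot}{\cdot}$. I would first observe that in a Cartesian product $up$ and $\sup$ are formed coordinatewise, so one may take $\tuple{u_1,u_2}\vee\tuple{v_1,v_2}=\tuple{u_1\vee v_1,u_2\vee v_2}$ — undefined exactly when some coordinate is undefined — and correspondingly $\frac{\tuple{u_1,u_2}\vee\tuple{v_1,v_2}}{\tuple{v_1,v_2}}=\tuple{\frac{u_1\vee v_1}{v_1},\frac{u_2\vee v_2}{v_2}}$ and likewise for division by $\tuple{u_1,u_2}$. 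Using these identities I would prove by induction on $k$ that $LP_k(u_1,v_1)$ in ${\cal M}_1$ and $LP_k(u_2,v_2)$ in ${\cal M}_2$ together imply $LP_k(\tuple{u_1,u_2},\tuple{v_1,v_2})$ in ${\cal M}$: the base case $k=0$ is the LP argument above, and the inductive step just unwinds the definition of $LP_{k+1}$ through the coordinatewise formulas (the subcase where some coordinate has $\vee=\bot$ is closed off by the first identity, which forces $\tuple{u_1,u_2}\vee\tuple{v_1,v_2}=\bot$). Finally, for arbitrary $u=\tuple{u_1,u_2}$ and $v=\tuple{v_1,v_2}$ the ILP-axiom in the factors yields $k_1,k_2$ with $LP_{k_i}(u_i,v_i)$; taking $k=\max(k_1,k_2)$ and invoking Remark~\ref{remark:inherit} ($LP_j\Rightarrow LP_{j+1}$) lifts both to level $k$, and the induction claim gives $LP_k(u,v)$, so ${\cal M}$ satisfies the ILP-axiom. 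I expect this index-alignment step, together with justifying the coordinatewise evaluation of $\vee$ and of the quotient (or, if $\vee$ is not fixed coordinatewise, checking that $LP_k$ is unchanged when a chosen supremum is replaced by an $\sim_M$-equivalent one, which merely right-multiplies both quotient arguments by the same invertible element), to be the only genuine obstacle; everything else is routine bookkeeping with the two componentwise facts.
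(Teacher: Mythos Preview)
Your proposal is correct and follows essentially the same approach as the paper. For the WLP-axiom you reproduce the paper's argument verbatim (choose coordinatewise witnesses, project the sequence, reassemble); for the LP-axiom the paper merely says ``can be proven analogously'', which is exactly your projection argument; and for the ILP-axiom the paper's entire proof is the single sentence ``we can take into account Remark~\ref{remark:inherit}'', which you have expanded into the intended argument---align the two indices via $LP_j\Rightarrow LP_{j+1}$ and then run a coordinatewise induction on $k$. Your additional care about the coordinatewise behaviour of $\vee$ and the possible $\sim_M$-ambiguity in the choice of supremum is more than the paper provides, but it is the right way to make that one-line hint rigorous.
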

\begin{proof}
We prove that if ${\cal M}_i$ satisfy the WLP-axiom, so does ${\cal M}$.
Let $u=\tuple{u',u''}$ and $v=\tuple{v',v''}$ be elements in ${\cal M}$. Let $x'\in M_1$ and $x''\in M_2$ be such that $WLP(u',v',x')$ and $WLP(u'',v'',x'')$ are satisfied. Let $x=\tuple{x',x''}$. Consider an arbitrary sequence $\{a_n\}_{n=0}^{\infty}$ in ${\cal M}$ with:
\begin{equation*}
ua_{n+1} = va_n \text{ for all } n.
\end{equation*}
Thus, writing $a_n=\tuple{a_n',a_n''}$ we have that $u'a_{n+1}'=v'a_n'$. Thus, by the definition of $x'$ we get that $u'x'\le_{M_1} v'x'$ and $x'\le_{M_1} a_n'$. Similar reasoning shows that $u''x''\le_{M_2} v''x''$ and $x''\le_{M_2} a_n''$. Therefore $ux\le_M vx$ and $x\le_M a_n$ for all $n$. 
Therefore $WLP(u,v,x)$ as required.

The statement for the LP-axiom can be proven analogously. As for the ILP-axiom, we can take into account Remark~\ref{remark:inherit}.
\end{proof}

\section{Characterisation of Sequential Rational Functions}\label{sec:MNR}
This section describes our main contribution, the characterisation of (sub)sequential rational functions.
We start with the definition of the relation, $\equiv_f$, for arbitrary function $f:\Sigma^*\rightarrow {\cal M}$.
Then we state our main result for mge monoids with GCLF- and WLP-axioms in Theorem~\ref{th:Myhill-Nerode}.
The main body of this section is devoted to the proof of this theorem. Lemmata~\ref{lemma:finite_uniformisation} and~\ref{lemma:cancellation}
are the main ingredients to this end.
\begin{definition}\label{MN-Relation}
Let ${\cal M}$ be a monoid and $f:\Sigma^* \rightarrow {\cal M}$ be a function.
For words $\alpha,\beta \in \Sigma^*$ we define $\alpha\equiv_f \beta$ if there exist
$u,v\in {\cal M}$ such that:
\begin{enumerate}
\item $\alpha^{-1} \Dom(f) = \beta^{-1}\Dom(f)$.
\item for all $z\in \alpha^{-1} \Dom(f)$, $\frac{f(\alpha z)}{u}$ and $\frac{f(\beta z)}{v}$ are both defined and
$\frac{f(\alpha z)}{u}=  \frac{f(\beta z)}{v}$.
\end{enumerate}
\end{definition}
\begin{remark}
Note that the second condition for $\alpha\equiv_f v$ can be restated as follows. There exist
$u,v\in {\cal M}$ and a function $s:\Sigma^* \rightarrow {\cal M}$ such that:
\begin{enumerate}
\item[2.$'$] for all $z\in \alpha^{-1} \Dom(f)$ it holds:
\begin{eqnarray*}
f(\alpha z) &= & u s(z) \\
f(\beta z) & = & v s(z).
\end{eqnarray*}
\end{enumerate}
Actually $s(z)=\frac{f(\alpha z)}{u}=  \frac{f(\beta z)}{v}$.

We call a triple $\tuple{u,v,s}$ with the above properties a witness for $\alpha\equiv_f \beta$. For our considerations this perspective is notationally more convenient. For this reason in the sequel we shall use it instead of the more common Definition~\ref{MN-Relation}.
\end{remark} 

In Lemma~\ref{lemma:NM-equivalence}, below we are going to prove that $\equiv_f$ is a an equivalence relation.
With this remark, we can state the main result in this section. It is a characterisation of the subsequential functions over a large class of monoids. It generalises the Myhill-Nerode's Theorem as follows:
\begin{theorem}\label{th:Myhill-Nerode}
Let ${\cal M}$ be an mge-monoid with GCLF-, and WLP-axioms. Let $f:\Sigma^*\rightarrow {\cal M}$ be a function.
Then the following are equivalent:
\begin{enumerate}
\item $\equiv_f$ has finite index.
\item there is a (sub)sequential transducer ${\cal T}$ with ${\cal O}_{\cal T}=f$.
\end{enumerate}
Furthermore if $ind(\equiv_f)=n$, then:
\begin{enumerate}
\item there is a complete (sub)sequential transducer, ${\cal T}$, with $n$ states s.t. ${\cal O}_{\cal T}=f$.
\item any complete (sub)sequential transducer, ${\cal T}$, s.t. ${\cal O}_{\cal T}=f$ has at least $n$ states.
\end{enumerate}
\end{theorem}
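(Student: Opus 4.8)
The plan is to follow the classical Myhill–Nerode strategy, adapted to the monoidal setting, and to prove the two implications separately before addressing the state-count bounds. For the direction $(2)\Rightarrow(1)$, I would start from a (sub)sequential transducer ${\cal T}=\tuple{\Sigma,{\cal M},Q,s,F,\delta,\lambda,\iota,\Psi}$ with ${\cal O}_{\cal T}=f$, and show that $\delta^*(s,\alpha)=\delta^*(s,\beta)$ implies $\alpha\equiv_f\beta$. Indeed, if $q=\delta^*(s,\alpha)=\delta^*(s,\beta)$, then taking $u=\iota\lambda^*(s,\alpha)$ and $v=\iota\lambda^*(s,\beta)$ and $s(z)=\lambda^*(q,z)\Psi(\delta^*(q,z))$, one checks directly that $\tuple{u,v,s}$ is a witness for $\alpha\equiv_f\beta$: the domain condition holds because $\alpha^{-1}\Dom(f)=\beta^{-1}\Dom(f)={\cal D}_{\cal T}(q)$, and the factorisation condition is exactly the way a (sub)sequential transducer computes its output along a path. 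This shows $ind(\equiv_f)\le|Q|$, which is finite and simultaneously proves the lower-bound part of the ``Furthermore'' (every complete subsequential transducer for $f$ has at least $ind(\equiv_f)$ states). Before this, of course, I need Lemma~\ref{lemma:NM-equivalence} — that $\equiv_f$ is an equivalence relation — whose transitivity is the only non-trivial point and presumably uses the LC-axiom together with Lemma~\ref{lemma:mge_equivalent}.

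The substantial direction is $(1)\Rightarrow(2)$, together with the ``$n$ states suffice'' claim. Here I would build a canonical (sub)sequential transducer directly from the congruence classes. States are the classes $[\alpha]$ for $\alpha\in\Sigma^*$ (there are $n=ind(\equiv_f)$ of them), with initial state $[\varepsilon]$; I extend to a complete transducer by adding a sink class for the words $\alpha$ with $\alpha^{-1}\Dom(f)=\emptyset$ if needed. The transition on letter $a$ sends $[\alpha]$ to $[\alpha a]$, which is well-defined because $\alpha\equiv_f\beta$ implies $\alpha a\equiv_f\beta a$ (restrict a witness $\tuple{u,v,s}$ to $z$ of the form $az'$ — this is where one must check the domain condition $\,(\alpha a)^{-1}\Dom(f)=(\beta a)^{-1}\Dom(f)$ propagates, which it does since it is the relative complement $a^{-1}(\alpha^{-1}\Dom(f))$). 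The genuine difficulty is defining the output function $\lambda([\alpha],a)$ consistently across all representatives of $[\alpha]$: one wants $\lambda^*$ and $\iota$ so that $\iota\circ\lambda^*([\varepsilon],\alpha)$ is a ``canonical left factor'' of $\{f(\alpha z)\mid z\in\alpha^{-1}\Dom(f)\}$, and for this I would use an onwardness-type normalisation, choosing $\iota\lambda^*([\varepsilon],\alpha)$ to be (a representative of) the infimum of $f(\alpha\,\cdot\,)$ over the domain class, invoking the LSL/GCLF machinery via Remark~\ref{prevRes} (item 1) which guarantees that regular — here, just domain-indexed — families admit infimums and that transducers can be made onward with the same state set. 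The final output $\Psi([\alpha])$ at a final state is then the leftover $\frac{f(\alpha)}{\iota\lambda^*([\varepsilon],\alpha)}$.

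The crux — and the step I expect to be the main obstacle — is showing that this output assignment is \emph{well-defined and finite}: namely that the ``canonical prefix'' $p_\alpha$ associated to the class $[\alpha]$ actually exists as a monoid element (not just that the infimum of the family exists, which needs LSL + GCLF in the style of Remark~\ref{prevRes}), and that consecutive canonical prefixes $p_\alpha$ and $p_{\alpha a}$ are related by $p_\alpha\le_M p_{\alpha a}$ so that $\lambda([\alpha],a)=\frac{p_{\alpha a}}{p_\alpha}$ is legitimately defined. This is precisely where the WLP-axiom enters: passing from one congruence class to the next along an infinite chain of words could in principle make the canonical prefix ``grow forever in the limit'' without stabilising, and WLP (via the sequence $\{a_n\}$ built from representatives along such a chain, exactly as in the proof of Lemma~\ref{lemma:mge_inherit_weak}) supplies the element $x$ witnessing that the prefix of a class is bounded below in the required way, forcing $ux\le_M vx$ so that the relevant quotients exist. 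I would isolate this as the key lemma (the ``finite uniformisation'' lemma referred to in the paragraph before the theorem, Lemma~\ref{lemma:finite_uniformisation}), prove it using WLP together with Lemma~\ref{lemma:mge_equivalent} and Lemma~\ref{lemma:associativity}, and then the cancellation lemma (Lemma~\ref{lemma:cancellation}) to verify that the constructed transducer indeed outputs $f$ on every word of $\Dom(f)$. Once the transducer is built with exactly $n$ states, claim (1) of the ``Furthermore'' is immediate, and (2) is the lower bound already obtained in the $(2)\Rightarrow(1)$ argument.
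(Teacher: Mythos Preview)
Your plan for $(2)\Rightarrow(1)$ matches the paper exactly (this is Lemma~\ref{lemma:NM-easy}), and the lower bound for the state count follows just as you say. The overall skeleton for $(1)\Rightarrow(2)$ --- states are $\equiv_f$-classes, transitions by right-invariance, outputs defined so that the accumulated output is a canonical prefix of $f(\alpha\,\cdot\,)$ --- is also the right picture.

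There is, however, a genuine gap in the mechanism you propose for the output function. You want to set $\iota\lambda^*([\varepsilon],\alpha)$ to be an infimum of $\{f(\alpha z)\mid z\in\alpha^{-1}\Dom(f)\}$ and you appeal to ``LSL/GCLF machinery via Remark~\ref{prevRes}''. This fails for two independent reasons. First, the theorem does \emph{not} assume the LSL-axiom --- only mge, GCLF, and WLP --- so you cannot invoke infimums even of finite sets in general, let alone arbitrary ones. Second, Remark~\ref{prevRes} and the onwardness construction apply to a \emph{given transducer}; here no transducer for $f$ is available yet, and the set $\{f(\alpha z)\}$ is not known to be regular (that is precisely what you are trying to establish). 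The paper flags this explicitly in the paragraph following Lemma~\ref{lemma:NM-easy}: ``the classical idea that the witnesses \dots\ should/can be selected as $u=\inf\{f(\alpha z)\}$ \dots\ fails in the very beginning. Leave alone the fact that these two sets should not be regular.''

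The paper's route around this is quite different from an infimum-of-range construction. It works entirely with \emph{finite} collections of words: one fixes shortest representatives $\alpha_i$ (so $|\alpha_i|<n$), defines the finite sets $A_i=\{\alpha\in C_i:|\alpha|\le 2n-1\}$, and uses Lemma~\ref{lemma:finite_uniformisation} to produce uniform witnesses $v(\beta)$, $s_i$ over each $A_i$ with the prefix-monotonicity property $v(\alpha)\le_M v(\alpha\beta)$ whenever $\alpha,\alpha\beta\in A_i$ --- this is where WLP and GCLF are used, and no infimum is taken. The output labels are then defined via \emph{supremums of finite sets} $E_j\subseteq M$ (which exist by the RMGE-axiom, no LSL needed): one sets $M_j=\sup E_j$ and $\lambda(C_i,a)=\frac{v(\alpha_i a)M_j}{v(\alpha_i)M_i}$. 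The well-definedness of this quotient is Lemma~\ref{lemma:cancellation}, whose proof goes through the auxiliary Lemma~\ref{lemma:cycle_free} and a corollary showing $M_j\in\sup E_j'$ for a smaller set $E_j'$; Lemma~\ref{lemma:associativity} plays no role here (it is used only in Section~\ref{Example}). So the key idea you are missing is the reduction to finite combinatorics on words of length $\le 2n-1$, which replaces the unavailable infimum-of-range computation.
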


We start by proving that for all mge-monoids $\equiv_f$ is an equivalence relation and thus speaking of its index makes perfect sense in Theorem~\ref{th:Myhill-Nerode}. Specifically, we have:
 \begin{lemma}\label{lemma:NM-equivalence}
 Let ${\cal M}$ be an mge-monoid and $f:\Sigma^*\rightarrow {\cal M}$ be a function. 
 Then the relation $\equiv_f$ is a right-invariant equivalence relation.
 \end{lemma}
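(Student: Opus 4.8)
The plan is to verify the three defining properties of an equivalence relation — reflexivity, symmetry, transitivity — and then right-invariance, working with the witness-triple reformulation $\tuple{u,v,s}$ rather than Definition~\ref{MN-Relation} directly. Reflexivity is immediate: for any $\alpha$, the triple $\tuple{e,e,s}$ with $s(z)=f(\alpha z)$ witnesses $\alpha\equiv_f\alpha$, since $\alpha^{-1}\Dom(f)=\alpha^{-1}\Dom(f)$ trivially and $\frac{f(\alpha z)}{e}=f(\alpha z)$. Symmetry is equally direct: if $\tuple{u,v,s}$ witnesses $\alpha\equiv_f\beta$, then $\tuple{v,u,s}$ witnesses $\beta\equiv_f\alpha$ (condition 1 is symmetric, and the two equations in 2$'$ just swap roles).

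For transitivity, suppose $\tuple{u_1,v_1,s_1}$ witnesses $\alpha\equiv_f\beta$ and $\tuple{u_2,v_2,s_2}$ witnesses $\beta\equiv_f\gamma$. Condition~1 gives $\alpha^{-1}\Dom(f)=\beta^{-1}\Dom(f)=\gamma^{-1}\Dom(f)$, so the domain condition for $\alpha\equiv_f\gamma$ holds. The subtler point is producing a single common witness function. For $z$ in the common domain we have $f(\beta z)=v_1 s_1(z)=u_2 s_2(z)$, so $\tuple{s_1(z),s_2(z)}$ is an equaliser for $\tuple{v_1,u_2}$; if this common domain is nonempty, Lemma~\ref{lemma:mge_equivalent} (applied to $a=v_1,b=u_2$) yields fixed $c,d\in M$ with $v_1 c=u_2 d$ and, for every such $z$, $c\le_M s_1(z)$, $d\le_M s_2(z)$, and $\frac{s_1(z)}{c}=\frac{s_2(z)}{d}=:s(z)$. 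Then $f(\alpha z)=u_1 s_1(z)=u_1 c\, s(z)$ and $f(\gamma z)=v_2 s_2(z)=v_2 d\, s(z)$, so $\tuple{u_1 c,\,v_2 d,\,s}$ is a witness for $\alpha\equiv_f\gamma$. (If the common domain is empty, any triple $\tuple{e,e,s}$ works, since condition~2$'$ is vacuous.) Right-invariance is a localisation argument: given $\tuple{u,v,s}$ witnessing $\alpha\equiv_f\beta$ and a word $w\in\Sigma^*$, one checks $(\alpha w)^{-1}\Dom(f)=w^{-1}(\alpha^{-1}\Dom(f))=w^{-1}(\beta^{-1}\Dom(f))=(\beta w)^{-1}\Dom(f)$, and then $\tuple{u,v,s'}$ with $s'(z)=s(wz)$ witnesses $\alpha w\equiv_f\beta w$, because for $z\in(\alpha w)^{-1}\Dom(f)$ we have $wz\in\alpha^{-1}\Dom(f)$ and hence $f(\alpha w z)=u\,s(wz)=u\,s'(z)$, similarly for $\beta$.

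The only real obstacle is the transitivity step, and specifically the need to extract a \emph{uniform} pair $c,d$ independent of $z$: this is exactly where the mge hypothesis is used, through the second clause of Lemma~\ref{lemma:mge_equivalent}, which guarantees not just that each $\tuple{s_1(z),s_2(z)}$ factors through some equaliser of $\tuple{v_1,u_2}$ but that they all factor through the \emph{same} most general one with matching quotients. Care is also needed with the degenerate case where $\alpha^{-1}\Dom(f)=\emptyset$, so that $up(\{v_1,u_2\})$ need not be nonempty and Lemma~\ref{lemma:mge_equivalent} does not apply; there the equivalence holds trivially. Everything else is bookkeeping with the cancellation axioms already available in an mge monoid.
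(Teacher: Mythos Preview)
Your proof is correct and follows essentially the same approach as the paper: both handle reflexivity and symmetry trivially, treat the empty-domain case separately, and for transitivity use the common value $f(\beta z)$ to see that the two ``$\beta$-side'' elements have an upper bound, then invoke Lemma~\ref{lemma:mge_equivalent} to extract a uniform most general equaliser through which all the $s$-values factor. The only cosmetic differences are that the paper orients the second witness as $\gamma\equiv_f\beta$ (so it compares $v_\alpha$ with $v_\gamma$ rather than your $v_1$ with $u_2$) and proves right-invariance for a single letter $a\in\Sigma$ rather than a word $w$; neither affects the substance.
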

 \begin{proof}
It is obvious that $\equiv_f$ is reflexive and symmetric. We prove that $\equiv_f$ is also
transitive. Let $\alpha\equiv_f \beta$ and $\beta\equiv_f \gamma$. We show that $\alpha\equiv_f \gamma$.
Since $\alpha^{-1}\Dom(f) = \beta^{-1}\Dom(f)$ and $\beta^{-1}\Dom(f)=\gamma^{-1}\Dom(f)$ we get
$\alpha^{-1}\Dom(f)=\gamma^{-1}\Dom(f)$. Consequently, if $\alpha^{-1}\Dom(f)=\emptyset$, then any triple
$\tuple{u,v,s}$ witnesses that $\alpha\equiv_f \gamma$. 

Thus we assume that $\alpha^{-1}\Dom(f)$ is not empty. Let us fix an element $z_0\in \alpha^{-1}\Dom(f)$.
Consider witnesses $\tuple{u_{\alpha},v_{\alpha},s_{\alpha}}$ for $\alpha\equiv_f \beta$ and $\tuple{u_{\gamma},v_{\gamma},s_{\gamma}}$ for $\gamma\equiv_f{\beta}$. It follows that:
\begin{equation*}
f(\beta z_0)= v_{\alpha} s_{\alpha}(z_0) = v_{\gamma} s_{\gamma}(z_0).
\end{equation*}
Hence $f(\beta z_0)$ is an upper bound for $\{v_{\alpha},v_{\gamma}\}$. By the RMGE-axiom, we have that
$v=v_{\alpha}\vee v_{\gamma}$ is defined. Let $m_{\alpha}=\frac{v}{v_{\alpha}}$ and $m_{\gamma}=\frac{v}{v_{\gamma}}$.
Now, since for all $z\in \beta^{-1} \Dom(f)$ we have:
\begin{equation*}
f(\beta z) =v_{\alpha} s_{\alpha}(z) = v_{\gamma} s_{\gamma}(z)
\end{equation*}
by Lemma~\ref{lemma:mge_equivalent} we have that $\frac{s_{\alpha}(z)}{m_{\alpha}}=\frac{s_{\gamma}(z)}{m_{\gamma}}$.
Let $\hat{s}(z)=\frac{s_{\alpha}(z)}{m_{\alpha}}$ for all $z\in \alpha^{-1}\Dom(f)$. With this remark it is straightforward to see that
$\tuple{u_{\alpha}m_{\alpha},u_{\gamma}m_{\gamma},\hat{s}}$ is a witness for $\alpha\equiv_f \gamma$. Indeed:
\begin{eqnarray*}
f(\alpha z) &=&u_{\alpha} s_{\alpha}(z)=u_{\alpha} m_{\alpha}\frac{s_{\alpha}(z)}{m_{\alpha}}=u_{\alpha} m_{\alpha} \hat{s}(z)\\
f(\gamma z) & = & u_{\gamma}s_{\gamma}(z) = u_{\gamma} m_{\gamma}\frac{s_{\gamma}(z)}{m_{\gamma}}=u_{\gamma} m_{\gamma}\frac{s_{\alpha}(z)}{m_{\alpha}}=u_{\gamma} m_{\gamma} \hat{s}(z)
\end{eqnarray*}
for any $z\in \alpha^{-1}\Dom(f)=\beta^{-1}\Dom(f)=\gamma^{-1}\Dom(f)$.

This proves that $\equiv_f$ is an equivalence relation. Next, we show that it is right invariant.
Let $\alpha\equiv_f \beta$ and $a\in \Sigma$. It is obvious that:
\begin{equation*}
(\alpha a)^{-1} \Dom(f) = a^{-1} \alpha^{-1}\Dom(f) = a^{-1} \beta^{-1}\Dom(f)=(\beta a)^{-1}\Dom(f).
\end{equation*}
Again, if $\alpha^{-1}\Dom(f)=\emptyset$, then $(\alpha a)^{-1}\Dom(f)=\emptyset$ and we are done.
Alternatively, consider a witness $\tuple{u,v,s}$ for $\alpha\equiv_f \beta$. We set $s'(z) =s(az)$ for $z\in \Sigma^*$
and prove that $\tuple{u,v,s'}$ is a witness for $\alpha a\equiv_f \beta a$. Indeed, let $z\in (\alpha a)^{-1}\Dom(f)$.
Thus, $az\in \alpha^{-1}\Dom(f)$ and therefore:
\begin{eqnarray*}
f(\alpha az) &= & u s(az) = u s'(z)\\
f(\beta az) & = & v s(az) = v s'(z)
\end{eqnarray*}
which concludes the proof. 
 \end{proof}
 
In the sequel, we shall delve into the proof of Theorem~\ref{th:Myhill-Nerode}. We start by its easy part. Specifically:
\begin{lemma}\label{lemma:NM-easy}
Let ${\cal M}$ be an mge monoid and ${\cal T}=\tuple{\Sigma,{\cal M},Q,i,F,\delta,\lambda,\iota,\Psi}$ be a complete (sub)sequential transducer. If $f={\cal O}_{\cal T}$ then $ind(\equiv_f)\le |Q|$.
\end{lemma}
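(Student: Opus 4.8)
The plan is to show that two words leading to the same state of $\mathcal{T}$ are $\equiv_f$-equivalent; since $\mathcal{T}$ is complete, $\delta^*(i,\alpha)$ is defined for every $\alpha\in\Sigma^*$, so this immediately gives $ind(\equiv_f)\le|Q|$. Concretely, I would first record, for each state $q\in Q$, the natural candidate data: set $\mathcal{D}_{\mathcal{T}}(q)=\Dom(\mathcal{O}^{(q)}_{\mathcal{T}})$ and, for $z\in\mathcal{D}_{\mathcal{T}}(q)$, set $s_q(z)=\lambda^*(q,z)\Psi(\delta^*(q,z))$, the output produced by $\mathcal{T}$ when run from $q$ on $z$ (including the final output). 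The key observation is that for any $\alpha$ with $\delta^*(i,\alpha)=q$ we have $\alpha^{-1}\Dom(f)=\mathcal{D}_{\mathcal{T}}(q)$ (because $\mathcal{T}$ is deterministic, the path from $i$ on $\alpha z$ is the $\alpha$-path followed by the $q$-path on $z$), and $f(\alpha z)=\iota\,\lambda^*(i,\alpha)\,s_q(z)$, i.e.\ condition~2$'$ of the witness characterisation holds with $u_\alpha=\iota\,\lambda^*(i,\alpha)$ and the state-dependent function $s_q$.

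The main step is then: if $\delta^*(i,\alpha)=\delta^*(i,\beta)=q$, then $\alpha\equiv_f\beta$. Indeed both $\alpha^{-1}\Dom(f)$ and $\beta^{-1}\Dom(f)$ equal $\mathcal{D}_{\mathcal{T}}(q)$, giving condition~1; and the triple $\tuple{\iota\,\lambda^*(i,\alpha),\,\iota\,\lambda^*(i,\beta),\,s_q}$ is a witness for $\alpha\equiv_f\beta$ by the factorisation $f(\alpha z)=(\iota\,\lambda^*(i,\alpha))\,s_q(z)$ and $f(\beta z)=(\iota\,\lambda^*(i,\beta))\,s_q(z)$ for all $z\in\mathcal{D}_{\mathcal{T}}(q)$. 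Note this direction does not even use the mge hypotheses or cancellation; it is purely the determinism of $\mathcal{T}$.

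Finally, since $\mathcal{T}$ is complete, the map $\alpha\mapsto\delta^*(i,\alpha)$ is a total function $\Sigma^*\to Q$, and by the above it factors through the quotient $\Sigma^*/\!\equiv_f$. Hence $ind(\equiv_f)\le|Q|$. I expect no real obstacle here: the only points requiring a line of care are (a) checking $\alpha^{-1}\Dom(f)=\mathcal{D}_{\mathcal{T}}(q)$, which follows from the remark that $\Delta^*$ is the graph of a function so the run of $\mathcal{T}$ on $\alpha z$ decomposes uniquely through $q$; and (b) that $f$ being represented by $\mathcal{T}$ means $f(\gamma)=\iota\,\lambda^*(i,\gamma)\,\Psi(\delta^*(i,\gamma))$ whenever $\delta^*(i,\gamma)\in F$, with $\gamma\notin\Dom(f)$ otherwise. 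Both are routine consequences of the definitions of $\delta^*$, $\lambda^*$, and $\mathcal{O}_{\mathcal{T}}$ recalled earlier; the hardest part is merely bookkeeping the initial and final output functions correctly.
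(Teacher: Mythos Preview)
Your proposal is correct and follows essentially the same route as the paper: define the state-equivalence $\alpha\sim_{\mathcal T}\beta\iff\delta^*(i,\alpha)=\delta^*(i,\beta)$, and for $\delta^*(i,\alpha)=\delta^*(i,\beta)=q$ exhibit the witness $\tuple{\iota\,\lambda^*(i,\alpha),\,\iota\,\lambda^*(i,\beta),\,s_q}$ with $s_q(z)=\lambda^*(q,z)\Psi(\delta^*(q,z))$, concluding $ind(\equiv_f)\le ind(\sim_{\mathcal T})\le|Q|$. Your observation that this direction does not use the mge axioms is also accurate; the paper's proof likewise makes no appeal to them.
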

\begin{proof}
Let us define $\sim_{\cal T}\subseteq \Sigma^*\times \Sigma^*$ as:
\begin{equation*}
\alpha\sim_{\cal T} \beta \iff \delta^*(i,\alpha)=\delta^*(i,\beta).
\end{equation*}
Since $\delta$ is a total function, $\sim_{\cal T}$ is reflexive. The symmetry and transitivity are apparent. Therefore $\sim_{\cal T}$ is an equivalence relation. We prove that $\sim_{\cal T}\subseteq \equiv_f$. This would imply that $ind(\equiv_f)\le ind(\sim_{\cal T})$. Since, obviously, $|\sim_{\cal T}|\le |Q|$ the result would follow.

To complete the proof, we show that if $\alpha \sim_{\cal T}\beta$ then $\alpha\equiv_f \beta$. Let $p\in Q$ be such that:
\begin{equation*}
p=\delta^*(i,\alpha)=\delta^*(i,\beta).
\end{equation*}
Since $\delta$ is a function, it is clear that $\alpha^{-1}\Dom(f) =\beta^{-1}\Dom(f)$ and more specifically we have:
\begin{equation*}
\alpha^{-1}\Dom(f) =\beta^{-1}\Dom(f) = \{\gamma\in \Sigma^*\,|\, \delta^*(p,\gamma)\in F\}
\end{equation*}
Next we introduce:
\begin{eqnarray*}
u & = & \iota \circ \lambda^*(i,\alpha)\\
v & = & \iota \circ \lambda^*(i,\beta) \\
s(\gamma) & = & \lambda^*(p,\gamma) \Psi(\delta^*(p,\gamma)).
\end{eqnarray*}
We claim that $\tuple{u,v,s}$ is a witness for $\alpha\equiv_f \beta$. Indeed let $\gamma\in \alpha^{-1}\Dom(f) $. From above we have that this is equivalent to $\delta^*(p,\gamma)\in F$. Furthermore we have:
\begin{eqnarray*}
f(\alpha\gamma) &=&{\cal O}_{\cal T}(\alpha\gamma)=\iota \circ \lambda^*(i,\alpha\gamma)\Psi(\delta^*(i,\alpha\gamma))\\
f(\beta\gamma) &=&{\cal O}_{\cal T}(\beta\gamma)=\iota \circ \lambda^*(i,\beta\gamma)\Psi(\delta^*(i,\beta\gamma))
\end{eqnarray*}
However, it is obvious that:
\begin{eqnarray*}
\iota \circ \lambda^*(i,\alpha\gamma)\Psi(\delta^*(i,\alpha\gamma)) &= & \iota \circ \lambda^*(i,\alpha)\circ \lambda^*(p,\gamma)\circ \Psi(\delta^*(p,\gamma)) = u s(\gamma) \\
\iota \circ \lambda^*(i,\beta\gamma)\Psi(\delta^*(i,\beta\gamma)) &= & \iota \circ \lambda^*(i,\beta)\circ \lambda^*(p,\gamma)\circ \Psi(\delta^*(p,\gamma)) = v s(\gamma),
\end{eqnarray*}
which concludes the proof. 
\end{proof}

The rest of this section is devoted to the non-trivial part of Theorem~\ref{th:Myhill-Nerode}. Specifically, we want to show that whenever $\equiv_f$ has a finite index we can construct a (sub)sequential transducer with $ind(\equiv_f)$ states recognising $f$.
The problem here arises from the fact that we have no explicit information about the output language, $\Codom(f)$. Indeed, the functions $s$ that 
determine the witnesses can be arbitrary and it is by far not obvious that even their range should be regular over ${\cal M}$. It is due to the axioms 
GCLF and WLP that we are going to extract some information about the witnesses and use it to define the desired (sub)sequential transducer.
It is interesting to note that in the absence of the LSL-axiom, we also do not have infimums for every pair of monoid elements. Thus, we cannot claim that for every regular language over ${\cal M}$ possesses an infimum. Consequently, the classical idea that the witnesses $\tuple{u,v,s}$ for $\alpha\equiv_f \beta$ should/can be selected as:
\begin{eqnarray*}
u&=&\inf\{ f(\alpha z) \,|\, z\in \alpha^{-1}\Dom(f)\}\\
v&=&\inf\{ f(\beta z)   \,|\, z\in \beta^{-1}\Dom(f)\}
\end{eqnarray*}
fails in the very beginning. Leave alone the fact that these two sets should not be regular.

We start our study of the problem by showing the following important implication of the WLP-axiom.
\begin{lemma}\label{lemma:finite_uniformisation}
Let $f:\Sigma^*\rightarrow {\cal M}$ be a function in an mge-monoid satisfying the GCLF- and WLP-axioms. Let $\alpha_1,\dots,\alpha_N$ be pairwise equivalent
with respect to $\equiv_f$. Then, there are elements $v_1,\dots,v_N\in {\cal M}$ and $s:\Sigma^*\rightarrow {\cal M}$ such that:
\begin{enumerate}
 \item $\tuple{v_i,v_j,s}$ is a witness for $\alpha_i\equiv_f \alpha_j$.
 \item if $\alpha_i\le_{\Sigma^*}\alpha_j$ then $v_i\le_M v_j$.
 \end{enumerate}
\end{lemma}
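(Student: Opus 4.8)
The plan is to build the common witness function $s$ and the elements $v_i$ incrementally, processing the words $\alpha_1,\dots,\alpha_N$ in an order consistent with the prefix order $\le_{\Sigma^*}$, and at each step using the WLP-axiom to control the ``overshoot'' introduced by words whose continuations force the output prefix further and further. First I would fix, for each pair $i,j$, a witness $\tuple{u_i^{(j)},u_j^{(i)},s^{(ij)}}$ for $\alpha_i\equiv_f\alpha_j$ coming from Definition~\ref{MN-Relation}; the problem is that these local witnesses need not glue together, and taking infimums of the (possibly non-regular) sets $\{f(\alpha_i z)\mid z\in\alpha_i^{-1}\Dom(f)\}$ is exactly what is \emph{not} available here. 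So instead I would work relative to a fixed reference, say $\alpha_1$, and try to find a single $s$ together with $v_1,\dots,v_N$ with $f(\alpha_i z)=v_i\,s(z)$ for all $z$ in the common domain $D:=\alpha_1^{-1}\Dom(f)$.

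The heart of the argument is the following reduction. Pick a specific $z_0\in D$ (the case $D=\emptyset$ is trivial, any choice of $v_i$'s and $s$ works). For each $i$, by the transitivity machinery in the proof of Lemma~\ref{lemma:NM-equivalence} applied along $\alpha_1\equiv_f\alpha_i$, all the candidate ``left parts'' $f(\alpha_i z_0)$ are compatible, so $\bigvee_i f(\alpha_i z_0)$ exists; more usefully, using Lemma~\ref{lemma:mge_equivalent} and Lemma~\ref{lemma:mge} one extracts from the pairwise witnesses a \emph{common} right factor: there is a single function $s$ and elements $v_i$ with $f(\alpha_i z)=v_i\,s(z)$ for all $z\in D$ and all $i$. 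This already gives property~1, that $\tuple{v_i,v_j,s}$ is a witness for $\alpha_i\equiv_f\alpha_j$. The delicate point is that the $v_i$ produced this way are only determined up to a common right multiple, and there is no reason a priori that the monotonicity $\alpha_i\le_{\Sigma^*}\alpha_j\Rightarrow v_i\le_M v_j$ holds: concatenating $\alpha_i$ with the word $w$ such that $\alpha_j=\alpha_i w$ pushes the output by $\lambda^*$-like increments, and without cancellation control the ``common suffix'' $s$ may have absorbed a prefix that should have stayed with $v_j$ but not with $v_i$.

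This is where the WLP-axiom enters, and I expect it to be the main obstacle. Consider a chain $\alpha_i = \alpha_i w = \alpha_j$ in the prefix order; iterating $w$ gives, via the relation $f(\alpha_i w^{n+1} z)=v_{?}\cdots$, an infinite sequence of the form $u\,a_{n+1}=v\,a_n$ where $u,v$ are the relevant left-parts and the $a_n$ record the growing residuals. WLP-axiom furnishes an $x$ with $ux\le_M vx$ and $x\le_M a_n$ for all $n$; dividing through by $x$ (using LC-axiom) lets me \emph{renormalise} the witness: replace $s$ by $x^{-1}$-shifted version $\frac{s(\cdot)}{x}$ wherever it is legal, and correspondingly multiply the $v_i$ on the right by $x$. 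After this renormalisation the pair $\tuple{v_i,v_j}$ satisfies $v_i\le_M v_j$ for that particular prefix relation, and the GCLF-axiom (in the form used in Remark~\ref{prevRes} and the onward-transducer construction) guarantees the adjustment made for one chain does not destroy the inequality already established for an incomparable pair, because the corrections only ever \emph{lengthen} the $v_i$'s and the common $s$ stays a common right factor. I would carry this out by induction on $N$, or equivalently on a linear extension $\alpha_{\pi(1)},\dots,\alpha_{\pi(N)}$ of the prefix order: having $v_1,\dots,v_{k}$ and $s$ with the two properties restricted to the first $k$ words, I add $\alpha_{\pi(k+1)}$, form the (possibly new) equaliser against the $v$'s of its prefixes among $\alpha_1,\dots,\alpha_k$ using Lemma~\ref{lemma:mge_equivalent}, apply WLP-axiom to kill the limit overshoot, and update $s$ and all previous $v_i$ by the common right factor thus extracted — the latter update preserving both properties. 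Closing this induction, and checking carefully that ``updating $s$ by a common right factor'' is always well-defined (all the required quotients exist by LC-axiom because $s(z)$ dominates that factor — this is precisely the $x\le_M a_n$ clause of WLP), is the real content of the proof.
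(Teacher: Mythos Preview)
Your first phase --- extracting a single common witness $\tuple{v_1',\dots,v_N',\hat s}$ from the pairwise witnesses via $\bigvee_i u_i$ and Lemma~\ref{lemma:mge_equivalent} --- is exactly what the paper does, and it is correct. The gap is in the second phase.

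The difficulty you correctly locate is that right-multiplying all the $v_i'$ by a common element $x$ need not preserve an inequality $v_i'\le_M v_j'$ already obtained: from $v_j'=v_i' c$ one would need $v_i' x\le_M v_j' x=v_i' c x$, i.e.\ $x\le_M cx$, and in a non-commutative monoid this is not automatic. Your inductive scheme runs straight into this: having established $v_i\le_M v_j$ for an earlier prefix pair and now multiplying every $v_\ell$ by the $x$ produced for a \emph{new} pair, you still owe the argument that $v_i x\le_M v_j x$. Your sentence ``the GCLF-axiom guarantees the adjustment made for one chain does not destroy the inequality already established for an incomparable pair'' does not address this --- the problematic pairs are precisely the \emph{comparable} ones sitting on the same chain, and you do not say concretely how GCLF applies. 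As stated, the induction may never close: fixing one pair can break another, forcing another update, and so on.

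The paper avoids this regress by treating all prefix pairs at once rather than inductively. For every $\tuple{i,j}\in P$ it obtains an $x_{i,j}$ via WLP, then takes the single supremum $X=\bigvee_{\tuple{i,j}\in P} x_{i,j}$ (which exists since every $x_{i,j}\le_M \hat s(\gamma_0)$) and sets $v_i=v_i'X$, $s=\hat s/X$. The step you are missing is that verifying $v_i'X\le_M v_j'X$ requires a \emph{second} application of WLP: writing $v_j'x_{i,j}=v_i'x_{i,j}t$ and $b_k=s(\beta^k\gamma_0)$, one gets $\frac{X}{x_{i,j}}\,b_{k+1}=t\,\frac{X}{x_{i,j}}\,b_k$, so WLP produces $y$ with $\frac{X}{x_{i,j}}\,y\le_M t\,\frac{X}{x_{i,j}}\,y$. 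Only then does GCLF enter, in the precise form ``$m\le_M v$ and $m\le_M tv\Rightarrow m\le_M tm$'' with $m=\frac{X}{x_{i,j}}$ and $v=\frac{X}{x_{i,j}}y$, yielding $\frac{X}{x_{i,j}}\le_M t\,\frac{X}{x_{i,j}}$ and hence $v_i\le_M v_j$. Your sketch has the right ingredients but not this mechanism; in particular the role of GCLF is not the vague ``preserve incomparable pairs'' but this specific cancellation of the auxiliary $y$ after the second WLP call.
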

\begin{proof}
The claim is trivial if $\alpha_1^{-1}\Dom(f)=\emptyset$. Alternatively, let us fix an element $\gamma_0\in \alpha_1^{-1}\Dom(f)$. Since $\alpha_1\equiv_f \alpha_i$ for $i\le N$ there is a witness $\tuple{u_i,w_i,s_i}$ for $\alpha_1\equiv_f \alpha_i$. Now we have that:
\begin{equation*}
f(\alpha_1 \gamma_0) = u_i s_i(\gamma_0)
\end{equation*}
for each $i\le N$. Therefore $u_is_i(\gamma_0) \in up(\{u_1,\dots,u_n\})$. Let $u=\bigvee_{i=1}^n u_i$. It follows that for each $\gamma\in \alpha_1^{-1}\Dom(f)$, $u_i\le_M u\le_M u_is_i(\gamma)$. Hence we can define $\hat{s}:\alpha_1^{-1}\Dom(f)\rightarrow M$ as:
\begin{equation*}
\hat{s}(\gamma) = \frac{u_1s_1(\gamma)}{u}=\frac{u_is_i(\gamma)}{u}=\frac{s_i(\gamma)}{\frac{u}{u_i}}.
\end{equation*}
We set $v_i'=w_i \circ \frac{u}{u_i}$. Now it is clear that:
\begin{equation*}
f(\alpha_i\gamma) = w_i s_i(\gamma) = w_i \frac{u}{u_i}\frac{s_i(\gamma)}{\frac{u}{u_i}}=v'_i \hat{s}(\gamma).
\end{equation*}

So far we have that $\tuple{v'_i,v_j',\hat{s}}$ satisfy the first property. We use the WLP-axiom in order to modify these witnesses so that they satisfy the second property as well. To this end, let:
\begin{equation*}
P=\{\tuple{i,j}\,|\, \alpha_i\le_{\Sigma^*}\alpha_j\}.
\end{equation*}
In words, $P$ is the set of all pairs $\tuple{i,j}$ such that $\alpha_i$ is a prefix of $\alpha_j$.
Let us consider an element $\tuple{i,j}\in P$.
Since $\alpha_i$ is a prefix of $\alpha_j$ there is some $\beta$ with $\alpha_i\beta=\alpha_j$. By the right invariance of $\equiv_f$ and since $\alpha_i\equiv_f \alpha_j$, we get that $\alpha_i\beta^k\equiv_f \alpha_i$ for each natural number $k$. In particular, for every $\gamma\in \alpha_i^{-1}\Dom(f)$ we have $\alpha_i\beta^k\gamma\in \Dom(f)$ for each $k\in \mathbb{N}$ and therefore $a_k=\hat{s}(\beta^k\gamma)$ is well-defined.
Now it is easy to see that:
\begin{equation*}
v'_i \hat{s}(\beta^{k+1}\gamma) = f(\alpha_i\beta^{k+1}\gamma)=v'_j \hat{s}(\beta^k\gamma).
\end{equation*}
This shows that for every $k$ it holds $v'_i a_{k+1}=v'_j a_k$. Thus, by the WLP-axiom we conclude that there is some $x_{i,j}\in M$ such that:
\begin{equation*}
v'_i x_{i,j}\le_M v'_j x_{i,j}\text{ and } x_{i,j} \le_M \hat{s}(\gamma) \text{ for } \gamma\in \alpha_i^{-1}\Dom(f).
\end{equation*}
Consequently for every $\gamma\in \alpha_1^{-1}\Dom(f)$ it holds:
\begin{equation*}
\hat{s}(\gamma)\in up\{x_{i,j}\,|\, \tuple{i,j}\in P\}
 \end{equation*}
By the RMGE-axiom and Lemma~\ref{lemma:mge}, $X=\bigvee_{\tuple{i,j}\in P} x_{i,j}$ is well-defined and $X\le_M \hat{s}(\gamma)$ for all $\gamma\in \alpha_i^{-1}\Dom(f)$. Therefore we can define $s:\alpha_1^{-1}\Dom(f)\rightarrow M$ as:
 \begin{equation*}
 s(\gamma) = \frac{\hat{s}(\gamma)}{X} \text{ for } \gamma\in \alpha_i^{-1}\Dom(f).
 \end{equation*}
Finally, we set $v_i=v_i'X$ for $i\le N$. A straightforward computation shows that $\tuple{v_i,v_j,s}$ is a witness for $\alpha_i\equiv_f \alpha_j$.
 
 It remains to be shown that $\alpha_i\le_{\Sigma^*}\alpha_j$ always implies $v_i\le_M v_j$. Let $\alpha_j=\alpha_i\beta$. Hence $\tuple{i,j}\in P$ and  we have that $x_{i,j}\le_M X$ and furthermore, by $v'_ix_{i,j}\le_M v_j'x_{i,j}$, $v'_i x_{i,j} t =v'_j x_{i,j}$ for some $t$.
 Next, considering $f(\alpha_i\beta^{k+1}\gamma_0)$, we get:
 \begin{eqnarray*}
 v'_i x_{i,j} \frac{X}{x_{i,j}} s(\beta^{k+1}\gamma_0) &=&f(\alpha_i\beta^{k+1}\gamma_0)\\
 &=& v'_j x_{i,j} \frac{X}{x_{i,j}} s(\beta^k\gamma_0)\\
 &=&v_i'x_{i,j}t\frac{X}{x_{i,j}} s(\beta^k\gamma_0).
 \end{eqnarray*}
By the LC-axiom, we conclude that $t \frac{X}{x_{i,j}}s(\beta^k\gamma_0) =  \frac{X}{x_{i,j}} s(\beta^{k+1}\gamma_0)$. 
Setting $b_k=s(\beta^k\gamma_0) $ this is equivalent to: $$\frac{X}{x_{i,j}} b_{k+1}= t \frac{X}{x_{i,j}}b_k \text{ for } k\in \mathbb{N}$$ and by the WLP-axiom,
 we conclude that there is some $y$ with $\frac{X}{x_{i,j}} y \le_M t\frac{X}{x_{i,j}} y$.
 Therefore $\frac{X}{x_{i,j}}\le_M t\frac{X}{x_{i,j}} y$ and by the GCLF-axiom we deduce that $\frac{X}{x_{i,j}}\le_M t\frac{X}{x_{i,j}}$.
 Therefore $$v_i = v_i' x_{i,j} \frac{X}{x_{i,j}}\le v_i'x_{i,j}t \frac{X}{x_{i,j}}= v_j'x_{i,j}\frac{X}{x_{i,j}}=v_j$$ as required.
 \end{proof}

\begin{lemma}
Let ${\cal M}$ be an mge-monoid satisfying the GCLF and WLP-axioms. Let $f:\Sigma^*\rightarrow {\cal M}$ be a function with $ind(\equiv_f)=n$. Then there is a complete (sub)sequential transducer with $n$ states that represents $f$.
\end{lemma}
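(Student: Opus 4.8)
The plan is to let the states of the transducer be the $\equiv_f$-classes and to read off all outputs from Lemma~\ref{lemma:finite_uniformisation}. Write $C_1,\dots,C_n$ for the classes, with $C_1=[\varepsilon]$, and pick a representative $r_i\in C_i$ in each, with $r_1=\varepsilon$; it is convenient, though not essential, to take $r_i$ length-lexicographically least, so that $\{r_1,\dots,r_n\}$ is prefix-closed. Put ${\cal T}=\tuple{\Sigma,{\cal M},\{C_1,\dots,C_n\},C_1,F,\delta,\lambda,\iota,\Psi}$ with $\delta(C_i,a):=[r_ia]$ and $C_i\in F\iff r_i\in\Dom(f)$; since $\delta$ is a total function, ${\cal T}$ is complete. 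By right-invariance of $\equiv_f$ (Lemma~\ref{lemma:NM-equivalence}) one gets $\delta^*(C_1,\alpha)=[\alpha]$, and then $\alpha^{-1}\Dom(f)=r_{[\alpha]}^{-1}\Dom(f)$ yields $\delta^*(C_1,\alpha)\in F\iff\alpha\in\Dom(f)$, so $\Dom({\cal O}_{\cal T})=\Dom(f)$. The number of states is $ind(\equiv_f)=n$, as required.

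For the outputs, for each class $C_j$ consider the finite set $W_j:=\{r_j\}\cup\{r_ia\mid\delta(C_i,a)=C_j\}\subseteq C_j$ of pairwise $\equiv_f$-equivalent words and apply Lemma~\ref{lemma:finite_uniformisation}: this gives elements $v_w\in M$ for $w\in W_j$ and one function $s_j:\Sigma^*\to{\cal M}$ with $f(wz)=v_w\,s_j(z)$ for all $w\in W_j$, $z\in r_j^{-1}\Dom(f)$, and with $v_w\le_M v_{w'}$ whenever $w\le_{\Sigma^*}w'$ in $W_j$. (A class with $r_j^{-1}\Dom(f)=\emptyset$ is a single sink; there all outputs are set to $e$ and nothing needs to be checked.) I would then set $\iota:=v_\varepsilon$ (with $\varepsilon=r_1\in W_1$), $\Psi(C_i):=s_i(\varepsilon)=\tfrac{f(r_i)}{v_{r_i}}$ for $C_i\in F$, and $\lambda(C_i,a):=\tfrac{v_{r_ia}}{v_{r_i}}$ for every transition, where $v_{r_i}$ is taken from the application to $W_i$ and $v_{r_ia}$ from the application to $W_{[r_ia]}$.

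The correctness argument is then a clean induction on $|\alpha|$ showing $f(\alpha z)=p(\alpha)\,s_{[\alpha]}(z)$ for all $z\in\alpha^{-1}\Dom(f)$, where $p(\alpha):=\iota\,\lambda^*(C_1,\alpha)$: the base case $\alpha=\varepsilon$ is the $W_1$-witness at $r_1=\varepsilon$, and the step $\alpha=\beta a$ with $[\beta]=C_i$, $[\alpha]=C_j$ unwinds, via the induction hypothesis and the LC-axiom, to the single identity $v_{r_i}\,s_i(az)=v_{r_ia}\,s_j(z)$, whose two sides both equal $f(r_ia\cdot z)$ precisely because $r_i\in W_i$ and $r_ia\in W_j$. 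Taking $z=\varepsilon$ gives ${\cal O}_{\cal T}(\alpha)=p(\alpha)\Psi([\alpha])=f(\alpha)$ for $\alpha\in\Dom(f)$, so ${\cal O}_{\cal T}=f$; together with Lemma~\ref{lemma:NM-easy} this also completes Theorem~\ref{th:Myhill-Nerode}.

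The one genuinely delicate point — the place where the GCLF- and WLP-axioms must be used beyond what is already bundled into Lemma~\ref{lemma:finite_uniformisation} — is checking that $\lambda(C_i,a)=\tfrac{v_{r_ia}}{v_{r_i}}$ makes sense, i.e. that $v_{r_i}\le_M v_{r_ia}$ for every transition $C_i\xrightarrow{a}C_j$. When $C_j=C_i$ this is immediate, since $r_i$ and $r_ia$ then lie in the same set $W_i$ with $r_i\le_{\Sigma^*}r_ia$, so property~2 of Lemma~\ref{lemma:finite_uniformisation} applies. When $C_j\neq C_i$ the two elements come from different applications of the lemma and only satisfy a priori $v_{r_i}\,s_i(az)=f(r_iaz)=v_{r_ia}\,s_j(z)$, i.e. they merely share a common right multiple. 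The remedy I would pursue is to synchronise the applications: process the classes in breadth-first order, and when treating $C_j$ replace the whole family of values produced for $W_j$ by its right translate by a $c$ with $c\le_M s_j(z)$ for all $z\in r_j^{-1}\Dom(f)$ (which leaves the witness identities intact), choosing $c$ via the RMGE-axiom as a supremum of the elements $\tfrac{v_{r_i}\vee v_{r_ia}}{v_{r_ia}}$ over the transitions into $C_j$, and then re-deriving the prefix-monotonicity of the translated family, again invoking the GCLF-axiom in the style of the proof of Lemma~\ref{lemma:finite_uniformisation}. Making this synchronisation work uniformly over all classes while preserving both conclusions of Lemma~\ref{lemma:finite_uniformisation} is the crux of the argument; everything else is bookkeeping.
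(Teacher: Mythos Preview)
Your high-level plan is sound and coincides with the paper's: states are the $\equiv_f$-classes, $\delta$ is the obvious map, and correctness of ${\cal O}_{\cal T}=f$ reduces to the single inequality $v_{r_i}\le_M v_{r_ia}$ for every transition. You also correctly isolate this inequality as the crux.

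The gap is in the last paragraph. Processing classes in breadth-first order and translating each $W_j$ by some $c_j$ does not close the argument, because the transition graph has cycles. Once you replace $v_{r_j}$ by $v_{r_j}c_j$, every \emph{outgoing} transition $C_j\to C_k$ now requires $v_{r_j}c_j\le_M v_{r_ja}c_k$; when $C_k$ was processed before $C_j$ (a back edge), $c_k$ is already frozen and nothing forces this. Iterating the translations is not obviously finite, and ``re-deriving the prefix-monotonicity of the translated family'' hides precisely this circularity rather than resolving it. The sets $W_j=\{r_j\}\cup\{r_ia\}$ are also too thin: apart from self-loops they contain no prefix-comparable pairs, so clause~2 of Lemma~\ref{lemma:finite_uniformisation} gives you nothing to work with.

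The paper breaks the circularity globally rather than locally. It applies Lemma~\ref{lemma:finite_uniformisation} to the much larger sets $A_j=\{\alpha\in C_j:\ |\alpha|\le 2n-1\}$, so that for every representative $\alpha_k$ and every $\beta$ with $|\beta|\le n$ the word $\alpha_k\beta$ already lies in the appropriate $A_j$. It then sets $M_j\in\sup E_j$, where $E_j$ collects the defects $\frac{v(\alpha_k)\vee v(\alpha_k\beta)}{v(\alpha_k\beta)}$ over all such $\beta$, and proves a pumping result (Lemma~\ref{lemma:cycle_free}): every defect along a path of length $\ge n$ is dominated by a supremum of two strictly shorter defects (this is where GCLF is really used). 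Hence $M_j\in\sup E_j'$ for the reduced set of defects along paths of length $<n$, and Lemma~\ref{lemma:cancellation} then yields $v(\alpha_i)M_i\le_M v(\alpha_ia)M_j$ directly, because extending each $\beta\in B_{k,i}'$ by the letter $a$ still lands in $B_{k,j}$. The length-$n$ cushion built into $A_j$ is exactly what absorbs the cycles your BFS scheme cannot handle; the output is then $\lambda(C_i,a)=\frac{v(\alpha_ia)M_j}{v(\alpha_i)M_i}$ rather than $\frac{v_{r_ia}}{v_{r_i}}$.
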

\begin{proof}
Let $C_1,C_2,\dots,C_n$ be the equivalence classes of $\equiv_f$. Let $\alpha_i\in C_i$
be a shortest element of the $i$-th class. Since $\equiv_f$ is right invariant, $|\alpha_i|\le n-1$ 
for each $i$. We set:
\begin{equation*}
A_i = \{\alpha\in C_i \,|\, |\alpha|\le 2n-1\}.
\end{equation*}
Since $2n-1\ge n$ for $n\ge 1$, we have that $\alpha_i\in A_i$. Furthermore, $A_i$ is finite, for $\Sigma$ is finite,
and by Lemma~\ref{lemma:finite_uniformisation} there are elements $v(\beta)\in {\cal M}$ for each $\beta\in A_i$ and
a function $s_i:\Sigma^*\rightarrow {\cal M}$ such that:
\begin{enumerate}
\item $\tuple{v(\alpha_i),v(\beta),s_i}$ is a witness for $\alpha_i\equiv_f \beta$.
\item if $\alpha\beta,\alpha\in A_i$ for some $\alpha,\beta$ then $v(\alpha)\le_M v(\alpha\beta)$.
\end{enumerate}
For $i,j\le n$ we let:
\begin{equation*}
B_{i,j} =\{\beta \,|\, \alpha_i \beta\in A_j\}.
\end{equation*}
Note that $B_{i,j}\subseteq A_j$ and since $A_j$ is finite, it follows that $B_{i,j}$ is also finite.
Let $j\le n$ be such that $\alpha_j^{-1}\Dom(f)\neq \emptyset$. Then, for any $i\le n$ and $\beta\in B_{i,j}$ we have:
\begin{equation*}
f(\alpha_i\beta \gamma) = v(\alpha_i) s_i(\beta\gamma) = v(\alpha_i\beta) s_j(\gamma).
\end{equation*} 
Thus, $f(\alpha_i\beta\gamma)$ is an upper bound for $v(\alpha_i)$ and $v(\alpha_i\beta)$. Hence:
\begin{equation*}
v(\alpha_i) \vee v(\alpha_i\beta) \text{ and } \frac{v(\alpha_i)\vee v(\alpha_i\beta)}{v(\alpha_i\beta)}
\end{equation*}
are well-defined. Furthermore, by the LC-axiom, we get that:
\begin{equation*}
\frac{v(\alpha_i)\vee v(\alpha_i\beta)}{v(\alpha_i\beta)} \le_M s_j(\gamma).
\end{equation*}
This allows us to consider the set:
\begin{equation*}
E_j = \left\{\frac{v(\alpha_i)\vee v(\alpha_i\beta)}{v(\alpha_i\beta)}\,|\, i\le n, \beta\in B_{i,j}\right\}.
\end{equation*}
By the above discussion we have that each of the elements in $E_j$ is well-defined and less than or equal to $s_j(\gamma)$.
Consequently $s_j(\gamma) \in up(E_j)$. Since $E_j$ is finite, as $B_{i,j}$ are finite and $i\le n$, by the RMGE-axiom we get that $\sup E_j\neq \emptyset$.
We fix $M_j \in \sup E_j$ for each $j$ such that $\alpha_j^{-1}\Dom(f)\neq\emptyset$.

To conclude the proof we will need the following:
\begin{lemma}\label{lemma:cancellation}
Let ${\cal M}$ be an mge-monoid with GCLF- and WLP-axioms. Let $i,j\le n$ and $a\in \Sigma$ be such that $\alpha_i a\equiv_f\alpha_j$ and $\alpha_j^{-1}\Dom(f)\neq \emptyset$. Then:
\begin{equation*}
v(\alpha_i) M_i \le v(\alpha_i a) M_j.
\end{equation*}
\end{lemma}
Assume that Lemma~\ref{lemma:cancellation} holds. Without loss of generality we assume that $C_1=[\varepsilon]$ and construct a (sub)sequential transducer:
\begin{eqnarray*}
{\cal T} &= & \tuple{\Sigma, {\cal M},{\cal C},C_1,F,v(\varepsilon)M_1,\delta,\lambda,\psi}\\
{\cal C} & = & \{C_i \,|\, i\le n\}\\
F & = & \{C_i \,|\, \alpha_i\in \Dom(f) \} \\
\delta(C_i,a) & = & C_j \iff \alpha_i a\in C_j \\
\lambda(C_i,a) & = &\begin{cases} 
e \text{ if } (\alpha_ia)^{-1}\Dom(f)= \emptyset \\
\frac{ v(\alpha_i a) M_j}{v(\alpha_i) M_i}, \text{ where }\delta(C_i,a)=C_j,\text{ else}
\end{cases}\\
\psi(C_i) & = & \frac{s_i(\varepsilon)}{M_i}.
\end{eqnarray*}
Note that $M_i\le_M s_i(\gamma)$ for any $\gamma\in \alpha_i^{-1}\Dom(f)$. Since $\varepsilon\in \alpha_i^{-1}\Dom(f)$ for each final state $C_i$ we get that $M_i\le_M s_i(\gamma)$ for a final state $C_i$. Hence the function $\Psi$ is well-defined.
By the same argument we can put $s_i(\gamma) = M_i \hat{s}_i(\gamma)$ for $\gamma\in \alpha_i^{-1}\Dom(f)$.
Let $C_i,C_j$ and $a\in \Sigma$ be such that $\delta(C_i,a)=C_j$. Let $\gamma\in \alpha_j^{-1}\Dom(f)$ be arbitrary. Then:
\begin{eqnarray*}
f(\alpha_i a \gamma) &=& v(\alpha_i) s_i( a\gamma) = v(\alpha_i) M_i \hat{s}_i(a\gamma) \\
f(\alpha_i a\gamma) &=& v(\alpha_i a) s_j(\gamma) = v(\alpha_i a) M_j \hat{s}_j(\gamma).
\end{eqnarray*}
Applying Lemma~\ref{lemma:cancellation} and the LC-axiom we get that $\hat{s}_i(a\gamma)=\frac{v(\alpha_i a)M_j}{v(\alpha_i) M_i} \hat{s}_j(\gamma)$. Now a straightforward induction shows that for any $\alpha\gamma\in \Dom(f)$ it holds:
\begin{equation*}
f(\alpha\gamma) = v(\varepsilon)M_1 \lambda^*(C_1,\alpha) \circ \hat{s}_j(\gamma)
\end{equation*}
where $C_j=\delta^*(C_1,\alpha)$. In particular, if $\gamma=\varepsilon$ we get:
\begin{equation*}
f(\alpha)=f(\alpha\gamma) = v(\varepsilon)M_1 \lambda^*(C_1,\alpha) \circ \hat{s}_j(\varepsilon) = v(\varepsilon)M_1 \lambda^*(C_1,\alpha)\Psi(C_j) = f_{\cal T}(\alpha).
\end{equation*}
The fact that the domains of $f$ and ${\cal T}$ coincides is a routine. 
\end{proof}

To complete the proof of Theorem~\ref{th:Myhill-Nerode} we need to establish the truthfulness of Lemma~\ref{lemma:cancellation}.
First we state the following useful observation:
\begin{lemma}\label{lemma:cycle_free}
Let ${\cal M}$ be an mge-monoid with GCLF- and WLP-axioms. Let $\beta=\beta_1\beta_2\beta_3$ be of length $|\beta|\le n$. Let $\alpha_k\beta_1\equiv_f \alpha_i$,
$\alpha_i\beta_2\equiv_f \alpha_i$, and $\alpha_i\beta_3\equiv_f \alpha_j$ with $\alpha_j^{-1}\Dom(f)\neq \emptyset$. 
Then:
\begin{enumerate}
\item $b_{i,j}=\frac{v(\alpha_i)\vee v(\alpha_i\beta_3)}{v(\alpha_i\beta_3)}$ and $b_{k,j}=\frac{v(\alpha_k)\vee v(\alpha_k\beta_1\beta_3)}{v(\alpha_k\beta_1\beta_3)}$ are well defined.
\item $up(\{b_{i,j},b_{k,j}\})\neq\emptyset$.
\item for any $b\in up(\{b_{i,j},b_{k,j}\})$ it holds $v(\alpha_k)\le_M v(\alpha_k\beta) b$.
\end{enumerate}
\end{lemma}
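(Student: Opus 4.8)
The plan is to treat the three claims in turn: the first two are bookkeeping, and the third is where the real work lies.

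First I would record where the relevant words sit. Since the chosen representatives obey $|\alpha_l|\le n-1$ and $|\beta|\le n$, every word obtained by appending a factor of $\beta$ to some $\alpha_l$ has length at most $2n-1$; identifying its $\equiv_f$-class by right invariance, such a word lies in the corresponding set $A_\bullet$. Concretely $\alpha_k\beta_1,\alpha_k\beta_1\beta_2,\alpha_i,\alpha_i\beta_2\in A_i$, while $\alpha_i\beta_3,\alpha_i\beta_2\beta_3,\alpha_k\beta_1\beta_3,\alpha_k\beta\in A_j$, and $\alpha_k\in A_k$. So for each of them the value $v(\cdot)$ and the function $s_i$, $s_j$ or $s_k$ of Lemma~\ref{lemma:finite_uniformisation} are available, and we may use the witness identities $f(\mu z)=v(\mu)s_l(z)$ for $\mu\in A_l$, $z\in\alpha_l^{-1}\Dom(f)$, together with the monotonicity $v(\mu)\le_M v(\mu\delta)$ whenever $\mu,\mu\delta\in A_l$.

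For items 1 and 2, fix $\gamma\in\alpha_j^{-1}\Dom(f)$ (nonempty by hypothesis); then $\beta_3\gamma\in\alpha_i^{-1}\Dom(f)$ and $\beta_1\beta_3\gamma\in\alpha_k^{-1}\Dom(f)$. Comparing the two witness expansions $v(\alpha_i)s_i(\beta_3\gamma)=f(\alpha_i\beta_3\gamma)=v(\alpha_i\beta_3)s_j(\gamma)$ exhibits $f(\alpha_i\beta_3\gamma)$ as a common upper bound of $v(\alpha_i)$ and $v(\alpha_i\beta_3)$, so by the RMGE-axiom $v(\alpha_i)\vee v(\alpha_i\beta_3)$ exists, and since $v(\alpha_i\beta_3)\le_M v(\alpha_i)\vee v(\alpha_i\beta_3)$ the element $b_{i,j}$ is well defined; the same with $f(\alpha_k\beta_1\beta_3\gamma)=v(\alpha_k\beta_1)s_i(\beta_3\gamma)=v(\alpha_k\beta_1\beta_3)s_j(\gamma)=v(\alpha_k)s_k(\beta_1\beta_3\gamma)$ gives that $b_{k,j}$ is well defined, which is item 1. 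Moreover $v(\alpha_i\beta_3)b_{i,j}=v(\alpha_i)\vee v(\alpha_i\beta_3)\le_M f(\alpha_i\beta_3\gamma)=v(\alpha_i\beta_3)s_j(\gamma)$, so $b_{i,j}\le_M s_j(\gamma)$ by the LC-axiom, and likewise $b_{k,j}\le_M s_j(\gamma)$; hence $s_j(\gamma)\in up(\{b_{i,j},b_{k,j}\})$, proving item 2.

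Item 3 is the crux. By item 2 and the RMGE-axiom, $b_{i,j}\vee b_{k,j}$ exists and $up(\{b_{i,j},b_{k,j}\})=(b_{i,j}\vee b_{k,j})M$, so it suffices to prove the claim for $b=b_{i,j}\vee b_{k,j}$. The hypothesis $\alpha_i\beta_2\equiv_f\alpha_i$ says that $\beta_2$ labels a loop on the class of $\alpha_i$; since $\alpha_i,\alpha_i\beta_2\in A_i$, monotonicity gives $v(\alpha_i)\le_M v(\alpha_i\beta_2)$, and I set $p=\frac{v(\alpha_i\beta_2)}{v(\alpha_i)}$. Cancelling on the left in $v(\alpha_i)s_i(\beta_2 w)=f(\alpha_i\beta_2 w)=v(\alpha_i\beta_2)s_i(w)$ yields $s_i(\beta_2 w)=p\,s_i(w)$ for $w\in\alpha_i^{-1}\Dom(f)$; running the same computation for $\alpha_k\beta_1,\alpha_k\beta_1\beta_2\in A_i$ and then using the RC-axiom forces $v(\alpha_k\beta_1\beta_2)=v(\alpha_k\beta_1)p$. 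Iterating, $f(\alpha_k\beta_1\beta_2^{m}\beta_3\gamma)=v(\alpha_k\beta_1)p^{m}s_i(\beta_3\gamma)$ for all $m\ge 0$, whence $v(\alpha_k)\le_M v(\alpha_k\beta_1)p^{m}s_i(\beta_3\gamma)$ for all $m$ via the $C_k$-witness. To convert this right-hand side into $v(\alpha_k\beta)b$ I would invoke Lemma~\ref{lemma:mge_partition}: the pairs $\tuple{v(\alpha_i),v(\alpha_i\beta_3)}$ and $\tuple{v(\alpha_k\beta_1),v(\alpha_k\beta_1\beta_3)}$ share the equaliser $\tuple{s_i(\beta_3\gamma),s_j(\gamma)}$, hence have the same most general equalisers, so by Lemmata~\ref{lemma:mge} and~\ref{lemma:mge_partition} the factor $\frac{v(\alpha_k\beta_1)\vee v(\alpha_k\beta_1\beta_3)}{v(\alpha_k\beta_1\beta_3)}$ is $\sim_M b_{i,j}$, giving $v(\alpha_k\beta_1)\le_M v(\alpha_k\beta_1\beta_3)b_{i,j}$; similarly $\tuple{v(\alpha_i),v(\alpha_i\beta_2\beta_3)}$ and $\tuple{v(\alpha_k\beta_1),v(\alpha_k\beta)}$ share $\tuple{s_i(\beta_2\beta_3\gamma),s_j(\gamma)}$. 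Feeding the sequence $\{s_i(\beta_2^{m}\beta_3\gamma)\}_{m}$ into the WLP-axiom and then discharging the surviving left factor with the GCLF-axiom --- exactly the two-step manoeuvre of the second half of Lemma~\ref{lemma:finite_uniformisation}, now combined with $v(\alpha_k)\le_M v(\alpha_k\beta_1\beta_3)b_{k,j}$ (immediate from the definition of $b_{k,j}$) --- then yields $v(\alpha_k)\le_M v(\alpha_k\beta)b$. I expect this final step to be the genuine obstacle: the loop $\beta_2$ sits \emph{between} $\beta_1$ and $\beta_3$, so $v(\alpha_k\beta_1\beta_2)$ and $v(\alpha_k\beta)=v(\alpha_k\beta_1\beta_2\beta_3)$ lie in two different classes and were produced by two independent applications of Lemma~\ref{lemma:finite_uniformisation}; monotonicity of $v$ does not connect them, and nothing cheaper than the WLP- and GCLF-axioms bridges that gap. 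Getting the quantifier over $b$ right, and seeing why it is $b_{i,j}$ (the loop-free correction from the representative of the middle class) rather than $b_{k,j}$ that must enter the final estimate, is the part that will need the most care.
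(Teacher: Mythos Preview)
Your handling of items~1 and~2 is correct and essentially matches the paper. For item~3, however, your plan diverges from the paper's argument in a significant way, and the divergence stems from a wrong diagnosis: the WLP-axiom is \emph{not} used in this lemma at all. The paper's proof requires only a single application of the GCLF-axiom together with the mge structure; the WLP-axiom was already fully expended in Lemma~\ref{lemma:finite_uniformisation} to obtain the monotonicity of $v$ within each $A_l$, and it does not reappear here.

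The ingredient you are missing is the \emph{three-way} supremum
\[
V=v(\alpha_k)\vee v(\alpha_k\beta_1)\vee v(\alpha_k\beta_1\beta_3),
\]
with residues $b_i=\frac{V}{v(\alpha_k\beta_1)}$ and $b_j=\frac{V}{v(\alpha_k\beta_1\beta_3)}$. One checks directly that $b_j\in\sup\{b_{i,j},b_{k,j}\}$, and Lemma~\ref{lemma:mge_equivalent} applied to $v(\alpha_k\beta_1)s_i(\beta_3\gamma)=v(\alpha_k\beta_1\beta_3)s_j(\gamma)$ yields the transport identity $\frac{s_i(\beta_3\gamma)}{b_i}=\frac{s_j(\gamma)}{b_j}$. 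This is what moves you from the $i$-side to the $j$-side without any infinite iteration. The GCLF step is then local: set $b_i'=\frac{v(\alpha_k)\vee v(\alpha_k\beta_1)}{v(\alpha_k\beta_1)}$ (so $b_i'\le_M b_i$); with your $p$ one has $b_i'\le_M s_i(\beta_3\gamma)$ and $b_i'\le_M p\,s_i(\beta_3\gamma)$ from the two witness expansions of $f(\alpha_k\beta_1\beta_3\gamma)$ and $f(\alpha_k\beta\gamma)$, whence GCLF gives $b_i'\le_M p\,b_i'\le_M p\,b_i$. Thus $v(\alpha_k)\le_M v(\alpha_k\beta_1)b_i'\le_M v(\alpha_k\beta_1)p\,b_i$, and the transport identity together with the RC-axiom converts $v(\alpha_k\beta_1)p\,b_i$ into $v(\alpha_k\beta)b_j$, finishing the proof.

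Your sketch, by contrast, leaves the decisive step as ``then yields $v(\alpha_k)\le_M v(\alpha_k\beta)b$'' with the mechanism unspecified. The recurrence you set up, $a_{m+1}=p\,a_m$ with $a_m=s_i(\beta_2^m\beta_3\gamma)$, is $e\cdot a_{m+1}=p\cdot a_m$; its WLP witness only tells you $x\le_M p\,x$ for some $x\le_M a_m$, which by itself does not bound $v(\alpha_k)$ against $v(\alpha_k\beta)b$. One could perhaps push a WLP-based argument through with more work, but the three-way join and the single GCLF application are both simpler and sufficient, so your assertion that ``nothing cheaper than the WLP- and GCLF-axioms bridges that gap'' is not correct.
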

\begin{proof}
Since $\alpha_j^{-1}\Dom(f)\neq \emptyset$ and $|\beta|\le n$, $|\alpha_k|<n$, and $|\alpha_i|<n$ we have that $\beta_3\in B_{i,j}$
and $\beta_1\beta_3\in B_{k,j}$. Therefore, by the definition of $E_j$ we have that $b_{i,j}$ and $b_{k,j}$ are defined and belong to $E_j$. Since $M_j\in \sup E_j$ is an upper bound for all the elements in $E_j$, we conclude that it is also an upper bound for $\{b_{i,j},b_{k,j}\}$. By RMGE-axiom, we have that $\sup(\{b_{i,j},b_{k,j}\})\neq \emptyset$.

To prove the third part of the lemma, we fix an element $\gamma\in \alpha_j^{-1}\Dom(f)$. By the above discussion, we have that 
$\alpha_i\beta_3\in A_j$, $\alpha_k\beta_1\in A_i$, and $\alpha_k\beta_1\beta_3\in A_j$. Putting these together, we get:
\begin{eqnarray*}
f(\alpha_i\beta_3\gamma)  & =  v(\alpha_i) s_i(\beta_3\gamma) \quad f(\alpha_k\beta_1\beta_3\gamma)  = & v(\alpha_k\beta_1) s_i(\beta_3\gamma)\\
f(\alpha_i\beta_3\gamma) & = v(\alpha_i\beta_3) s_j(\gamma) \quad f(\alpha_k\beta_1\beta_3\gamma)  = & v(\alpha_k\beta_1\beta_3) s_j(\gamma).
\end{eqnarray*}
This shows that $\tuple{v(\alpha_i),v(\alpha_i\beta_3)}$ and $\tuple{v(\alpha_k\beta_1),v(\alpha_k\beta_1\beta_3)}$ have a common equaliser, $\tuple{s_i(\beta_3\gamma),s_j(\gamma)}$. Therefore, by Lemma~\ref{lemma:mge_partition}, $\tuple{v(\alpha_i),v(\alpha_i\beta_3)}$ and $\tuple{v(\alpha_k\beta_1),v(\alpha_k\beta_1\beta_3)}$ have the same set of equalisers. This implies that the mge of $\tuple{v(\alpha_i),v(\alpha_i\beta_3)}$ is also an mge for  $\tuple{v(\alpha_k\beta_1),v(\alpha_k\beta_1\beta_3)}$. Consequently,
\begin{equation*}
 b_{i,j} = \frac{v(\alpha_i)\vee v(\alpha_i\beta_3)}{v(\alpha_i\beta_3)} \sim_M \frac{v(\alpha_k\beta_1)\vee v(\alpha_k\beta_1\beta_3)}{v(\alpha_k\beta_1\beta_3)}.
\end{equation*}
This shows that:
\begin{equation*}
v(\alpha_k\beta_1\beta_3) \circ b_{i,j} \sim_M v(\alpha_k\beta_1)\vee v(\alpha_k\beta_1\beta_3).
\end{equation*}

On the other hand, by the definition of $b_{k,j}$, we have:
\begin{equation*}
v(\alpha_k\beta_1\beta_3)\circ b_{k,j} = v(\alpha_k)\vee v(\alpha_k\beta_1\beta_3).
\end{equation*} 
Again, considering $\alpha_k\in A_k$, $\alpha_k\beta_1\in A_i$, and $\alpha_k\beta_1\beta_3\in A_j$ we have; 
\begin{eqnarray*}
f(\alpha_k\beta_1\beta_3\gamma) & = v(\alpha_k) s_k(\beta_1\beta_3\gamma)  \\
f(\alpha_k\beta_1\beta_3\gamma) & =  v(\alpha_k\beta_1) s_i(\beta_3\gamma) \\
f(\alpha_k\beta_1\beta_3\gamma) & = v(\alpha_k\beta_1\beta_3) s_j(\gamma).
\end{eqnarray*}
This shows that $\{v(\alpha_k),v(\alpha_k\beta_1),v(\alpha_k\beta_1\beta_3)\}$ has an upper bound and therefore:
\begin{eqnarray*}
b_j &=& \frac{v(\alpha_k)\vee v(\alpha_k\beta_1)\vee v(\alpha_k\beta_1\beta_3)}{v(\alpha_k\beta_1\beta_3)} \\
b_i &=& \frac{v(\alpha_k)\vee v(\alpha_k\beta_1)\vee v(\alpha_k\beta_1\beta_3)}{v(\alpha_k\beta_1)}
\end{eqnarray*}
are well-defined.
Now, since $v(\alpha_k\beta_1\beta_3) b_j=v(\alpha_k)\vee v(\alpha_k\beta_1)\vee v(\alpha_k\beta_1\beta_3)\sim_{M}v(\alpha_k\beta_1\beta_3)(b_{i,j}\vee b_{k,j})$, it is clear that $b_j\in \sup\{b_{i,j},b_{k,j}\}$. Furthermore, since $v(\alpha_k\beta_1\beta_3)b_j= v(\alpha_k\beta_1)b_i$, by Lemma~\ref{lemma:mge_equivalent} we get that:
\begin{equation*}
\frac{s_i(\beta_3\gamma)}{b_i} = \frac{s_j(\gamma)}{b_j}.
\end{equation*}
Let us denote with $b_i',b_k'$:
\begin{eqnarray*}
b_k' & = & \frac{v(\alpha_k)\vee v(\alpha_k\beta_1)}{v(\alpha_k)}\\
b_i' & = & \frac{v(\alpha_k)\vee v(\alpha_k\beta_1)}{v(\alpha_k\beta_1)}.
\end{eqnarray*}
In particular, $b_i'\le_M b_i$.
Finally, since $\alpha_k\beta_1\equiv_f \alpha_k\beta_1\beta_2$ and both $\alpha_k\beta_1,\alpha_k\beta_1\beta_2\in A_i$, we have that $v(\alpha_k\beta_1\beta_2)=v(\alpha_k\beta_1) x$ for some $x\in M$ and therefore $s_i(\beta_2\beta_3\gamma)=xs_i(\beta_3\gamma)$. Hence:
\begin{eqnarray*}
f(\alpha_k\beta\gamma) &= & v(\alpha_k) s_k(\beta\gamma) \\
f(\alpha_k\beta\gamma) &= & v(\alpha_k\beta_1) s_i(\beta_2\beta_3\gamma) = v(\alpha_k\beta_1) x s_i(\beta_3\gamma) \\
 f(\alpha_k\beta\gamma) &= &v(\alpha_k\beta) s_j(\gamma).
\end{eqnarray*}
By the first and second equalities we get that $b_i'\le_M xs_i(\beta_3\gamma)$. Since we also have $b_i'\le_M s_i(\beta_3\gamma)$, by the GCLF-axiom we deduce that $b_i'\le_M x b_i'$. Finally, by $b_i'\le_M b_i$ it follows that $b_i'\le_M x b_i$. Now we conclude the proof by:
\begin{equation*}
v(\alpha_k) \le_M v(\alpha_k\beta_1) b_i'\le_M v(\alpha_k\beta_1) x b_i.
\end{equation*}
However, we have that:
\begin{equation*}
v(\alpha_k\beta_1) x b_i \frac{s_i(\beta_3\gamma)}{b_i}=v(\alpha_k\beta_1) x s_i(\beta_3\gamma)=v(\alpha_k\beta) s_j(\gamma)=v(\alpha_k\beta)b_j \frac{s_j(\gamma)}{b_j}.
\end{equation*}
Since $\frac{s_i(\beta_3\gamma)}{b_i}=\frac{s_j(\gamma)}{b_j}$ we deduce that:
\begin{equation*}
v(\alpha_k\beta_1) x b_i =v(\alpha_k\beta) b_j
\end{equation*}
and therefore $v(\alpha_k)\le_M v(\alpha_k\beta) b_j$ which proves that $\frac{v(\alpha_k)\vee v(\alpha_k\beta)}{v(\alpha_k\beta)}\le_M b_j$. Since $b_j\in \sup(\{b_{i,j},b_{k,j}\})$, the result follows. 
\end{proof}

\begin{corollary}
Let $B_{i,j}'$ and $E_j'$ be defined as:
\begin{eqnarray*}
B_{i,j}' & = & \{\beta \,|\, |\beta|<n \text{ and } \alpha_i\beta\in A_j\} \\
E_j' & = & \left\{\frac{v(\alpha_i)\vee v(\alpha_i\beta)}{v(\alpha_i\beta)} \,|\, \beta\in B_{i,j}'\right\}.
\end{eqnarray*}
If $\alpha^{-1}_j\Dom(f)\neq\emptyset$, then $M_j\in \sup E_j'$.
\end{corollary}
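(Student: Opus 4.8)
The plan is to reduce the claim to an inclusion of sets of upper bounds. Since $B'_{i,j}\subseteq B_{i,j}$ we have $E'_j\subseteq E_j$, hence $up(E_j)\subseteq up(E'_j)$; in particular $M_j\in\sup E_j\subseteq up(E'_j)$, so $M_j$ is already an upper bound for $E'_j$. It therefore remains only to show that $M_j\le_M m$ for every $m\in up(E'_j)$, and for this it is enough to establish the reverse inclusion $up(E'_j)\subseteq up(E_j)$: granted that, any $m\in up(E'_j)$ lies in $up(E_j)$, whence $M_j\le_M m$ because $M_j\in\sup E_j$, and combined with $M_j\in up(E'_j)$ this gives $M_j\in\sup E'_j$.

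So fix $m\in up(E'_j)$. A typical element of $E_j$ has the form $\frac{v(\alpha_k)\vee v(\alpha_k\beta)}{v(\alpha_k\beta)}$ for some $k\le n$ and some $\beta$ with $\alpha_k\beta\in A_j$, so I would prove, by strong induction on $|\beta|$, that $\alpha_k\beta\in A_j$ implies $\frac{v(\alpha_k)\vee v(\alpha_k\beta)}{v(\alpha_k\beta)}\le_M m$. The base case $|\beta|<n$ is immediate: then $\beta\in B'_{k,j}$, so this element lies in $E'_j$ and is $\le_M m$ by the choice of $m$.

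For the inductive step $|\beta|\ge n$ the idea is to cut one cycle out of $\beta$ and invoke Lemma~\ref{lemma:cycle_free}. I would apply the pigeonhole principle to the $n+1$ classes $[\alpha_k\beta']_{\equiv_f}$, where $\beta'$ runs over the prefixes of $\beta$ of lengths $|\beta|-n,\dots,|\beta|$; this yields indices $|\beta|-n\le p<q\le|\beta|$ and a factorisation $\beta=\beta_1\beta_2\beta_3$ with $|\beta_1|=p$, $|\beta_1\beta_2|=q$, so $\beta_2\neq\varepsilon$, $\alpha_k\beta_1\equiv_f\alpha_i$, $\alpha_i\beta_2\equiv_f\alpha_i$ and $\alpha_i\beta_3\equiv_f\alpha_j$, where $\alpha_i$ is the representative of the repeated class. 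The window $[|\beta|-n,|\beta|]$ is chosen precisely so that $|\beta_3|=|\beta|-q\le n-1$ while $q\le|\beta|$; together with $|\alpha_k|+|\beta|\le 2n-1$ and $|\alpha_i|\le n-1$ this forces $\alpha_k\beta_1,\alpha_k\beta_1\beta_2\in A_i$ and $\alpha_i\beta_3,\alpha_k\beta_1\beta_3\in A_j$. These memberships are exactly what the proof of Lemma~\ref{lemma:cycle_free} extracts from its length hypothesis, so that proof applies here as well and gives that $b_{i,j}=\frac{v(\alpha_i)\vee v(\alpha_i\beta_3)}{v(\alpha_i\beta_3)}$ and $b_{k,j}=\frac{v(\alpha_k)\vee v(\alpha_k\beta_1\beta_3)}{v(\alpha_k\beta_1\beta_3)}$ are well defined, that $up(\{b_{i,j},b_{k,j}\})\neq\emptyset$, and that $v(\alpha_k)\le_M v(\alpha_k\beta)b$ for every $b\in up(\{b_{i,j},b_{k,j}\})$ — equivalently, by the LC-axiom, $\frac{v(\alpha_k)\vee v(\alpha_k\beta)}{v(\alpha_k\beta)}\le_M b$. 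Since $\beta_2\neq\varepsilon$ gives $|\beta_3|,|\beta_1\beta_3|<|\beta|$ and both $\alpha_i\beta_3,\alpha_k\beta_1\beta_3\in A_j$, the induction hypothesis applies to $(\alpha_i,\beta_3)$ and to $(\alpha_k,\beta_1\beta_3)$, yielding $b_{i,j}\le_M m$ and $b_{k,j}\le_M m$, i.e. $m\in up(\{b_{i,j},b_{k,j}\})$; instantiating $b=m$ closes the induction, and hence $up(E'_j)\subseteq up(E_j)$ as required.

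The main obstacle — and the reason for cutting the cycle near the right end of $\beta$ rather than among its first $n$ symbols — is the length bookkeeping: $E_j$ (through $A_j$) involves words $\beta$ as long as $2n-1-|\alpha_k|$, whereas $E'_j$ and Lemma~\ref{lemma:cycle_free} only see words of length below $n$, and the induction must pare cycles off one at a time while keeping every auxiliary word $\alpha_k\beta_1$, $\alpha_k\beta_1\beta_2$, $\alpha_i\beta_3$, $\alpha_k\beta_1\beta_3$ inside the length-$(2n-1)$ window so that all the values $v(\cdot)$ remain defined. The window $[|\beta|-n,|\beta|]$ makes $\beta_3$ short and, using $|\alpha_k|+|\beta|\le 2n-1$, keeps $\beta_1$ and $\beta_1\beta_2$ short enough; everything then lands in the right $A_l$ and Lemma~\ref{lemma:cycle_free} applies. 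If one prefers a cleaner bookkeeping, one can instead first restate Lemma~\ref{lemma:cycle_free} with its hypothesis $|\beta|\le n$ replaced by the four membership conditions it is only used to imply, and then quote that restated lemma verbatim.
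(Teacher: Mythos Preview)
Your proof is correct and follows the same strategy as the paper's: show that every upper bound of $E_j'$ is already an upper bound of $E_j$ by (strong) induction on $|\beta|$, cutting one cycle from $\beta$ in the inductive step and invoking Lemma~\ref{lemma:cycle_free}. The paper phrases this as a minimal-counterexample argument with the specific element $M_j'\in\sup E_j'$ in place of your arbitrary $m\in up(E_j')$, but the two are equivalent.

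Where you differ is precisely the point you flag: the paper simply says one can decompose $\beta=\beta_1\beta_2\beta_3$ with $\alpha_k\beta_1\equiv_f\alpha_k\beta_1\beta_2$ and $|\beta_2|\ge 1$, and then asserts $\beta_3\in B_{i,j}$ without verifying $|\alpha_i\beta_3|\le 2n-1$; it also applies Lemma~\ref{lemma:cycle_free} even though its stated hypothesis is $|\beta|\le n$ while here $|\beta|\ge n$. Your choice to pigeonhole among the last $n+1$ prefixes forces $|\beta_3|\le n-1$ and, together with $|\alpha_k\beta|\le 2n-1$, makes all four memberships $\alpha_k\beta_1,\alpha_k\beta_1\beta_2\in A_i$ and $\alpha_i\beta_3,\alpha_k\beta_1\beta_3\in A_j$ hold, which is exactly what the proof of Lemma~\ref{lemma:cycle_free} actually uses. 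So your version is not just a restatement but a genuine tightening of the bookkeeping that the paper leaves implicit; your suggested alternative of restating Lemma~\ref{lemma:cycle_free} with membership hypotheses in place of the length bound would achieve the same effect.
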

\begin{proof}
It is clear that $B_{i,j}'\subseteq B_{i,j}$, therefore $E_j'\subseteq E_j$ and, in particular, is finite. 
Therefore $M_j$ is an upper bound for it.
Since ${\cal M}$ is an mge-monoid, the set $E_j'$ has a least upper bound, say $M_j'$. Hence $M_j'\le_M M_j$.
To establish that $M_j\le_M M_j'$ it suffices to prove that $M_j'\in up(E_j)$. We prove that for any $\beta\in B_{i,j}\setminus B_{i,j}'$:
\begin{equation*}
\frac{v(\alpha_i)\vee v(\alpha_i\beta)}{v(\alpha_i\beta)}\le_M M_j'.
\end{equation*} 
For the sake of contradiction, assume that this is not the case and let $\beta$ be of least length such that there exist $k,j\le n$ with the properties:
\begin{enumerate}
\item $\beta\in B_{k,j}$ and
\item 
$\frac{v(\alpha_k)\vee v(\alpha_k\beta)}{v(\alpha_k\beta)}\not \le_M M_j'$.
\end{enumerate}
In particular $\beta\not \in B_{k,j}'$ and thus $|\beta|\ge n$. Hence by the right invariance of $\equiv_f$ we can decompose $\beta=\beta_1\beta_2\beta_3$ such that $\alpha_k\beta_1\equiv_f \alpha_k\beta_1\beta_2$ and $|\beta_2|\ge 1$.  Let $\alpha_i\equiv_f \alpha_k\beta_1$. Now, $|\beta_1\beta_3|< |\beta|$ and $|\beta_3|<|\beta|$. Hence, $\beta_3\in B_{i,j}$ and $\beta_1\beta_3\in B_{k,j}$. By the minimality of $\beta$ we further get that:
\begin{eqnarray*}
b_{k,j}&=&\frac{v(\alpha_k)\vee v(\alpha_k\beta_1\beta_3)}{v(\alpha_k\beta_1\beta_3)} \le_M M_j'\\
b_{i,j} &=& \frac{v(\alpha_i)\vee v(\alpha_i\beta_3)}{v(\alpha_i\beta_3)}\le_M M_j'.
\end{eqnarray*}
Hence $M_j'$ is an upper bound for $b_{k,j}$ and $b_{i,j}$ and by Lemma~\ref{lemma:cycle_free} we conclude that:
\begin{equation*}
\frac{v(\alpha_k)\vee v(\alpha_k\beta)}{v(\alpha_k\beta)}\le_M M_j'
\end{equation*}
contrary to our assumption. Therefore $M_j'$ is an upper bound for $E_j$ and hence $M_j\le_M M'_j$. Summing up we get that
$M_j\sim_M M_j'$ and since $M_j'\in \sup E'_j$ it follows that $M_j\in \sup E_j'$. 
\end{proof}
Now we are ready to prove Lemma~\ref{lemma:cancellation}:
\begin{proof}[of Lemma~\ref{lemma:cancellation}]
First we show that $v(\alpha_i a)M_j\in up(v(\alpha_i)\circ E_i')$. Let $\gamma\in \alpha_j^{-1}\Dom(f)$ be fixed. Let $b_i\in E_i'$. Hence there is $\beta\in B_{k,i}'$ such that:
\begin{equation*}
b_i = \frac{v(\alpha_k)\vee v(\alpha_k\beta)}{v(\alpha_k\beta)}.
\end{equation*}
Since $|\beta|<n$ and $\alpha_i a\equiv_f \alpha_j$, it follows that $|\beta a|\le n$ and therefore $\beta a\in B_{k,j}$. Hence:
\begin{equation*}
b_j =\frac{v(\alpha_k)\vee v(\alpha_k\beta a)}{v(\alpha_k\beta a)} \in E_j.
\end{equation*}
Furthermore, we have that $\alpha_i a\equiv_f \alpha_j$ and $|a|=1\le n$, hence:
\begin{equation*}
c = \frac{v(\alpha_i) \vee v(\alpha_i a)}{v(\alpha_i a)}\in E_j.
\end{equation*}
Now, we have that:
\begin{eqnarray*}
f(\alpha_i a \gamma) & = v(\alpha_i ) s_i(a\gamma) \quad f(\alpha_k\beta a \gamma) & = v(\alpha_k\beta ) s_i(a\gamma)\\
f(\alpha_i a \gamma) & = v(\alpha_i a) s_j(a\gamma) \quad f(\alpha_k\beta a \gamma) & = v(\alpha_k\beta a) s_j(\gamma).
\end{eqnarray*}
This shows that $\tuple{v(\alpha_i),v(\alpha_i a)}$ and $\tuple{v(\alpha_k\beta),v(\alpha_k\beta a)}$ have a common equaliser, $\tuple{s_i(a\gamma),s_j(\gamma)}$. Consequently, by Lemma~\ref{lemma:mge_partition}, we get that the set of their equalisers are the same and therefore:
\begin{equation*}
c \sim_M \frac{v(\alpha_k\beta) \vee v(\alpha_k\beta a)}{v(\alpha_k\beta a)}.
\end{equation*}
Now, by the definition of $M_j$ we have that $c\le_M M_j$ and $b_j\le_M M_j$. Therefore:
\begin{equation*}
v(\alpha_k\beta a) \vee v(\alpha_k\beta) \vee v(\alpha_k)\sim_M v(\alpha_k\beta a)(c\vee b_j) \le_M v(\alpha_k\beta a) M_j. 
\end{equation*}
We conclude that $v(\alpha_k\beta) b_i \le_M v(\alpha_k\beta a) M_j$. Again, since the set of equalisers of $\tuple{v(\alpha_k\beta),v(\alpha_k\beta a)}$ and $\tuple{v(\alpha_i),v(\alpha_i a)}$ coincide, we deduce that:
\begin{equation*}
v(\alpha_i) b_i\sim_M v(\alpha_i) \vee v(\alpha_i a) \le_M v(\alpha_i a) M_j,
\end{equation*}
where the last inequality follows by the fact that $\frac{v(\alpha_i) \vee v(\alpha_i a)}{v(\alpha_i a)}\in E_j$ and $M_j\in \sup E_j$.
Hence by the LC-axiom, $b_i\le_M \frac{v(\alpha_i a) M_j}{v(\alpha_i)}$ for any $b_i\in E_i'$. Since $M_i\in \sup E_i'$, this implies
$M_i \le_M \frac{v(\alpha_i a) M_j}{v(\alpha_i)}$ and multiplying by $v(\alpha_i)$ on left hand side we obtain:
\begin{equation*}
v(\alpha_i) M_i\le_M v(\alpha_i)\frac{v(\alpha_i a) M_j}{v(\alpha_i)}=v(\alpha_i a) M_j
\end{equation*}
as required.
\end{proof}

\section{On the Necessity of WLP-axiom}\label{Example}
As we already mentioned, we do not know whether the WLP-axiom is necessary for the validity of the Theorem~\ref{th:Myhill-Nerode}.
However a non-uniform version of this axiom is always required if the monoid ${\cal M}$ is an mge and satisfies LSL- and GCLF-axioms.
When we say a non-uniform version of WLP-axiom we mean the following:
\begin{definition}\label{def:NUWLP}
For a monoid ${\cal M}$ and elements $u,v,x\in M$ and a sequence $\{a_n\}_{n=0}^{\infty}\subseteq M$ we define the predicate
$NUWLP(u,v,\{a_n\},x)$ as:
\begin{equation*}
NUWLP(u,v,\{a_n\},x) \iff [\forall n (ua_{n+1}=va_n)] \Rightarrow [ux\le_M vx \& \forall n (x\le_M a_n)].
\end{equation*}
We say that a monoid ${\cal M}$ satisfies the Non-Uniform Weak Limit Prefix Axiom (NUWLP-axiom) if:
\begin{equation*}
\forall u,v \forall \{a_n\}_{n=0}^{\infty} \exists x (NUWLP(u,v,\{a_n\},x)).
\end{equation*}
\end{definition}
\begin{remark}
Recall that the predicate $WLP(u,v,x)$ was defined as:
\begin{equation*}
WLP(u,v,x) \iff \forall \{a_n\}_{n=0}^{\infty} [\forall n (ua_{n+1}=va_n)] \Rightarrow [ux\le_M vx \& \forall n (x\le_M a_n)].
\end{equation*}
Thus, we can express $WLP(u,v,x)$ as $\forall \{a_n\}_{n=0}^{\infty}( NUWLP(u,v,\{a_n\},x))$. Consequently, we can
rewrite the definition of a WLP-axiom for a monoiid ${\cal M}$ as:
\begin{equation*}
\forall u,v \exists x\forall \{a_n\}_{n=0}^{\infty}  (NUWLP(u,v,\{a_n\},x)).
\end{equation*}
Comparing this formula with the definition of NUWLP-axiom:
\begin{equation*}
\forall u,v \forall \{a_n\}_{n=0}^{\infty} \exists x (NUWLP(u,v,\{a_n\},x)),
\end{equation*}
we see that the only difference is that in the WLP-axiom the witness $x$ depends only $u$ and $v$ but is uniform
for all the sequences $\{a_n\}$. On the other hand in NUWLP-axiom the witness $x$ depends besides on $u$ and $v$ also
on the specific sequence $\{a_n\}$. This explains the term we choose to name this axiom.
\end{remark}

\begin{lemma}\label{lemma:NUWLP-necessity}
Assume that ${\cal M}$ is an mge monoid such that every regular language $L\in Reg({\cal M})$ admits an infimum $\inf L\neq \emptyset$.

If further for every alphabet $\Sigma$ it holds that for every function $f:\Sigma^*\rightarrow {\cal M}$ with $ind(\equiv_f)\in \mathbb{N}$ there
is a subsequential transducer ${\cal T}$ with $f_{\cal T}=f$, then ${\cal M}$ satisfies the NUWLP-axiom.
\end{lemma}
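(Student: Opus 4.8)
The plan is to feed the hypothesis an auxiliary function $f$ of finite index whose mere realisability forces its range to be a regular language, and then read the desired witness off the infimum of that range. Fix $u,v\in M$ and a sequence $\{a_n\}_{n=0}^{\infty}\subseteq M$ with $ua_{n+1}=va_n$ for all $n$; if no sequence satisfies this relation, $NUWLP(u,v,\{a_n\},x)$ holds vacuously for any $x$, so this is the only case to treat. Over the one-letter alphabet $\Sigma=\{a\}$ define the total function $f\colon a^*\to{\cal M}$ by $f(a^n)=ua_n$. The first task is to check that $\equiv_f$ has finite index; in fact it has index $1$. All residual domains $(a^k)^{-1}\Dom f$ equal $a^*$, and the triple $\tuple{u,v,s}$ with $s(a^m)=a_{k+m}$ witnesses $a^k\equiv_f a^{k+1}$, since $\frac{f(a^ka^m)}{u}=\frac{ua_{k+m}}{u}=a_{k+m}$ and $\frac{f(a^{k+1}a^m)}{v}=\frac{ua_{k+m+1}}{v}=\frac{va_{k+m}}{v}=a_{k+m}$; as $\equiv_f$ is an equivalence relation (Lemma~\ref{lemma:NM-equivalence}), all classes $[a^k]$ coincide. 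This single reduction is exactly where the premise $ua_{n+1}=va_n$ is consumed.

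Because $ind(\equiv_f)=1\in\mathbb{N}$, the hypothesis provides a subsequential transducer ${\cal T}$ with $f_{\cal T}=f$. Applying the second-coordinate homomorphism $\Sigma^*\times{\cal M}\to{\cal M}$ to every label of ${\cal T}$ turns it into a finite-state monoidal automaton over ${\cal M}$ whose recognised language is the image of the graph of $f$, i.e.\ $\Codom f=\{ua_n\mid n\ge 0\}$; hence $\Codom f\in Reg({\cal M})$ and, by the remaining assumption on ${\cal M}$, $\inf\Codom f\neq\emptyset$. Fix $\iota^*\in\inf\Codom f$. Since $u\le_M ua_n$ for every $n$, the element $u$ is a lower bound of $\Codom f$, so $u\le_M\iota^*$ and $x:=\frac{\iota^*}{u}$ is well defined; I claim $x$ is the required witness.

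It remains to verify the two clauses of $NUWLP(u,v,\{a_n\},x)$. First, $x\in\inf\{a_n\mid n\ge 0\}$: for each $n$, $\iota^*\le_M ua_n$ gives $x\le_M a_n$ by the LC-axiom, and if $l$ is any lower bound of $\{a_n\}$ then $ul$ is a lower bound of $\Codom f$, so $ul\le_M\iota^*=ux$ and again $l\le_M x$ by left cancellation. By Lemma~\ref{lemma:associativity} it follows that $vx\in v\cdot\inf\{a_n\}=\inf\{va_n\mid n\ge 0\}$. Now $va_n=ua_{n+1}\in\Codom f$ for every $n$, so $\iota^*$ is a lower bound of $\{va_n\mid n\ge 0\}$, whence $\iota^*\le_M vx$; together with $ux=\iota^*$ this gives $ux\le_M vx$, and with $x\le_M a_n$ ($n\ge 0$) this is precisely $NUWLP(u,v,\{a_n\},x)$. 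As $u,v$ and $\{a_n\}$ were arbitrary, ${\cal M}$ satisfies the NUWLP-axiom.

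I expect the two nontrivial points to be the finite-index computation for $f$ (where the relation $ua_{n+1}=va_n$ does all the work) and, more conceptually, the observation that the subsequential transducer promised by the hypothesis is used only through the single corollary ``$\Codom f\in Reg({\cal M})$''; once an infimum $\iota^*$ of $\Codom f$ is in hand, both requirements on $x=\iota^*/u$ drop out of the infimum calculus of Lemma~\ref{lemma:associativity} and of left cancellation, the only real care being needed in keeping the LC-cancellations legitimate and in confirming that no equivalence class of $\equiv_f$ was overlooked.
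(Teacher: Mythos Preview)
Your proof is correct and takes a genuinely shorter route than the paper's. Both arguments start from the same auxiliary total function $f(\sigma^n)=ua_n$ over a one-letter alphabet and observe that $ind(\equiv_f)=1$, so the hypothesis produces a subsequential transducer for $f$. From this point the two diverge: the paper exploits the concrete cycle structure of a deterministic transducer over $\{\sigma\}$, partitions $\{a_n\}$ into finitely many arithmetic subsequences $A_i$ according to the states on the cycle, takes an infimum $x_i$ of each $A_i$ (via the regular languages $L_i$ and Lemma~\ref{lemma:associativity}), and finally combines them by taking $x\in\inf\{x_i\}$. You instead use the transducer only to conclude that $\Codom f=\{ua_n\mid n\ge 0\}$ is regular, by applying the second projection $\Sigma^*\times{\cal M}\to{\cal M}$ to its labels; a single infimum $\iota^*\in\inf\Codom f$ then yields $x=\iota^*/u$, and one application of Lemma~\ref{lemma:associativity} together with left cancellation gives both $x\le_M a_n$ and $ux\le_M vx$. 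Your argument is more conceptual: it isolates the single fact actually needed from the transducer (regularity of the range) and avoids the case analysis on the transducer's state space. The paper's approach, by contrast, never invokes the closure of regular languages under homomorphic image, but pays for this with a more involved bookkeeping of the sets $A_i$ and the chain of inequalities $ux\le_M ux_{i+1}\sim_M vx_i$.
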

\begin{proof}
Let $u,v\in M$ and $\{a_n\}_{n=0}^{\infty}\subseteq M$ be such that:
\begin{equation*}
u a_{n+1}=va_n \text{ for all } n\in \mathbb{N}.
\end{equation*}
Under the assumptions of the lemma, we have to show that there is some $x\in M$ such that $NUWLP(u,v,\{a_n\},x)$. 

To this end let us consider an alphabet $\Sigma=\{\sigma\}$ and the function $f:\Sigma^*\rightarrow M$ defined as:
\begin{equation*}
f(\sigma^n) = u a_n.
\end{equation*}
Let $s:\Sigma^*\rightarrow M$ be $s(\sigma^n)=a_n$. Then it is straightforward that $\tuple{u,v,s}$ is a witness for
$\varepsilon\equiv_f \sigma$. Since ${\cal M}$ is an mge monoid it follows that for $[\varepsilon]_{\equiv_f}=\Sigma^*$.
In particular, $ind(\equiv_f)=1$. By the assumptions of the lemma there is (sub)sequential transducer ${\cal T}$ with
$f_{\cal T}=f$.

Without loss of generality, and since $\Sigma=\{\sigma\}$ is a singleton, we can assume that there are some $j\le k$ such that:
\begin{eqnarray*}
{\cal T} &=& \tuple{\{\sigma\}\times M,Q,q_0,Q,\delta,\lambda,\iota,\Psi}\\
Q  &=& \{q_0,q_1,\dots,q_k\} \\
\delta(q_i,\sigma) &=& \begin{cases}
q_{i+1} \text{ for } i<k\\
q_j \text{ for } i=k.
\end{cases}\\
\lambda(q_i,\sigma)&=&m_i.
\end{eqnarray*}
Let $\Delta=\{\tuple{q_i,m_i,\delta(q_i,\sigma)}\,|\, i\le k\}$. Then for each $i\le k$ we can consider the automaton:
\begin{equation*}
{\cal A}_i = \tuple{{\cal M},Q,q_0,\{q_i\},\Delta,\iota,\Psi_i} \text{ where } \Psi_i(q_i)=\Psi(q_i).
\end{equation*}
Thus, by the Kleene Theorem, we have that $L_i={\cal L}({\cal A}_i)$ is regular and by the assumptions of the lemma
it admits an infimum $y_i \in \inf L_i$. Let $A_i=\{a_{i+l(k-j+1)}\,|\, l\in \mathbb{N}\}$ for $i\le k$. In particular, $A_i$ is not empty.
Then an easy computation shows that:
\begin{eqnarray*}
 L_i &=& u \circ A_i \text{ and }\\
 L_{i+1} &=& v\circ A_i \text{ for } i<k.
\end{eqnarray*}
Since ${\cal M}$ is an mge monoid, Lemma~\ref{lemma:associativity} implies that:
\begin{eqnarray*}
y_i \in \inf L_i &=& \inf (u\circ A_i) = u \inf A_i \\
y_{i+1} \in \inf L_{i+1} &=& \inf (v\circ A_{i}) = v \inf A_i.
\end{eqnarray*}
Since the left hand sides are well-defined, we conclude that $\inf A_i$ is not empty. Let us fix elements $x_i\in \inf A_i$ for $i\le k$. Then, we get:
\begin{eqnarray*}
y_i \sim_M u x_i \text{ for } i\le k\text{ and } 
y_{i+1} \sim_M v x_i \text{ for } i<k.
\end{eqnarray*}
Finally, we note that for $i=k$ we have that:
\begin{eqnarray*}
v\circ A_k&=& \{v a_{k+l(k-j+1)} \,|\, l\in \mathbb{N}\}\\
& =& \{u a_{k+1+l(k-j+1)}\,|\, l\in \mathbb{N}\} \\
& = & \{u a_{j+(l+1)(k-j+1)}\,|\, l\in \mathbb{N}\} \\
&\subseteq &\{u a_{j+l(k-j+1)}\,|\, l\in \mathbb{N}\}\\
&=& u \circ A_j.
\end{eqnarray*}
Since $v x_k \in \inf (v\circ A_k)$ because $x_k =\inf A_k$, and $ux_j$ is an infimum
for $u \circ A_j$ we conclude that $ux_j\le_M vx_k$. Since $\{x_i\,|\, i\le k\}$ is finite, it is also regular, and by the assumptions of the lemma, it admits an infimum $x\in \inf \{x_i\,|\, i\le k\}$. 

Finally, we prove that $NUWLP(u,v,\{a_n\},x)$. First:
\begin{equation*}
ux \le_M ux_{i+1} \sim_M v x_{i} \text{ for } i<k \text{ and } ux\le_M ux_j \le_M vx_k, 
\end{equation*}
we conclude $ux\le_M vx_i$ for each $i\le k$. Therefore $ux \le_M vx$, because $vx\in \inf \{vx_i\,|\, i\le k\}$. Furthermore, since each $a_n\in A_i$ for some $i$, we get that $x_i\le_M a_n$ and by transitivity, we get $x\le_M a_n$.
Therefore $NUWLP(u,v,\{a_n\},x)$.
\end{proof}
\begin{remark}\label{regular_natural}
Note that the only additional assumption in Lemma~\ref{lemma:NUWLP-necessity} is that the regular languages over ${\cal M}$ admit infimums.
On the other hand, to our best knowledge, all the results, up to Theorem~\ref{th:Myhill-Nerode} in Section~\ref{sec:MNR}, characterising the (sub)sequential rational functions in terms of congruence relations rely on this assumption. It is also natural to assume this property, in view of the early normal forms that is desirable.
\end{remark}

\begin{lemma}\label{lemma:NUWLP_WLP}
Assume that ${\cal M}$ is an mge monoid such that every regular language $L\in Reg({\cal M})$ admits an infimum $\inf L\neq \emptyset$.
Assume also that ${\cal M}$ obeys the NUWLP-axiom. 

If for elements $u,v\in M$ the set:
\begin{equation*}
W(u,v) = \{x\in M \,|\, \exists \{a_n\}_{n=0}^{\infty}(NUWLP(u,v,\{a_n\},x))\}
\end{equation*} 
is regular, then there is a witness $x_0\in M$ such that $WLP(u,v,x_0)$.
\end{lemma}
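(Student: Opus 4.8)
The plan is to take $x_0$ to be an infimum of the regular set $W(u,v)$ and then verify the two clauses of $WLP(u,v,x_0)$ separately, exploiting that, by the NUWLP-axiom, $W(u,v)$ already contains a suitable witness for every relevant sequence. First I would dispose of the trivial case: if there is no sequence $\{a_n\}_{n=0}^{\infty}$ with $u a_{n+1}=v a_n$ for all $n$, then $WLP(u,v,x)$ holds vacuously for every $x$, so $x_0=e$ works and regularity of $W(u,v)$ is not even needed. Otherwise, by the NUWLP-axiom every such sequence has an associated witness, and that witness lies in $W(u,v)$; in particular $W(u,v)\neq\emptyset$, so, being regular, it admits an infimum by hypothesis. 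Fix $x_0\in\inf W(u,v)$. It then remains to show that for every sequence with $u a_{n+1}=v a_n$ for all $n$ one has $u x_0\le_M v x_0$ and $x_0\le_M a_n$ for all $n$.

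For the second clause, let $\{a_n\}$ satisfy $u a_{n+1}=v a_n$ for all $n$. The NUWLP-axiom yields some $x\in M$ with $u x\le_M v x$ and $x\le_M a_n$ for all $n$; being an NUWLP-witness for this sequence, $x\in W(u,v)$. Since $x_0$ is a lower bound of $W(u,v)$, transitivity gives $x_0\le_M x\le_M a_n$ for all $n$. This step is routine: it is just monotonicity of the infimum.

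The first clause, $u x_0\le_M v x_0$, is the crux, and I would not try to cancel $x$ out of $u x\le_M v x$ directly — no axiom available here permits that. Instead I would pass to the level of the whole sets $u W(u,v)$ and $v W(u,v)$, which are nonempty. By Lemma~\ref{lemma:associativity}, $u x_0\in u\inf W(u,v)=\inf\bigl(u W(u,v)\bigr)$ and likewise $v x_0\in\inf\bigl(v W(u,v)\bigr)$. Now every $x\in W(u,v)$, being an NUWLP-witness for some sequence with $u a_{n+1}=v a_n$, satisfies $u x\le_M v x$; hence any lower bound $\ell$ of $u W(u,v)$ is also a lower bound of $v W(u,v)$, because $\ell\le_M u x\le_M v x$ for every $x\in W(u,v)$. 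Therefore $u x_0\in\inf\bigl(u W(u,v)\bigr)\subseteq low\bigl(u W(u,v)\bigr)\subseteq low\bigl(v W(u,v)\bigr)$, and since $v x_0$ is an infimum of $v W(u,v)$ it dominates every lower bound of that set, so $u x_0\le_M v x_0$.

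Combining the two clauses, $NUWLP(u,v,\{a_n\},x_0)$ holds for every sequence with $u a_{n+1}=v a_n$ and (vacuously) for all other sequences, which is precisely $WLP(u,v,x_0)$. The only real obstacle is the first clause; everything hinges on the fact that left multiplication commutes with $\inf$ (Lemma~\ref{lemma:associativity}), which is what lets the pointwise inequalities $u x\le_M v x$ be assembled into $u x_0\le_M v x_0$ without invoking the GCLF-axiom or any LP-type axiom.
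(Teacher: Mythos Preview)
Your proposal is correct and follows essentially the same route as the paper: take $x_0\in\inf W(u,v)$, use Lemma~\ref{lemma:associativity} to get $vx_0\in\inf\bigl(vW(u,v)\bigr)$, observe that $ux_0$ is a lower bound of $vW(u,v)$ because $ux_0\le_M ux\le_M vx$ for every $x\in W(u,v)$, and conclude $ux_0\le_M vx_0$; the second clause is handled identically via the NUWLP-axiom. The only cosmetic differences are that you explicitly treat the vacuous case (no sequence with $ua_{n+1}=va_n$), and you invoke Lemma~\ref{lemma:associativity} for both $uW(u,v)$ and $vW(u,v)$ whereas the paper uses it only for $vW(u,v)$ and argues $ux_0\le_M ux$ directly from $x_0\in low\,W(u,v)$ --- but these are the same argument.
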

\begin{proof}
If $W(u,v)$ is regular, then by the assumptions of the lemma there is $x_0\in \inf W(u,v)$. We prove that $WLP(u,v,x_0)$.
Indeed, for each $x\in W(u,v)$ we have that $ux\le_M vx$. By the definition of $x_0$ we have that $ux_0\le_M ux\le_M vx$.
Thus, $ux_0\in low(vW(u,v))$. On the other hand, we have that:
\begin{equation*}
vx_0\in v\inf W(u,v) =\inf v W(u,v).
\end{equation*}
Therefore $ux_0\le_M vx_0$ by the definition of an infimum.

Finally, if $\{a_n\}_{n=0}^{\infty}$ is such that $ua_{n+1}=va_n$ then, by the NUWLP-axiom, there is a witness $x$ such that
$NUWLP(u,v,\{a_n\},x)$. Thus, $x\in W(u,v)$ and therefore $x_0\le_M x$. By $NUWLP(u,v,\{a_n\},x)$ we have that $x\le_M a_n$
for all $n$. Therefore $x_0\le_M x\le_M a_n$. This concludes the proof of the fact that $WLP(u,v,x_0)$.
\end{proof}

\begin{corollary}\label{cor:NUWLP_WLP}
Assume that ${\cal M}$ is an mge monoid such that every regular language $L\in Reg({\cal M})$ admits an infimum $\inf L\neq \emptyset$.
Assume also that ${\cal M}$ obeys the NUWLP-axiom. 

If for any elements $u,v\in M$ the set:
\begin{equation*}
W(u,v) = \{x\in M \,|\, \exists \{a_n\}_{n=0}^{\infty}(NUWLP(u,v,\{a_n\},x))\}
\end{equation*} 
is regular, the monoid ${\cal M}$ satisfies the WLP-axiom.
\end{corollary}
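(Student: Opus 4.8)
The plan is to obtain this as an immediate packaging of Lemma~\ref{lemma:NUWLP_WLP}, turning its pointwise conclusion into the quantified statement that defines the WLP-axiom. First I would fix arbitrary elements $u,v\in M$. By the hypothesis of the corollary the set $W(u,v)$ is regular, so all the premises of Lemma~\ref{lemma:NUWLP_WLP} are satisfied for this particular pair: ${\cal M}$ is an mge monoid in which every regular language has a non-empty infimum, ${\cal M}$ obeys the NUWLP-axiom, and $W(u,v)\in Reg({\cal M})$. Applying that lemma produces an element $x_0\in M$ (concretely, any member of $\inf W(u,v)$ will do) for which $WLP(u,v,x_0)$ holds.

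Next I would note that $u$ and $v$ were arbitrary, so we have established $\forall u,v\in M\,\exists x\in M\,(WLP(u,v,x))$, which is literally Definition~\ref{WLP-axiom} of the WLP-axiom. Hence ${\cal M}$ satisfies the WLP-axiom, which is the claim.

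There is essentially no obstacle here; the only point worth a sentence of care is the quantifier alternation. The WLP-axiom requires a witness $x$ that depends on $u$ and $v$ alone and works uniformly against \emph{every} sequence $\{a_n\}$ with $ua_{n+1}=va_n$, whereas NUWLP only gives a witness per sequence. But Lemma~\ref{lemma:NUWLP_WLP} already performs exactly this uniformisation: the $x_0$ it returns is built from $\inf W(u,v)$, hence from $u$ and $v$ only, and is shown there to be $\le_M$ every element of $W(u,v)$ and therefore $\le_M a_n$ for every $n$ of every admissible sequence. So invoking the lemma once per pair $(u,v)$ is all that is needed, and no additional argument is required.
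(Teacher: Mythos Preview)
Your proposal is correct and matches the paper's own proof, which simply states that the corollary is immediate from Lemma~\ref{lemma:NUWLP_WLP} and the definition of the WLP-axiom. You have spelled out exactly this: apply the lemma to an arbitrary pair $u,v$ to obtain $x_0$ with $WLP(u,v,x_0)$, then universally generalise to get Definition~\ref{WLP-axiom}.
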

\begin{proof}
Immediate from the proof of Lemma~\ref{lemma:NUWLP_WLP} and the definition of the WLP-axiom.
\end{proof}
\begin{remark}
In view of Remark~\ref{regular_natural} and the result of Corollary~\ref{cor:NUWLP_WLP} the gap between the NUWLP-axiom and WLP-axiom seems to be not that big after all. We consider it challenging to (dis)prove the existence of an mge monoid, where every regular set admits an infimum, the monoid satisfies the NUWLP-axiom but does not satisfy the WLP-axiom.  
\end{remark}

\section{GCD Monoids and their Relation to MGE Monoids}\label{sec:gcd_mge}
In this section we compare another large class of monoids, the gcd monoids, with the class of monoids that we considered in the previous sections.
The gcd monoids are known to provide a characterisation for (sub)sequential rational functions in terms of congruence relations,~\cite{SouzaMasterThesis}. The basic notion in 
the gcd monoids is the \emph{division} $a | b$, which means that there is an element $c$ s.t. $ac =b$. In our notions this is exactly $a\le_M b$.
The notion of a $gcd(S)$ for a set $S$ coincides with our notion for $\inf S$. With these remarks we can restate the original definition of a gcd monoid,~\cite{SouzaMasterThesis},  as:
\begin{definition}\label{def:gcd_monoid}
A monoid ${\cal M}$ is called a gcd monoid if it satisfies the LC- and RC-axioms and for every $\emptyset\subsetneq S\subseteq M$, $\inf S\neq \emptyset$.
\end{definition}
\begin{remark}\label{rem:gcd_monoid}
In~\cite{SouzaMasterThesis,Sakarovitch09}, it has been shown that every transducer over a gcd ${\cal M}$ can be transformed in an equivalent
onward transducer. Further, for every function $f:\Sigma^*\rightarrow M$ where the monoid ${\cal M}$ is a gcd monoid it has been established that
the following are equivalent:
\begin{enumerate}
\item $ind(\equiv_f)=n$ for some $n\in \mathbb{N}$.
\item $f$ is (sub)sequential rational function.
\item there is a (minimal complete) subsequential transducer with $ind(\equiv_f)$ that represents $f$.
\end{enumerate}
\end{remark}
\begin{lemma}\label{lemma:mge_gcd_reduce}
Every gcd monoid ${\cal M}$ satisfies the RMGE-, LSL- and WLP-axioms. 
\end{lemma}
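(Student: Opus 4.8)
The plan is to verify the three axioms separately, in increasing order of difficulty, using throughout that a gcd monoid already satisfies the LC- and RC-axioms. The LSL-axiom is immediate: for any $a,b\in{\cal M}$ the set $\{a,b\}$ is a nonempty subset of $M$, hence $\inf(\{a,b\})\neq\emptyset$ by the defining property of a gcd monoid.

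For the RMGE-axiom I would argue as follows. Suppose $up(\{a,b\})\neq\emptyset$; then $up(\{a,b\})$ is a nonempty subset of $M$, so it has an infimum, and I pick $c\in\inf(up(\{a,b\}))$. Both $a$ and $b$ lie in $low(up(\{a,b\}))$, since by definition every $s\in up(\{a,b\})$ satisfies $a\le_M s$ and $b\le_M s$. As $c$ is a greatest element of $low(up(\{a,b\}))$ we get $a\le_M c$ and $b\le_M c$, i.e. $c\in up(\{a,b\})$; and since $c$ is itself a lower bound of $up(\{a,b\})$ we have $c\le_M m$ for every $m\in up(\{a,b\})$. These two statements together say exactly that $c\in\sup(\{a,b\})$.

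Having verified RMGE, every gcd monoid is in particular an mge monoid, so Lemma~\ref{lemma:associativity} becomes available, and this is the lever for the WLP-axiom, which is the only real work. Fix $u,v\in M$ and set
\begin{equation*}
B=\{a\in M\mid \exists\{a_m\}_{m=0}^{\infty}(\forall m\,(ua_{m+1}=va_m)\ \text{and}\ \exists n\,(a=a_n))\}.
\end{equation*}
If $B=\emptyset$ there is no sequence $\{a_n\}$ with $ua_{n+1}=va_n$ for all $n$, so $WLP(u,v,e)$ holds vacuously. Otherwise fix $x\in\inf B$. Every term of every sequence $\{a_n\}$ with $ua_{n+1}=va_n$ lies in $B$, hence $x\le_M a_n$ for all such $n$, giving the second conjunct of $WLP(u,v,x)$. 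For the first conjunct, the key observation is that $vB\subseteq uB$: if $b=a_n$ for a sequence $\{a_m\}$ with $ua_{m+1}=va_m$, then the shifted sequence $b_m:=a_{n+m}$ also satisfies $ub_{m+1}=vb_m$, so $b_1=a_{n+1}\in B$ and $vb=va_n=ua_{n+1}=ub_1\in uB$. By Lemma~\ref{lemma:associativity} we then have $ux\in u\inf B=\inf(uB)$ and $vx\in v\inf B=\inf(vB)$. Since $vB\subseteq uB$ forces $low(uB)\subseteq low(vB)$, we obtain $ux\in\inf(uB)\subseteq low(uB)\subseteq low(vB)$; as $vx$ is a greatest lower bound of $vB$, this yields $ux\le_M vx$, completing the proof that $WLP(u,v,x)$ holds.

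The easy parts are routine unwinding of the definitions of $low$, $up$, $\inf$ and $\sup$. The main obstacle is the WLP-axiom: one has to hit upon the right set $B$ — the collection of all terms of all sequences realising the recurrence $ua_{n+1}=va_n$ — and notice the one-step shift that produces $vB\subseteq uB$; after that the closure of gcd monoids under arbitrary infimums together with Lemma~\ref{lemma:associativity} finishes the argument mechanically.
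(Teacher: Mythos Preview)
Your proof is correct and follows essentially the same approach as the paper: the LSL- and RMGE-arguments are identical, and for the WLP-axiom your set $B$ is exactly the paper's $A(u,v)$, with the same shift observation $vB\subseteq uB$ and the same appeal to Lemma~\ref{lemma:associativity} to conclude $ux\le_M vx$.
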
 
\begin{proof}
First we establish that ${\cal M}$ satisfies the RMGE- and LSL-axioms. This is trivial for the LSL-axiom.
Let $a,b\in M$ and consider the set $S=\{a,b\}$. Since ${\cal M}$ is a gcd, $\inf S\neq \emptyset$. Hence, the LSL-axiom is valid.

Next, assume that $S=up(\{a,b\})\neq \emptyset$. Then, again since ${\cal M}$ is a gcd monoid, $\inf S\neq \emptyset$. Let $s\in \inf S$.
Since $a\in low(S)$, we have that $a\le_M s$. Similarly, since $b\in low(S)$, we have that $b\le_M s$. Therefore $s\in up(\{a,b\})$ and
consequently $s\in S$. This proves that $s\in \sup \{a,b\}$. Thus, the RMGE-axiom is valid.

So far we know that every gcd monoid is an mge monoid, since the gcd monoids satisfy the LC- and RC-axioms by definition.

Let $u,v\in M$. Let $A(u,v)$ be the set:
\begin{equation*}
A(u,v) = \bigcup\{\{a_n\}_{n=0}^{\infty} \,|\, \forall n (ua_{n+1}=va_n)\}.
\end{equation*} 
If $A(u,v)=\emptyset$, then any $x\in M$ witnesses for $WLP(u,v,x)$. Let $A(u,v)\neq \emptyset$ and let
$x_0=\inf A(u,v)$. Note that $v A(u,v)\subseteq u A(u,v)$. Indeed, for each $a\in A(u,v)$ we have that $va=ua'$ for some $a'\in A(u,v)$.
Now, since ${\cal M}$ is an mge monoid and $A(u,v)\neq \emptyset$, Lemma~\ref{lemma:associativity} implies that:
\begin{equation*}
u \inf A(u,v) = \inf uA(u,v) \text{ and } v\inf A(u,v) = \inf v A(u,v).
\end{equation*}
Therefore $ux_0\in \inf u A(u,v)$ and $vx_0\in \inf v A(u,v)$. Since $vA(u,v)\subseteq uA(u,v)$ we get $\inf uA(uv)\subseteq low(v A(u,v))$.
Therefore $ux_0\le_M vx_0$.

Finally, if $\{a_n\}_{n=0}^{\infty}$ is an arbitrary sequence such that $ua_{n+1}=v a_n$, we get that $a_n \in A(u,v)$ for each $n$ and consequently
$x_0\le_M a_n$. This proves that $WLP(u,v,x_0)$. Therefore ${\cal M}$ satisfies the WLP-axiom.
\end{proof}
\begin{remark}
As noted in~\cite{SouzaMasterThesis}, the tropical monoid restricted to the rational numbers, $\tuple{\mathbb{Q}^+_0,+,0}$ is not a gcd monoid.
However, it is obviously a sequentiable structure, and thus is an mge monoid with WLP-axiom (and also GCLF-axiom). This shows, that the mge monoids with WLP-axiom non-trivially extend the class of the gcd monoids. 
\end{remark}
\begin{remark}
However it should not be true that every gcd monoid satisfies the GCLF-axiom.
\end{remark}

\section{Conclusion}\label{sec:conclusion}
In this paper we provided a characterisation of (sub)sequential rational functions $f:\Sigma^* \rightarrow {\cal M}$
in terms of the congruence relation $\equiv_f$ for a large class of monoids. There two main issues that are not quite
satisfactory. First, it seems natural to consider monoids ${\cal M}$ where every regular language $L$ admits an infimum
$\inf L$. The GLCF-axiom guarantee this, but it is not necessary satisfied in every gcd monoid. On the other hand the gcd monoids
do not capture natural monoids and what is worse do not provide a constructive way to compute witnesses in $\inf L$ algorithmically.
The question is: Is there a finite set of first order formulae $\Phi$ over the language ${\cal L}=\tuple{e;\circ;=}$ where $e$ is a constant symbol.
$\circ$ is binary functional symbol, $=$ is the formal equality, such that:
\begin{enumerate}
\item gcd monoids model $\Phi$. 
\item for any monoid ${\cal M}$ modelling $\Phi$, every language $L\in Reg({\cal M})$ satisfies $\inf L\neq \emptyset$.
\item and whose constructive versions enable the algorithmic computation of an element in $\inf L$ for regular languages $L$ (given as automata, say).
\end{enumerate}
 
The second question is whether the premise for GCLF-axiom in Theorem~\ref{th:Myhill-Nerode} can be relaxed. Aesthetically, it would be much more 
satisfactory to have an assumption that the regular languages over ${\cal M}$ admit an infimum. Yet, the proof of Theorem~\ref{th:Myhill-Nerode} that 
we provided heavily relies on the GCLF-axiom in order to reduce the problem to finite sets. The main problem here is to gain a better insight in 
the structure of the range of the function $f:\Sigma^*\rightarrow {\cal M}$.  

Even if the answer of the above question might be not ultimate, it is still interesting to investigate the gap between the NUWLP- and WLP-axioms. 
More precisely, Corollary~\ref{cor:NUWLP_WLP} suggests the following question. Is there a monoid  ${\cal M}$ with the following properties:
\begin{enumerate}
\item ${\cal M}$ is an mge monoid,
\item every regular language $L\in Reg({\cal M})$ admits an infimum, i.e. $\inf L\neq \emptyset$.
\item ${\cal M}$ satisfies the NUWLP-axiom,
\item ${\cal M}$ violates the WLP-axiom.
\end{enumerate}
In view of Lemma~\ref{cor:NUWLP_WLP} the construction of such a monoid would be very delicate.
However, if such a monoid exists, the natural question would be, what the right balance between NUWLP- and WLP-axiom is so that we have a characterisation of (sub)sequential rational functions in terms of congruence relations.

\bibliographystyle{splncs03}
\bibliography{bibliography}

\end{document}